\newtheorem{lemma}{Lemma}
\newtheorem{theorem}[lemma]{Theorem}
\newtheorem{corollary}[lemma]{Corollary}
\newtheorem{definition}{Definition}
\newtheorem{proposition}{Proposition}
\newtheorem{claim}{Claim}
\newcommand{\F}{\mathbb{F}}
\newcommand{\Fl}{\mathcal{F}(\ell)}
\newcommand{\IH}{I(\mathbb{H})}
\newcommand{\HM}{\mathbb{H}^{\ell}(\F_{q^2})}
\newcommand{\CH}{C^{\mathbb{H}}(\ell)}
\newcommand{\CA}{C^{\mathbb{A}}(\ell,2\ell)}
\newcommand{\X}{\mathbf{X}}
\newcommand{\Y}{\mathbf{Y}}
\date{}
\title{AFFINE HERMITIAN GRASSMANN CODES}
\author{

  \textbf{Fernando Pi\~nero Gonz\'alez} \\
  Department of Mathematics \\
  University of Puerto Rico in Ponce \\
  Ponce, PR. \\ 
  \texttt{fernando.pinero1@upr.edu} \\ %\thanks{Special thanks to Professor Jo Ellis-Monaghan for devotion to teaching us the beauty of graph polynomials.} \\

  \textbf{Doel Rivera Laboy}%\thanks{Special thanks to Professor Jo Ellis-Monaghan for devotion to teaching us the beauty of graph polynomials.} \\
  \\
  Department of Mathematics\\
  Pontifical Catholic University of Puerto Rico\\
  Ponce, PR \\
  \texttt{driveralaboy@pucpr.edu} \\
  %% examples of more authors
  %% \AND
  %% Coauthor \\
  %% Affiliation \\
  %% Address \\
  %% \texttt{email} \\
  %% \And
  %% Coauthor \\
  %% Affiliation \\
  %% Address \\
  %% \texttt{email} \\
  %% \And
  %% Coauthor \\
  %% Affiliation \\
  %% Address \\
  %% \texttt{email} \\
}
\begin{document}
\maketitle

\begin{abstract}
The Grassmannian is an important object in Algebraic Geometry. One of the many techniques used to study the Grassmannian is to build a vector space from its points in the projective embedding and study the properties of the resulting linear code.

We introduce a new class of linear codes, called Affine Hermitian Grassman Codes. These codes are the linear codes resulting from an affine part of the projection of the Polar Hermitian Grassmann codes. They combine Polar Hermitian Grassmann codes and Affine Grassmann codes. We will determine the parameters of these codes. %and discuss their minimum distance codewords.
\end{abstract}

% keywords can be removed
%\keywords{Coding Theory \and Grassmannian \and Linear Codes \and Grassmann Codes \and Affine Grassmann \and Polar Hermitian Grassmannian}

\section{Introduction}
Let $q$ be a prime power and $\F_q$ denote the finite field of $q$ elements. The Grassmannian, $\mathcal{G}_{\ell, m}$, is the collection of all subspaces of dimension $\ell$ of a vector space $V$ of length $m.$ We take $V=\F_q^m.$ The Grassmannian is a fascinating and well studied geometry with a rich algebraic structure. 

It is well known that the Grassmannian may be embedded into a projective space through the Pl\"ucker embedding. The embedding is performed as follows. For each vector space $W \in \mathcal{G}_{\ell, m}$, a matrix $M_W$ is taken such that the rowspace of $M_W$ is $W$. We then fix some ordering of the $\ell$-minors of an $\ell \times m$ matrix. Then $M_W$ is mapped to the projective point $p_W$, where the i-th position is the value of the i-th $\ell$-minor of $M_W$. In order to study the properties of the Grassmannian we make use of the Grassmann code. The Grassmann code is defined as the linear code generated by taking all of the points $p_W$ as the column vectors of a matrix. In \cite{Nogin}, Nogin studied the parameters of this linear code, denoted $C(\ell, m)$. 

Later, in \cite{BGT} Beelen, Ghorpade and H\o{}holdt introduced the linear code associated to one of the affine maps of the Pl\"ucker embedding of the Grassmannian. These codes are known as affine Grassmann codes, $C^\mathbb{A}(\ell,m)$ for $m \geq 2\ell$. Let $\delta = \ell (m-\ell)$, they proved the parameters of $C^\mathbb{A}(\ell,m)$ are:
$$[q^{\delta}, \binom{m}{\ell},  q^{\delta- \ell^2}\prod\limits_{i=0}^{\ell-1}q^\ell - q^i]$$
In that same paper, they studied the automorphisms of the code and counted the minimum distance codewords. Their duals, $C^\mathbb{A}(\ell,m)^\perp$, and related codes were studied in \cite{BGT2} and \cite{AfGrass}. Moreover, in \cite{AfGrass}, the minimum weight codewords of the dual code were classified and counted.

In \cite{Cardinali_2013}, Cardinali and Giuzzi introduced polar Grassmann codes, linear codes which are closely related to the Grassmann codes previously mentioned. The polar Grassmannian is a subvariety consisting of the subspaces of $V$ isotropic under a given form. Cardinali and Giuzzi determined the parameters of polar Grassmann codes under a symplectic form (polar symplectic Grassmannian)\cite{Symplectic}, under a orthogonal form (polar orthogonal Grassmannian)\cite{cardinali2014line}\cite{Cardinali_2018} and under a Hermitian form (polar Hermitian Grassmannian)\cite{Cardinali} for $\ell = 2$.

Much is known about both affine Grassmann codes and polar Grassmann codes. However, in this work we study the linear code associated to one of the affine maps of the polar Hermitian Grassmannian. We first determine the length, dimension and minimum distance of this code. We also determine some of  its automorphisms and its dual code. Furthermore, we characterize the minimum weight codewords of the dual code.

\section{Preliminaries} In this manuscript, we shall use the following notation. %The letter $q$ will denote a prime power of the prime $p$. We'll denote the finite field of $q$ elements by $\F_{q}$. 
For positive integers $m$ and $n$, an $m\times n$ matrix is a rectangular array of $m$ rows and $n$ columns of entries over $\F_{q}$ or $\F_{q^2}$. Matrices will be denoted by upper case letters such as $A, B, C, M, N$. Generic matrices will be denoted by $X$, $Y$ or $Z$. In this work, we will focust mostly on square matrices over $\F_{q}$ and $\F_{q^2}.$  

\begin{definition}
    Let $M$ be a square matrix. Suppose that $I$ is a subset of the rows of $M$ and $J$ is a subset of the columns of $M$. The minor $det_{I,J}(M)$ is the determinant of
the submatrix of $M$ obtained from the rows $I$ and columns $J$. In the case $I = \{ i\}$ and $J = \{ j \}$ we shall denote by $det_{I,J}(M)$ minor by the $(i,j)-th$ entry $M_{i,j}$.
\end{definition}
Example:
Let $M = \begin{bmatrix}
    1&0&0\\0&1&0\\0&0&1
    \end{bmatrix}$, $I = \{1,2\}$ and $J = \{2,3\}$. Then $$det_{I,J}(M) = \begin{vmatrix}
    0&0\\1&0
    \end{vmatrix} = 0.$$ 
    
As $\F_{q^2}$ is an extension of degree $2$ of the field $\F_q$, for $\alpha \in \F_{q^2}$ we denote the conjugate of $\alpha$, $\alpha^q$, by $\bar{\alpha} = \alpha^q$. Likewise, we may conjugate matrices by conjugating each entry. Therefore we denote the conjugate of $M$, the matrix $[M_{i,j}^q]$ by ${M}^{(q)}$. We recall the following definition:

\begin{definition}
    For a matrix $M \in \mathbb{M}^{n\times n}(\F_{q^2})$, we denote $M^*$ as the conjugate transpose of $M$, that is $M^* = [m_{ji}^q]$.
\end{definition}
Example:\\
Let $M = \begin{bmatrix}
    a&b\\c&d
    \end{bmatrix}$ 
then $M^* = \begin{bmatrix}
    a^q&c^q\\b^q&d^q
    \end{bmatrix}$
%\begin{definition}
 %   For $\alpha \in \F_{q^2}$ we define $\Bar{\alpha} = \alpha^q$. Furthermore for a matrix $M$, $\Bar{M} = [m_{ij}^q]$. Finally for a matrix $M$, we denote $M^*$ as the conjugate transpose of $M$, that is $M^* = [m_{ji}^q]$.
%\end{definition}
\begin{definition}
    A matrix over $\F_{q^2}$ is Hermitian if $M^* = M$.
\end{definition}
Example:\\
Let $q=2$, note $M = \begin{bmatrix}
    1&\alpha&1\\\alpha^2&1&0\\1&0&1
    \end{bmatrix}$ is Hermitian.
In general, a Hermitian matrix $H$ over $\F_{q^2}$ satisfies:
$$H = \begin{cases}
H_{i,j}\in \F_q & i=j\\
H_{i,j}\in \F_{q^2} & i>j\\
H_{i,j} = {H_{j,i}}^q & j<i\\
\end{cases}$$

\subsection{affine Hermitian Grassman code}
%For $\ell \geq 1$, %we define $X = [X_{ij}]$ as the $\ell \times \ell$ matrix of indeterminates over $\mathbb{F}_{q}$. We denote $\mathbb{F}_{q}[X]$ as the polynomial ring over $\mathbb{F}_{q}$ with $\ell^2$ indeterminates. Interchangeably
For $\ell \geq 1$ we define $X = [X_{ij}]$ as the $\ell \times \ell$ matrix of $\ell^2$ indeterminates over $\F_{q^2}$. As we are interested in Hermitian matrices, we shall define $X_{i,i}$ as an element of $\F_q$ (satisfying $X_{i,i}^q = X_{i,i})$. For $i<j$, $X_{i,j}$ is an element of $\F_{q^2}$ (satisfying $X_{i,j}^q = X_{i,j})$. Then $X_{j,i}$ for $i<j$ is the conjugate of its transposed position, that is, $X_{j,i} = X_{i,j}^q.$  In a similar manner, we define $Z = [Z_{i,j}]$. $Z$ is the generic $\ell \times \ell$ matrix of $\ell^2$ indeterminates over $\F_{q}$. %$Y$ is the generic $\ell \times \ell$ matrix of $\ell^2$ indeterminates over $\F_{q^2}$ such that $Z_{i,j}+Z_{i,j}^q = 0$.
%we define $X = [X_{ij}]$ as the $\ell \times \ell$ Hermitian matrix of $\binom{\ell+1}{2}$ indeterminates where $x_{ii} \in \mathbb{F}_{q}[X]$, $x_{ij} \in \mathbb{F}_{q^2}[X]$ and, for $j<i$, $y_{ij} = y_{ji}^q$. \\

%Let $F_q[x]$ be the Polynomial Ring of $\ell^2$ indeterminates over $\F_q$, we define the following:
%\begin{definition}
%$$X = \begin{cases}
%X_{i,j}\in \F_q & i=j\\
%X_{i,j}\in \F_{q^2} & i>j\\
%X_{i,j} = {X_{j,i}}^q & j<i\\
%\end{cases}$$
%$$Y = \begin{cases}
%Y_{i,j}\in \F_q &\\
%\end{cases}$$
%$$Z = \begin{cases}
%Z_{i,j}\in \F_{q^2} & \\
%Z_{i,j} = -{Z_{j,i}}^q & \\
%\end{cases}$$
%\end{definition}

\begin{definition}
    $\Delta(\ell)$ is the set of all minors of the matrix X. That is:
    $$\Delta(\ell) := \{ det_{I,J}(X), I,J \subseteq [\ell], \#I = \#J  \} $$
    
\end{definition}
We remark that $\# \Delta(\ell) = \binom{2\ell}{\ell}$.

Example: 
For $\ell = 2$, let the matrix $\X$ be equal to
  $$\X = \begin{bmatrix}
    X_1&X_2\\X_2^q&X_3
    \end{bmatrix},$$ 
Then $\Delta(\ell) = \{ 1, X_1, X_2, X_3, X_2^q, X_1X_3-X_2^{q+1} \} $.

\begin{definition}
We define $\Fl$ as the subspace of $\mathbb{F}_{q^2}$--linear combinations of elements of $\Delta(\ell)$.

$$\Fl := \{ \sum\limits_{I,J \subseteq [\ell], \# I = \# J} f_{I,J} det_{I,J}(X) | f_{I,J} \in \F_{q^2}   \} $$
\end{definition}
Example:\\
Let $\ell = 2$ then all elements of $\Fl$ are of the form:
$$f_{\emptyset, \emptyset}+f_{\{1\},\{1\}}X_1+ f_{\{1\},\{2\}}X_2+ f_{\{2\},\{2\}}X_3+ f_{\{2\},\{1\}}X_2^q+ f_{\{1,2\},\{1,2\}}(X_1X_3-X_2^{q+1})$$

%\begin{lemma}\cite{BGT}
%    The cardinality of $\Delta(\ell)$ is $\binom{2\ell}{\ell}$.
%\end{lemma}
%\begin{lemma}\cite{BGT}
%    $dim_{\F_{q^2}}(\mathcal{F}(\ell)) = \binom{2\ell}{\ell}$
%\end{lemma}
We remark that the minors $det_{I,J}(X)$ as polynomials on $X_{i,j}$, it is simple to determine that $\dim_ \Delta(\ell) = \binom{2\ell}{\ell}$. Also, in a similar manner to $\Fl$, $\Fl_Z$ may be defined as the subspace of $\mathbb{F}_{q}$--linear combinations of elements of $\Delta(\ell)$ when taking $Z$ as the generic matrix instead of $X$.

%\begin{definition}
%    Given $f = \sum_{M\in \Delta(\ell)} a_{M}M \in \Fl $ where $a_M \in \F_{q^2}$ for every $M \in \Delta(\ell)$, the support of f is the set
 %   \begin{center}
 %       $supp(f) := \{M \in \Delta(\ell) : a_M \neq 0\}$
 %   \end{center}
%\end{definition}
\begin{definition}
    $\mathbb{H}^{\ell}(\F_{q^2})$ denotes the space of all $\ell \times \ell$ Hermitian matrices with entries in $\F_{q^2}$. That is $$\mathbb{H}^{\ell}(\F_{q^2}) = \{H \in \mathbb{M}^{\ell \times \ell}(\mathbb{F}_{q^2})| M = M^* \}.$$
\end{definition}
%\begin{definition}
%    $\mathbb{SH}^{\ell}(\F_{q^2})$ denotes the space of all $\ell \times \ell$ skew-Hermitian matrices with entries in $\F_{q^2}$. That is $$\mathbb{SH}^{\ell}(\F_{q^2}) = \{H \in \mathbb{M}^{\ell \times \ell}(\mathbb{F}_{q^2})| M = -M^* \}.$$
%\end{definition}
Now we define, the evaluation of an element $f \in \Fl$ at any Hermitian matrix $P \in \mathbb{H}^{\ell}(\F_{q^2})$. For any $f \in \mathbb{F}_{q^2}[X]$ and $P \in \mathbb{H}^{\ell}(\F_{q^2})$  the evaluation $f(P) $ is obtained by replacing the variable $X_{i,j}$ by the element $P_{i,j}$. From now on, we shall denote $n$ by $n = q^{\ell^2} = \# \mathbb{H}^{\ell}(\F_{q^2}) $. For the evaluation map we fix an arbitrary enumeration $P_1, P_2, ..., P_n$ of $\mathbb{H}^{\ell}(\F_{q^2})$.

\begin{definition}
    The evaluation map of $\mathbb{F}_{q^2}[X]$ is the map
    \begin{center}
        $ev$ : $\mathbb{F}_{q^2}[X] \rightarrow \F_{q^2}^{n}$ defined by $Ev(f) := (f(P_1),...,f(P_{n}))$.
    \end{center}
\end{definition}

In order to prove the evaluation map is an injective function, we need the following lemma on the elements of $\Fl$.

\begin{definition} 
Denote by $\IH$ the ideal generated by $X_{i,j}^q - X_{j,i}$ and,  $X_{i,j}^{q^2} - X_{i,j}$ where $i<j$ and $X_{i,i}^q = X_{i,i}$ That is $$\IH := \langle X_{i,j}^q - X_{j,i},X_{i,j}^{q^2} - X_{i,j}, X_{i,i}^q-X_{i,i} \rangle.$$ 
\end{definition}
The set of Hermitian matrices $\mathbb{H}^{\ell}(\F_{q^2})$ is precisely the set of solutions to all polynomial equations generating $\IH.$ We shall now prove the following lemma relating $\Fl$ and $\IH$.
\begin{lemma}\label{lem:ideal}
 There are no elements in common between $\Fl$ and $\IH$ except $0.$
 
 That is $$\Fl \cap \IH = \{ 0 \}.$$
 
\end{lemma}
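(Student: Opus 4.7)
My plan is to exploit a sharp per-variable degree bound on elements of $\Fl$ that is strictly smaller than what the relations in $\IH$ impose, thereby forcing the intersection to be trivial. Because every minor of $\X$ is a polynomial in only the upper-triangular variables $\{X_{i,j}: i\le j\}$ (the identification $X_{j,i} = X_{i,j}^q$ is built into $\X$), the substitution homomorphism $\pi: \mathbb{F}_{q^2}[X_{i,j}: 1\le i,j\le \ell]\to S:=\mathbb{F}_{q^2}[X_{i,j}: i\le j]$ sending $X_{j,i} \mapsto X_{i,j}^q$ for $i<j$ is the identity on $\Fl$, and it carries every generator of $\IH$ into the reduced ideal $\mathcal{I}:= \langle X_{i,j}^{q^2}-X_{i,j}: i<j\rangle + \langle X_{i,i}^q-X_{i,i}\rangle$ of $S$. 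Hence it suffices to show $\Fl\cap\mathcal{I}=\{0\}$ inside $S$.

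The heart of the proof is a degree calculation. Expanding $\det_{I,J}(\X)$ as a signed sum over permutations, each monomial is a product of one entry per chosen row. A diagonal variable $X_{i,i}$ appears only from position $(i,i)$, so its degree in any monomial is at most $1$. For $i<j$, the variable $X_{i,j}$ appears from position $(i,j)$ with degree $1$ and from position $(j,i)$ with degree $q$ (since $X_{j,i}=X_{i,j}^q$); both occur in the same monomial precisely when the underlying permutation swaps $i$ and $j$, yielding combined degree $q+1$. Since these are bounds on individual monomials, they are preserved by $\mathbb{F}_{q^2}$-linear combinations, so every $f\in\Fl$ satisfies $\deg_{X_{i,i}}(f)\le 1$ and $\deg_{X_{i,j}}(f)\le q+1$ for $i<j$.

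Finally, the generators of $\mathcal{I}$ have pairwise coprime leading monomials $X_{i,j}^{q^2}$ and $X_{i,i}^q$ (in any monomial order), so they form a Gr\"obner basis of $\mathcal{I}$, and the normal forms modulo $\mathcal{I}$ are exactly the polynomials in which each $X_{i,j}$ ($i<j$) has degree $<q^2$ and each $X_{i,i}$ has degree $<q$. Since $q+1<q^2$ and $1<q$ for every prime power $q$, every $f \in \Fl$ is already in normal form. If in addition $f \in \mathcal{I}$ then $f$ equals its normal form modulo $\mathcal{I}$, which is $0$; hence $f = 0$. The only part demanding genuine care is the tightness of the off-diagonal bound $q+1$, but since it is a monomial-by-monomial bound it is automatically inherited by every $\mathbb{F}_{q^2}$-linear combination.
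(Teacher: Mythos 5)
Your proof is correct, and it closes the argument by a genuinely different (and cleaner) route than the paper. Both proofs start from the same per-variable degree observation on the minors of $\X$, namely $\deg_{X_{i,i}} \le 1$ and $\deg_{X_{i,j}} \le q+1$ for $i<j$. But the paper deploys that bound only to conclude that no \emph{single} minor lies in $\IH$, and then switches to a second, separate argument: it exhibits, for each pair $(I,J)$, a ``diagonal'' monomial $Mn_{I,J}$ obtained by reducing $\prod_s X_{i_s,j_s}$ modulo $X_{j,i}\mapsto X_{i,j}^q$, and shows these monomials are pairwise distinct, from which it claims linear independence of the reduced minors. You instead push the degree bound all the way: after collapsing to the ring $S$ in the upper-triangular variables and the reduced ideal $\mathcal{I}=\langle X_{i,j}^{q^2}-X_{i,j},\,X_{i,i}^q-X_{i,i}\rangle$, you note the generators have pairwise coprime leading monomials, hence form a Gr\"obner basis, so normal forms modulo $\mathcal{I}$ are precisely the polynomials with $\deg_{X_{i,j}}<q^2$ and $\deg_{X_{i,i}}<q$; the degree bound then says every element of $\Fl$ is already its own normal form, so $\Fl\cap\mathcal{I}=\{0\}$ at once. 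This is both shorter and more robust: it avoids the need to justify that each $Mn_{I,J}$ appears with a nonzero net coefficient only in the reduction of $\det_{I,J}$ (a point the paper's argument leaves implicit, since distinct monomials $Mn_{I,J}\ne Mn_{I',J'}$ alone do not rule out other monomials of one reduced minor colliding with the distinguished monomial of another after the substitution $X_{j,i}\mapsto X_{i,j}^q$). What the paper's version buys is slightly more explicit information about which monomials survive reduction, but for the purpose of this lemma your normal-form argument is the more economical and airtight one.
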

\begin{proof}

Let $f \in \Fl.$ Consider $f_H = f \mod \IH$, the reduction of $f$ modulo the ideal $\IH$. This may be attained simply by replacing the entry $X_{j,i}$ for $i<j$ with its conjugate transpose $X_{i,j}^q.$ This is given by the polynomial equation $X_{i,j}^q - X_{j,i} = 0.$

Any determinant will have at most degree $q+1$ on the variable $X_{i,j}$ for $i<j$ and degree at most $1$ on the variable $X_{i,i}$. This implies no single determinant is in $\IH$ as that would imply the degree in $X_{i,i}$ is at least $q$ and the degree in $X_{i,j}$ is at least $q^2$. Now we shall prove that even after setting $X_{j,i} = X_{i,j}^q$ all linear combinations in $\Fl$ are still independent.

Let $I = \{i_1 < i_2 < \cdots < i_m\}$ and $J = \{j_1 < j_2 < \cdots < j_m\}$. The term $\prod\limits_{s=1}^m X_{i_s,j_s}$ is a term appearing only in $det_{I,J}$. The reduction $\prod\limits_{s=1}^m X_{i_s,j_s} \mod \IH$ is obtained by replacing any $X_{i_s,j_s}$ by $X_{j_s,i_s}^q$ whenever $i_s > j_s$.  Denote this monomial by $Mn_{I,J}$.

Now let $I' = \{i'_1 < i'_2 < \cdots < i'_{m'}\}$ and $J' = \{j'_1 < j'_2 < \cdots < j'_{m'}\}$. We shall prove that if $I \neq I'$ or $J \neq J'$ then the monomial $Mn_{I,J} \neq Mn_{I',J'}$.

Clearly if the sizes of $I$ and $J$ are different from the size of $I'$ and $J'$ then  both $Mn_{I,J}$ and $Mn_{I',J'}$ will have a different number of monomials. We may assume $\# I = \# J = \# I' = \# J'.$ Suppose now that $I \neq I'$ or $J \neq J'$. We may assume there exists a pair $(i_s, j_s) \neq (i_s', j_s')$. If $\{ i_s, j_s  \} \neq \{ i_s', j_s'  \}$ then $Mn_{I,J}$ contains a different variable than $Mn_{I',J'}$. If $\{ i_s, j_s  \} = \{ i_s', j_s'  \}$, the condition $(i_s, j_s) \neq (i_s', j_s')$ implies $i_s' = j_s$ and $j_s = i_s'$. The reduction $\mod \IH$ will map $X_{j_s, i_s}$ to $X_{i_s, j_s}^q$, which implies $Mn_{I,J} \neq Mn_{I',J'}$.

As all reductions of all determinants modulo $\IH$ are independent, the statement follows. \end{proof}

\begin{lemma}\label{lem:inyective}
 The evaluation map $ev$ is injective.
\end{lemma}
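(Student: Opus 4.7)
The plan is to reduce the question to Lemma \ref{lem:ideal} via a standard multivariate interpolation argument. Implicitly the claim must be about the restriction of $\mathrm{Ev}$ to $\Fl$ (the only part of $\mathbb{F}_{q^2}[X]$ in which we embed the code), so suppose $f \in \Fl$ satisfies $\mathrm{Ev}(f) = 0$, meaning $f(P) = 0$ for every $P \in \HM$; the goal is to deduce $f = 0$.

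First I would reduce $f$ modulo $\IH$ to obtain a representative $f_H$ supported on the upper triangular entries only. Concretely, replacing every $X_{j,i}$ with $i<j$ by $X_{i,j}^q$ yields a polynomial $f_H$ in the variables $\{X_{i,i}\}_{i} \cup \{X_{i,j}\}_{i<j}$ with $f - f_H \in \IH$. Since each generator of $\IH$ vanishes on all Hermitian matrices, $f_H$ vanishes on $\HM$ as well.

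Next I would control the degree of $f_H$ in each surviving variable. In any minor $\det_{I,J}(X)$, each matrix entry appears to degree at most one; after the substitution $X_{j,i} \mapsto X_{i,j}^q$, the variable $X_{i,j}$ (for $i<j$) collects degree at most $1$ from position $(i,j)$ and degree at most $q$ from the transposed position, so $\deg_{X_{i,j}}(f_H) \leq q+1$, while $\deg_{X_{i,i}}(f_H) \leq 1$. Identifying $\HM$ with the Cartesian product $\F_q^{\ell} \times \F_{q^2}^{\binom{\ell}{2}}$ (diagonal entries over $\F_q$, strict upper triangular entries over $\F_{q^2}$), $f_H$ vanishes on this entire grid. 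Since $1<q$ and $q+1<q^2$ for every prime power $q$, the standard multivariate vanishing principle --- a polynomial $p(Y_1,\dots,Y_N)$ with $\deg_{Y_i}(p)<|S_i|$ that vanishes on $S_1\times\cdots\times S_N$ must be identically zero, proved by a one-variable-at-a-time induction --- forces $f_H = 0$.

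Therefore $f = (f - f_H) + f_H \in \IH$, and since $f \in \Fl$, Lemma \ref{lem:ideal} yields $f = 0$, establishing injectivity. I expect the only mildly delicate point to be the degree bookkeeping after the reduction modulo $\IH$; once those bounds are in hand, the product structure of $\HM$ together with the classical interpolation principle finish the argument.
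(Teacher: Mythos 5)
Your proof is correct and in fact supplies a justification that the paper glosses over. The paper's proof simply asserts ``the ideal $\IH$ is precisely the ideal of polynomial functions which vanish on $\HM$,'' from which $\mathrm{Ker}(\mathrm{ev}) = \Fl \cap \IH$ and Lemma \ref{lem:ideal} finish the argument. You reach the same conclusion ($f$ vanishing on $\HM$ forces $f \in \IH$), but instead of invoking that general fact you verify it directly for $f \in \Fl$: reduce modulo $\IH$, note the reduced polynomial $f_H$ has degree at most $q+1 < q^2$ in each off-diagonal variable and at most $1 < q$ in each diagonal variable, identify $\HM$ with the product grid $\F_q^{\ell} \times \F_{q^2}^{\binom{\ell}{2}}$, and apply the standard one-variable-at-a-time vanishing lemma to conclude $f_H = 0$. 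The degree bookkeeping is right --- a term of a minor with $\sigma(i)=j$ and $\sigma(j)=i$ contributes $X_{i,j}\cdot X_{j,i} \mapsto X_{i,j}^{q+1}$ after reduction --- and once $f_H = 0$ you have $f \in \IH$, so Lemma \ref{lem:ideal} gives $f = 0$. This buys you a self-contained argument that does not rely on knowing $\IH$ is the full vanishing ideal (equivalently, that $\IH$ is radical), at the mild cost of tracking degrees; the paper's version is shorter but leans on an unproved claim.
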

\begin{proof}
Note that the ideal $\IH$ is precisely the ideal of polynomial functions which vanish on $\HM$. That is $ev(f) = 0$ if and only if $f \in \IH.$ As Lemma \ref{lem:ideal} implies there is no nonzero element in both $\Fl$ and $\IH$, we then have that $Ker(ev) = \Fl \cap \IH = \{0\}.$ Thus the evaluation map is injective. 

\end{proof}
%$\mathbb{H}$ may be described as the set of solutions to the following polynomial equations:
%\begin{equation}
%    {y_{ii}}^q-{y_{ii}}
%%\end{equation}
%\begin{equation}
%%    {y_{ij}}^{q^2}-{y_{ij}} , i<j
%\end{equation}
%The ideal of the functions that vanish in $\mathbb{H}$ is I($\mathbb{H}$) and it is generated by the previous equations.\\
%Note that for a linear combination of elements in $\Delta(\ell)$, ${y_{ii}} \in \F_q $ has degree at most 1 and ${y_{ij}} \in |F_{q^2}$ for $i<j$ has degree at most q+1.\\
%This implies the degree is strictly lower than the previous characteristic equations.\\
%This implies that the linear combination cannot be expressed as a nontrivial combination of the previous equations."\\
%Here goes the lemma where no linear combination of elements in $\Delta(\ell)$ may be written as a nontrivial combination of the previous equations. (Note this is done by looking at the degrees)\\
%This implies that for all $f\neq 0 \in \Delta(\ell)$, $f\notin I(\mathbb{H})$.\\ 
%By consequence, the only element which maps to 0 is $f = 0$.\\
%Therefore, the map is injective. \qed
%Then goes the lemma that no linear combination of elements in $\Delta(\ell)$ is in I(H). Therefore by consecuence, the only element which maps to 0 is the 0 function. By consequence the map is injective.\\

We are now ready to define affine Hermitian Grassmann codes.

\begin{definition}
    The affine Hermitian Grassmann code $C^{\mathbb{H}}(\ell)$ is the image of $\Fl$ under the evaluation map $ev$. That is
    
    $$ \CH := \{ (f(P_1), f(P_2), \ldots, f(P_n)) | f \in \Fl  \} $$
\end{definition}

We remark that if $A$ is skew-Hermitian, then $A = \alpha H$ where $H$ is Hermitian and $\alpha^q = -\alpha$. This implies $det_{I,J}(A) = \alpha^{\#I} det_{I,J}(H)$. Which, in turn, implies that all evaluations under $ev$ for skew-Hermitian matrices are scalar multiples of evaluations for $\mathbb{H}^{\ell}(\F_{q^2})$. Therefore they generate the same code, that is the affine Hermitian Grassmann code and the affine skew-Hermitian Grassmann code are equivalent.

As a direct consequence of the Rank--Nullity theorem and Lemma \ref{lem:inyective} we obtain the following corollary. 
\begin{corollary}
$$\dim \CH  = \binom{2\ell}{\ell}.$$

\end{corollary}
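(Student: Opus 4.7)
The plan is to apply the Rank--Nullity theorem to the evaluation map $ev$ restricted to the subspace $\Fl \subseteq \F_{q^2}[X]$. By the definition of the code, $\CH = ev(\Fl)$, so
$$\dim_{\F_{q^2}} \CH \;=\; \dim_{\F_{q^2}} \Fl \;-\; \dim_{\F_{q^2}}\bigl(\ker(ev)\cap \Fl\bigr).$$
Lemma~\ref{lem:inyective} asserts that $ev$ is injective on $\Fl$, so the kernel term vanishes and the task collapses to computing $\dim_{\F_{q^2}} \Fl$.

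For that second step I would invoke the description of $\Fl$ as the $\F_{q^2}$-span of $\Delta(\ell)$, together with the remark following the definition of $\Delta(\ell)$ that $\#\Delta(\ell) = \binom{2\ell}{\ell}$. This immediately gives $\dim_{\F_{q^2}} \Fl \leq \binom{2\ell}{\ell}$. To obtain equality I need the $\binom{2\ell}{\ell}$ minors $det_{I,J}(X)$ to be $\F_{q^2}$-linearly independent, and this is essentially what was shown inside the proof of Lemma~\ref{lem:ideal}: each $det_{I,J}$ carries the distinguished monomial $\prod_{s} X_{i_s,j_s}$, which does not appear in any $det_{I',J'}$ with $(I,J)\neq (I',J')$, and the argument there shows that these leading monomials remain pairwise distinct even after reduction modulo $\IH$. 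In particular they are distinct before reduction as well, so the $det_{I,J}$ are independent in $\F_{q^2}[X]$ and hence form a basis of $\Fl$.

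Combining the two inputs, $\dim_{\F_{q^2}} \CH = \dim_{\F_{q^2}} \Fl = \binom{2\ell}{\ell}$, which is the claim. There is no real obstacle to overcome, since both ingredients (injectivity of $ev$ and independence of the minors) have already been established in the preceding lemmas; the only subtlety worth stating explicitly is that all three dimensions in play are taken over the same scalar field $\F_{q^2}$, so that counting a spanning set of $\Fl$ and counting the coordinates of $\CH$ produce compatible numbers.
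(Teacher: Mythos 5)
Your proposal is correct and follows essentially the same route as the paper: apply Rank--Nullity to $ev|_{\Fl}$, use Lemma~\ref{lem:inyective} for injectivity, and identify $\dim \Fl = \binom{2\ell}{\ell}$. You spell out the linear independence of the minors somewhat more explicitly than the paper (which relegates it to an earlier remark), but the underlying argument is identical.
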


Now we compare our code $\CH$ to the affine Grassmann code $\CA$ introduced by Beelen, Ghorpade and H\o{}holdt. It turns out that both codes are very similar to each other. However some key differences between $\CH$ and $\CA$ remain. Now we state the definition of the code $\CA$.

\begin{definition}\cite{BGT}

    The affine Grassmann code $\CA$ is obtained by evaluating $f \in \Fl_Z$ onto all $\ell \times \ell$ matrices over $\F_q$.
    
    $$ \CA := \{ (f(P_1), f(P_2), \ldots, f(P_n)) | f \in \Fl_Z  \} $$

where $P_1, P_2, \ldots, P_n$ are the elements of $\mathbb{M}^{\ell \times \ell}(\F_q)$ in some order.
\end{definition}

We remark that $\# \mathbb{M}^{\ell \times \ell}(\F_q) = \# \HM = q^{\ell^2}.$ Thus both $\CA$ and $\CH$ have the same length. They also have the same dimension, $\binom{2\ell}{\ell}$. Although $\CA$ is defined over $\F_q$ and $\CH$ is defined over $\F_{q^2}$, we shall prove that $\CH$ has a basis over the subfield $\F_q$ and in fact, may be considered as a $\F_q$ code. We begin by defining the following map.

\begin{definition}

Let $\F_{q^m}$ be a finite field containing $\F_q$. If $c \in \F_{q^m}^n$, we define $$c^q := (c_1^q, c_2^q, \ldots, c_n^q).$$

\end{definition}

This definition is extended to codes as follows.

\begin{definition}

Let $\F_{q^m}$ be a finite field containing $\F_q$. If $C  \leq \F_{q^m}^n$, we define $$C^q := \{c^q \ | \ c \in C \}.$$

\end{definition}

We state without proof that $C^q$ is also a $\F_{q^m}$ linear code whenever $C$ is. 

\begin{proposition}\label{prop: Sticht 1}\cite{Sticht}%Stichtenoth 90
Let $\F_{q^m}$ be a finite field containing $\F_q$. Let $C$ be a linear code over $\F_{q^m}$, Then $C$ has a basis over $\F_q$ if and only if $C = C^q$.

\end{proposition}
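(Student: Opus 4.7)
The plan is to prove the biconditional of Proposition \ref{prop: Sticht 1} by handling each direction separately.

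For the forward implication, I would assume $C$ admits an $\F_q$-basis $b_1,\dots,b_k$ with every $b_i\in\F_q^n$. Since the entries of each $b_i$ lie in $\F_q$, they are fixed by the Frobenius $x\mapsto x^q$, so for any $c=\sum_i\alpha_i b_i\in C$ with $\alpha_i\in\F_{q^m}$ one has $c^q=\sum_i\alpha_i^q b_i\in C$. This shows $C^q\subseteq C$, and since the coordinatewise Frobenius is a bijection of $\F_{q^m}^n$ carrying $\F_{q^m}$-subspaces to $\F_{q^m}$-subspaces of the same dimension, a dimension count gives $C^q=C$.

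For the converse, I would start from $C=C^q$, iterate to conclude $C=C^{q^i}$ for all $i\geq 0$, and then fix an $\F_q$-basis $e_1,\dots,e_m$ of $\F_{q^m}$. Given any $c\in C$, expand each coordinate of $c$ in this basis to write $c=\sum_{j=1}^m e_j c^{(j)}$ with $c^{(j)}\in\F_q^n$. Applying the coordinatewise $q^i$-Frobenius for $i=0,1,\dots,m-1$ produces the linear system
$$\begin{pmatrix} c\\ c^q\\ \vdots\\ c^{q^{m-1}}\end{pmatrix}=M\begin{pmatrix} c^{(1)}\\ c^{(2)}\\ \vdots\\ c^{(m)}\end{pmatrix},\qquad M=\bigl[e_j^{q^i}\bigr]_{0\leq i\leq m-1,\,1\leq j\leq m}.$$
The left-hand side lies in $C^m$ by the iterated identity $C=C^{q^i}$, while $M$ is invertible because $\det(M)^2$ is the discriminant of the basis $e_1,\dots,e_m$, which is nonzero for the separable extension $\F_{q^m}/\F_q$. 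Inverting $M$ writes each $c^{(j)}$ as an $\F_{q^m}$-linear combination of codewords, forcing $c^{(j)}\in C\cap\F_q^n$.

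To finish, I would observe that the previous step shows $C$ is spanned over $\F_{q^m}$ by $C\cap\F_q^n$. Taking an $\F_q$-basis of $C\cap\F_q^n$ yields a spanning family of $C$ over $\F_{q^m}$ whose elements lie in $\F_q^n$, and vectors in $\F_q^n$ that are $\F_q$-linearly independent remain $\F_{q^m}$-linearly independent because the rank of the matrix they form is independent of the field. Hence this family has exactly $\dim_{\F_{q^m}}C$ elements and serves as the desired $\F_q$-basis. The main obstacle I anticipate is the invertibility of the matrix $M$ in the converse direction; this is the single non-routine ingredient and depends on the standard fact that the discriminant of a basis of a separable extension does not vanish.
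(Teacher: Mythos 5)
Your proof is correct. Note, though, that the paper does not supply its own argument here: Proposition~\ref{prop: Sticht 1} is stated with a citation to \cite{Sticht} and no proof is given, so there is no in-paper proof to compare against. Your argument is the standard Galois-descent proof. The forward direction is routine. In the converse, your linear-system formulation with the matrix $M=[e_j^{q^i}]$ is just an explicit unwinding of the usual trace-map argument: writing $c=\sum_j e_j c^{(j)}$ and inverting $M$ is equivalent to expressing $c^{(j)}=\mathrm{Tr}(e_j^{*}c)$ via a trace-dual basis $e_j^{*}$, with $\mathrm{Tr}(e_j^{*}c)=\sum_i (e_j^{*})^{q^i}c^{q^i}\in C$ because $C=C^{q^i}$ for all $i$. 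Both come down to the nondegeneracy of the trace form, which is exactly the non-vanishing of $\det(M)^2$, the discriminant you invoke. Your final step, that $\F_q$-linearly independent vectors in $\F_q^n$ stay linearly independent over $\F_{q^m}$ (rank is field-extension invariant), correctly closes the argument and shows the resulting family is a basis of $C$ with coordinates in $\F_q$.
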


\begin{proposition}\label{prop: Sticht 2}\cite{Sticht}%Stichtenoth 90
Let $\F_{q^m}$ be a finite field containing $\F_q$. Let $C$ be a linear code over $\F_{q^m}$, If $C = C^q$, then all of its minimum distance codewords are multiples of a codeword in the subfield $\F_q$.

\end{proposition}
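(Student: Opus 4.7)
The plan is to exploit the Frobenius action on $C$ implied by $C = C^q$, together with the minimality of the weight, to pin down the structure of a minimum-weight codeword.

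First, I would take any minimum weight codeword $c \in C$ and let $S$ be its support. The hypothesis $C = C^q$ gives $c^q \in C$, and since the map $x \mapsto x^q$ is a bijection on $\F_{q^m}$ fixing $0$, the codeword $c^q$ has exactly the same support $S$ and hence the same weight as $c$. The strategy is then to cancel one coordinate: fix any index $i_0 \in S$ and set $\lambda = c_{i_0}^{1-q} \in \F_{q^m}^*$. The codeword
$$
c' := c - \lambda\, c^q \in C
$$
satisfies $c'_{i_0} = c_{i_0} - c_{i_0}^{1-q} c_{i_0}^q = 0$, so $\mathrm{supp}(c') \subsetneq S$, forcing $\mathrm{wt}(c') < \mathrm{wt}(c)$. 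By minimality of $\mathrm{wt}(c)$ we must have $c' = 0$, i.e.\ $c = \lambda c^q$.

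Next I would extract the $\F_q$-structure from the equation $c = \lambda c^q$. For each $j \in S$ it yields $c_j = c_{i_0}^{1-q} c_j^q$, equivalently $(c_j / c_{i_0})^{q-1} = 1$, so $c_j / c_{i_0} \in \F_q^*$. Writing $\alpha_j = c_j / c_{i_0} \in \F_q$ for $j \in S$ and $\alpha_j = 0$ for $j \notin S$, we obtain $c = c_{i_0} \cdot (\alpha_1, \ldots, \alpha_n)$, where the vector $(\alpha_1, \ldots, \alpha_n) = c_{i_0}^{-1} c$ lies in $C$ (by $\F_{q^m}$-linearity of $C$) and has all entries in $\F_q$. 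This is exactly the claim that $c$ is a scalar multiple of a codeword lying in the subfield $\F_q$.

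The one step that requires the most care is the cancellation argument: one has to be sure that $\lambda$ is well defined (which is why $i_0$ must be chosen in the support) and that $c - \lambda c^q$ really belongs to $C$, the latter being where $C = C^q$ is genuinely used. Once that is in place, the remainder is a direct computation with the identity $x^{q-1} = 1 \iff x \in \F_q^*$, so I expect no further obstacle.
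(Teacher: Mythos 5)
Your proof is correct, and the argument is clean and self-contained. Note that the paper does not actually supply a proof of this proposition; it simply cites \cite{Sticht}, so there is no internal argument to compare against. Your cancellation step $c' = c - \lambda c^q$ with $\lambda = c_{i_0}^{1-q}$ is the standard Frobenius trick: it is well defined because $i_0 \in \mathrm{supp}(c)$, it lands in $C$ because $C = C^q$ is $\F_{q^m}$-linear, and it strictly shrinks the support, so minimality forces $c = \lambda c^q$. The final step correctly uses that the solutions of $x^{q-1}=1$ in $\F_{q^m}^*$ form the subgroup $\F_q^*$ (since $q-1$ divides $q^m - 1$), which gives $c_j/c_{i_0} \in \F_q$ for every $j$ in the support, hence $c = c_{i_0}\,v$ with $v \in C \cap \F_q^n$. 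I see no gap.
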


\begin{definition}

Let $f \in \Fl$. We denote the conjugate of $f$ by $$ f_{conj} = \sum\limits_{I,J \subseteq [\ell], \# I = \# J} f_{I,J}^q det_{J,I}(X).$$
\end{definition}

\begin{lemma}\label{lem: Fq}
The code $\CH$ satisfies $$\CH = \CH^q.$$
\end{lemma}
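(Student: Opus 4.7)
The plan is to show that applying the coordinatewise $q$-th power map to a codeword $c = ev(f)$ simply corresponds to applying the conjugation map $f \mapsto f_{conj}$ on $\Fl$, and since this conjugation is a bijection $\Fl \to \Fl$, the image $\CH^q$ coincides with $\CH$.

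First I would fix $f = \sum_{I,J} f_{I,J}\det_{I,J}(X) \in \Fl$ and compute $f(P)^q$ for an arbitrary $P \in \HM$. Since the Frobenius $x \mapsto x^q$ is a field automorphism, we get
\[
f(P)^q = \sum_{I,J} f_{I,J}^q\, \det_{I,J}(P)^q = \sum_{I,J} f_{I,J}^q\, \det_{I,J}(P^{(q)}).
\]
The key identity is that for a Hermitian matrix $P$ we have $P^{(q)} = (P^*)^T = P^T$, and the determinant of a submatrix of the transpose satisfies $\det_{I,J}(P^T) = \det_{J,I}(P)$. Substituting this gives
\[
f(P)^q = \sum_{I,J} f_{I,J}^q \det_{J,I}(P) = f_{conj}(P).
\]

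Next I would promote this pointwise identity to codewords: for every $P_i$ in the enumeration of $\HM$, $f(P_i)^q = f_{conj}(P_i)$, so $ev(f)^q = ev(f_{conj})$. Since $f_{conj}$ is obtained from $f$ by conjugating coefficients and swapping the index sets $I \leftrightarrow J$ in each minor, and $\Fl$ is the $\F_{q^2}$-span of all $\det_{I,J}(X)$ with $\#I = \#J$ (a set closed under the swap $I \leftrightarrow J$), the operation $f \mapsto f_{conj}$ is a well-defined map $\Fl \to \Fl$. It is its own inverse, hence a bijection.

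Finally, I would conclude:
\[
\CH^q = \{ ev(f)^q : f \in \Fl \} = \{ ev(f_{conj}) : f \in \Fl \} = ev(\Fl) = \CH.
\]
The only step with any subtlety is the identity $\det_{I,J}(P^{(q)}) = \det_{J,I}(P)$ for Hermitian $P$; everything else is bookkeeping. I do not expect a significant obstacle, since the Hermitian condition $P^* = P$ is exactly what makes coordinatewise Frobenius on $ev(\Fl)$ correspond to a genuine symmetry of $\Fl$.
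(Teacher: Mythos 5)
Your proposal is correct and follows essentially the same route as the paper: both rely on the identity $\det_{I,J}(P)^q = \det_{J,I}(P)$ for Hermitian $P$ to conclude $ev(f)^q = ev(f_{conj})$, and both deduce $\CH = \CH^q$ from the fact that $f \mapsto f_{conj}$ preserves $\Fl$. Your write-up is a bit more explicit in spelling out the transpose argument and in noting that conjugation is an involution (giving both inclusions), but this is the paper's proof.
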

\begin{proof}

We only need to prove that if $c \in \CH$ then so is $c^q \in \CH$. Recall that for any Hermitian matrix $H \in \HM$ $det_{J,I}(H) = det_{I,J}(H)^q$. Hence if $f \in \Fl$ satisfies $$ f = \sum\limits_{I,J \subseteq [\ell], \# I = \# J} f_{I,J} det_{I,J}(X).$$

Then for any $P_i \in \HM$, $f^q$ satisfies

$$ f(P_i)^q = \sum\limits_{I,J \subseteq [\ell], \# I = \# J} f_{I,J}^q det_{I,J}(P_i)^q.$$

However, as $P_i$ is Hermitian we have that 

$$ f(P_i)^q = \sum\limits_{I,J \subseteq [\ell], \# I = \# J} f_{I,J}^q det_{J,I}(P_i).$$

Therefore the linear combination $$ f_{conj} = \sum\limits_{I,J \subseteq [\ell], \# I = \# J} f_{I,J}^q det_{J,I}(X)$$ satisfies $$ev(f_{conj}) = ev(f)^q.$$ \end{proof}

Now we calculate the $\F_q$--basis for the code $\CH$.

\begin{lemma}\label{lem:fqbasis}

Let $\{\alpha, \alpha^q\}$ be a basis of $\F_{q^2}$ over $\F_q$. Then $\CH$ is generated as a code over $\F_q$ by the functions $$ev(\alpha  det_{I,J}(X) + \alpha^q det_{J,I}(X)) $$ and $$ev(\alpha^q  det_{I,J}(X) + \alpha det_{J,I}(X))$$ 
\end{lemma}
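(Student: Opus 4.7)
The plan has two components: verify (i) each proposed codeword has entries in $\F_q$, and (ii) the $\F_{q^2}$-span of these codewords equals all of $\CH$. Combined with Proposition \ref{prop: Sticht 1} and Lemma \ref{lem: Fq}, this exhibits an $\F_q$-generating set for $\CH$ in the sense required.

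For (i), I would apply the Frobenius $x \mapsto x^q$ coordinatewise to $ev(\alpha\, det_{I,J}(X) + \alpha^q\, det_{J,I}(X))$. For any $P \in \HM$, the identity $det_{I,J}(P)^q = det_{J,I}(P)$ (already exploited in the proof of Lemma \ref{lem: Fq}) gives
\[(\alpha\, det_{I,J}(P) + \alpha^q\, det_{J,I}(P))^q = \alpha^q\, det_{J,I}(P) + \alpha\, det_{I,J}(P),\]
which equals the original value, so each coordinate is Frobenius-fixed and therefore in $\F_q$. The same computation covers the second family of generators.

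For (ii), by Lemma \ref{lem:inyective} the evaluations $\{ev(det_{I,J}(X)) : \#I = \#J\}$ form an $\F_{q^2}$-basis of $\CH$, so it suffices to recover each such basis element from the two given generators attached to $(I,J)$. When $I = J$ both generators collapse to $(\alpha+\alpha^q)\, ev(det_{I,I}(X))$, so $ev(det_{I,I}(X))$ is recovered by dividing by $\alpha+\alpha^q$. When $I \neq J$, writing $g^1,g^2$ for the two generators yields the $2\times 2$ system
\[\begin{pmatrix} \alpha & \alpha^q \\ \alpha^q & \alpha\end{pmatrix}\begin{pmatrix} ev(det_{I,J}(X))\\ ev(det_{J,I}(X))\end{pmatrix} = \begin{pmatrix} g^1\\ g^2\end{pmatrix},\]
whose determinant is $(\alpha-\alpha^q)(\alpha+\alpha^q)$. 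Inverting the system then expresses both $ev(det_{I,J})$ and $ev(det_{J,I})$ as $\F_{q^2}$-combinations of $g^1,g^2$.

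The only step requiring real care is justifying that the hypothesis ``$\{\alpha,\alpha^q\}$ is an $\F_q$-basis of $\F_{q^2}$'' forces both $\alpha \neq \alpha^q$ and $\alpha + \alpha^q \neq 0$. The first is immediate (otherwise $\alpha \in \F_q$), and the second follows because $\alpha+\alpha^q = 0$ would be the nontrivial $\F_q$-relation $1\cdot\alpha + 1\cdot\alpha^q = 0$, contradicting linear independence. With these nonvanishings in hand the matrix above is invertible, every $ev(det_{I,J}(X))$ lies in the $\F_{q^2}$-span of the generators, and since all generators lie in $\F_q^n$ by part (i), the conclusion follows. This small nonvanishing check is the only obstacle worth flagging; the rest is routine linear algebra.
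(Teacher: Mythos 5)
Your proof is correct and follows essentially the same route as the paper: show the proposed codewords are $\F_q$-valued via $det_{I,J}(P)^q = det_{J,I}(P)$ for Hermitian $P$, then show they span $\CH$ over $\F_{q^2}$ by observing that $\{\alpha\,det_{I,J}+\alpha^q\,det_{J,I},\ \alpha^q\,det_{I,J}+\alpha\,det_{J,I}\}$ and $\{det_{I,J},det_{J,I}\}$ generate the same subspace. The one thing you add that the paper leaves implicit is the explicit $2\times 2$ determinant $(\alpha-\alpha^q)(\alpha+\alpha^q)$ and the verification that both factors are nonzero because $\{\alpha,\alpha^q\}$ is an $\F_q$-basis; that nonvanishing check is a small but genuine completion of the argument, and the rest matches the paper's reasoning point for point.
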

\begin{proof}

The definition of the code $\CH$ implies that the code $\CH$ is generated by $ev_{\HM}( \alpha f_{I,J} det_{I,J}(X))$ where the coefficient $f_{I,J} \in \F_{q^2}$. 

Note that in the case $I=J$ then $ev_{\HM}(det_{I,J}(X))$ is clearly an $\F_q$--valued function because $X^{I,I}$ is a principal determinant of a Hermitian matrix satisfying $(X^{I,I})^T  = (X^{I,I})^{(q)}$.

If $I\neq J$ then note that the vector space  $$\langle det_{I,J}(X), det_{J,I}(X)\rangle$$ is spanned by $$\langle \alpha det_{I,J}(X) + \alpha^q det_{J,I}(X),  \alpha^q det_{I,J}(X) + \alpha det_{J,I}(X) \rangle .$$ When evaluating $det_{I,J}(X)$ on a nonprincipal minor of a Hermitian matrix we obtain that $det_{I,J}(X)^q = det_{J,I}(X)$. Therefore $ev_{\HM}( \alpha det_{I,J}(X) + \alpha^q det_{J,I}(X))$ and $ev_{\HM}( \alpha^q det_{I,J}(X) + \alpha det_{J,I}(X))$ are $\F_q$ valued functions. Thus we've found a basis for $\CH$ of the requiered form.
 %$ev(\alpha  det_{I,J}(X) + \alpha^q det_{J,I}(X)) $ and ev(\alpha^q  det_{I,J}(X) + \alpha det_{J,I}(X))$ 

\end{proof}

We have the following corollary on the codewords of $\CH$ which take values exclusively on the subfield $\F_q$.

\begin{corollary}\label{col: Fq}
Let $c = ev_{\HM}(f)$ be a codeword of $\CH$. Denote by $c_H$ the position of $c$ indexed by $H \in \HM$. Then $c_H^q = c_H$ $ \forall H \in \HM$ if and only if $f_{I,J}^q = f_{J,I}$. \end{corollary}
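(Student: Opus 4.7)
The plan is to reduce the pointwise Frobenius-fixed condition $c_H^q = c_H$ to a polynomial identity in $\Fl$, and then read off the coefficient condition using that the minors $\{det_{I,J}(X)\}$ form a basis of $\Fl$.

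First I would observe that $c_H^q = c_H$ for every $H \in \HM$ is exactly the vector equality $c = c^q$. Then, exactly as in the calculation carried out inside the proof of Lemma \ref{lem: Fq}, for $f = \sum_{I,J} f_{I,J} det_{I,J}(X)$ one has $c^q = ev(f)^q = ev(f_{conj})$, where $f_{conj} = \sum_{I,J} f_{I,J}^q det_{J,I}(X)$. Therefore $c = c^q$ is equivalent to $ev(f) = ev(f_{conj})$, i.e.\ to $ev(f - f_{conj}) = 0$.

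Now I would invoke the injectivity of the evaluation map on $\Fl$ from Lemma \ref{lem:inyective}: since $f - f_{conj} \in \Fl$, the vanishing of its evaluation forces the polynomial identity $f = f_{conj}$ in $\Fl$. Because $\dim \Fl = \binom{2\ell}{\ell}$ equals the number of index pairs $(I,J)$ with $I,J \subseteq [\ell]$ and $\#I = \#J$, the minors $\{det_{I,J}(X)\}$ form an $\F_{q^2}$-basis of $\Fl$. Reindexing the sum defining $f_{conj}$ by swapping the dummy variables $I \leftrightarrow J$ rewrites it as $\sum_{I,J} f_{J,I}^q det_{I,J}(X)$, so matching coefficients in this basis gives $f_{I,J} = f_{J,I}^q$ for every $(I,J)$. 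Applying Frobenius (which is an involution on $\F_{q^2}$) yields the equivalent form $f_{I,J}^q = f_{J,I}$, which is the stated condition; the reverse direction reverses each of these implications.

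No step is genuinely hard: the only place one must be careful is the bookkeeping of the reindexing $I \leftrightarrow J$ in $f_{conj}$, together with a mild sanity check that the basis property of $\{det_{I,J}(X)\}$ in $\Fl$ (used to compare coefficients) indeed follows from the dimension computation $\dim \Fl = \binom{2\ell}{\ell}$ already established before the corollary.
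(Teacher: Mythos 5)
Your argument is correct, but it takes a slightly different route from the paper's. The paper's proof is a one-line appeal to Lemma~\ref{lem:fqbasis} (the explicit $\F_q$--basis of $\CH$): the $\F_q$-valued codewords are exactly the $\F_q$-spans of $ev(\alpha\, det_{I,J} + \alpha^q\, det_{J,I})$ and its companion, and unwinding coefficients in that basis gives $f_{I,J} = f_{J,I}^q$. You instead rewrite the pointwise condition as the vector identity $c = c^q$, use the conjugation computation from Lemma~\ref{lem: Fq} to get $ev(f)^q = ev(f_{conj})$, reduce to $ev(f - f_{conj}) = 0$, and then invoke the injectivity of $ev$ on $\Fl$ (Lemma~\ref{lem:inyective}) plus linear independence of the minors to force $f = f_{conj}$ as a polynomial identity, from which the coefficient condition drops out by reindexing. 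The payoff of your route is that it is more self-contained: it does not require re-deriving (or trusting) the dimension count showing the $\F_q$-basis in Lemma~\ref{lem:fqbasis} is in fact a basis, only the already-established injectivity of the evaluation map. The payoff of the paper's route is brevity, since Lemma~\ref{lem:fqbasis} has already been proved. One small thing worth saying explicitly in your write-up: the linear independence of the minors $\{det_{I,J}(X)\}$ over $\F_{q^2}$ (needed to compare coefficients once you have $f = f_{conj}$) is guaranteed by Lemma~\ref{lem:ideal}, which shows independence even modulo $\IH$; this is a cleaner justification than the dimension count you gesture at.
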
 \begin{proof} Note that for $c = ev_{\HM}(f)$ the position $c_H$ is equal to $f(H)$. Lemma \ref{lem:fqbasis} implies $f(H) \in \F_q$ if and only if $f_{I,J}= f_{J, I}^q.$ \end{proof}

By Proposition \ref{prop: Sticht 2}, we may assume our minimum weight codewords are over $\F_q$. Thus, by Corollary \ref{col: Fq}, we may assume our coefficients meet $f_{I,J}= f_{J, I}^q$.

\begin{definition}
Let $C$ be a code of length $n$. We say that a permutation $\sigma \in S_n$ is an automorphism of $C$ if and only if $$(c_1, c_2, \ldots, c_n) \in C \makebox{ if and only if } (c_{\sigma(1)}, c_{\sigma(2)}, \ldots, c_{\sigma(n)}) \in C. $$

The group of such automorphisms is determined by $Aut(C).$
\end{definition}

We state the following automorphisms of $\CA$ from \cite{BGT2}.

\begin{proposition}\cite[Lemma 7]{BGT2}

The automorphism group $Aut(\CA)$ contains the following permutations:
\begin{itemize}
    \item For $A \in GL_{\ell}(\F_q), \X \mapsto A\X$.
    \item For $B \in GL_{\ell}(\F_q), \X \mapsto \X B$.
    \item For $M \in \mathbb{M}^{\ell \times \ell}(\F_q), \X \mapsto \X + M$.
    \item $\X \mapsto \X^T$.
\end{itemize}

\end{proposition}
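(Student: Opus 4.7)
The plan is to reduce the problem to showing that each substitution preserves the evaluation space $\Fl_Z$. First I would observe that any bijection $\sigma$ of $\mathbb{M}^{\ell \times \ell}(\F_q)$ induces a coordinate permutation on $\F_q^n$; the corresponding permuted codeword associated to $f \in \Fl_Z$ is exactly the evaluation vector of the polynomial $f \circ \sigma$. Since the evaluation map is injective on $\Fl_Z$ (the $\CA$ analogue of Lemma \ref{lem:inyective}), $\sigma$ is in $Aut(\CA)$ if and only if $f \circ \sigma \in \Fl_Z$ for every $f \in \Fl_Z$; and, by linearity, it suffices to show this for each generator $\det_{I,J}(Z)$. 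For each of the four candidates, one also has to check that $\sigma$ actually is a bijection on $\mathbb{M}^{\ell \times \ell}(\F_q)$, which is immediate: $A, B$ are invertible, translation is always invertible, and transpose is an involution.

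For the multiplicative cases $\X \mapsto A\X$ and $\X \mapsto \X B$, I would invoke the Cauchy-Binet identity. For a fixed index set $I$, the rows of $A\X$ indexed by $I$ are $A_I \cdot \X$, so
$$\det_{I,J}(A\X) \;=\; \det\bigl(A_I \cdot \X_{[\ell],J}\bigr) \;=\; \sum_{K \subseteq [\ell],\ \#K = \#I} \det_{I,K}(A)\,\det_{K,J}(\X),$$
which is an $\F_q$-linear combination of minors of $\X$ because $A \in GL_\ell(\F_q)$. The right-multiplication case is symmetric (expand along columns instead of rows). The transpose is essentially free: $\det_{I,J}(\X^T) = \det_{J,I}(\X)$, so the substitution merely permutes the set of minors and preserves $\Fl_Z$ on the nose.

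The translation case is the most delicate and is where I expect the real work. I would expand $\det_{I,J}(\X+M)$ using multilinearity of the determinant in the rows indexed by $I$: since each such row of $\X+M$ splits as a row of $\X$ plus a row of $M$, the determinant expands into $2^{\#I}$ terms, each of which is the determinant of a matrix whose rows come partly from $\X$ and partly from $M$. A Laplace expansion on the mixed-row determinant rewrites each term as a signed sum of products $\det_{S,K}(\X)\cdot\det_{I\setminus S,\ J\setminus K}(M)$ over complementary row and column splits, and the $M$-factors are scalars in $\F_q$. The result is a genuine $\F_q$-linear combination of minors of $\X$ of sizes ranging from $0$ to $\#I$, so it still lies in $\Fl_Z$. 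The main obstacle is precisely this point: translation mixes minors of many different sizes, unlike the other three operations which act "homogeneously" on the size, and one must rely on the fact that $\Fl_Z$ contains minors of every size (including the constant $1$) for closure to hold.
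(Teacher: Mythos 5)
The paper does not give its own proof of this proposition — it is stated as a citation to \cite[Lemma 7]{BGT2} — so there is no internal argument to compare against. That said, your proof is correct, and it follows what is in fact the standard argument for this result in the literature: reduce to showing that each substitution maps $\Fl_Z$ into itself, then use Cauchy--Binet for $\X \mapsto A\X$ and $\X \mapsto \X B$, the identity $\det_{I,J}(\X^T) = \det_{J,I}(\X)$ for transposition, and row-multilinearity together with a generalized Laplace expansion for the translation $\X \mapsto \X + M$. Two small remarks. First, your biconditional ``$\sigma \in Aut(\CA)$ iff $f \circ \sigma \in \Fl_Z$ for all $f$'' is slightly stronger than warranted: the forward direction only gives that $f \circ \sigma$ agrees with some element of $\Fl_Z$ on all evaluation points, not literal membership in $\Fl_Z$ as polynomials. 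But you only use the sufficient direction (closure of $\Fl_Z$ under composition implies the induced permutation carries $\CA$ into $\CA$, hence onto $\CA$ by a dimension count), so this does not affect the proof. Second, you invoke injectivity of $ev$ on $\Fl_Z$, but for the direction you actually need it is unnecessary; closure of $\Fl_Z$ under each substitution already suffices.
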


Some of the automorphisms of $\CA$ are also automorphisms of $\CH$ as seen in the following lemma.

\begin{lemma}

The automorphism group $Aut(\CH)$ contains the group generated by the following permutations
\begin{itemize}
    \item For $A \in GL_{\ell}(\F_{q^2}), \X \mapsto A^{(q)T}\X A$.
    \item For $M \in \HM, \X \mapsto \X + M$.
    \item $\X \mapsto \X^T$.
\end{itemize}

\end{lemma}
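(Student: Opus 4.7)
The plan is to verify for each of the three listed maps $\sigma$ two things: (i) that $\sigma$ gives a bijection of $\HM$, and (ii) that for every $f \in \Fl$ there exists $g \in \Fl$ with $f(\sigma(H)) = g(H)$ for all $H \in \HM$. Combined with the injectivity of $ev$ from Lemma \ref{lem:inyective}, (ii) implies the coordinate permutation $c_H \mapsto c_{\sigma(H)}$ sends $\CH$ into itself, while (i) makes this an honest permutation of coordinates, so $\sigma \in Aut(\CH)$. Since $Aut(\CH)$ is closed under composition, the subgroup generated by the three families then lies entirely inside $Aut(\CH)$.

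For $\X \mapsto A^{(q)T} \X A$, I would first note that $(A^{(q)T} H A)^* = A^{(q)T} H^* A = A^{(q)T} H A$ preserves Hermiticity, while invertibility of $A \in GL_\ell(\F_{q^2})$ makes the map a bijection of $\HM$. For (ii) I would apply the Cauchy--Binet formula twice to obtain
$$det_{I,J}(A^{(q)T}\X A) = \sum_{K,\,L \subseteq [\ell],\; |K|=|L|=|I|} det_{I,K}(A^{(q)T})\, det_{K,L}(\X)\, det_{L,J}(A),$$
an $\F_{q^2}$-linear combination of elements of $\Delta(\ell)$ whose outer factors $det_{I,K}(A^{(q)T})$ and $det_{L,J}(A)$ are scalars in $\F_{q^2}$. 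Extending linearly in $f$ gives $f(A^{(q)T} \X A) \in \Fl$, proving (ii).

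For $\X \mapsto \X + M$ with $M \in \HM$, (i) is immediate from $M^* = M$, and for (ii) I would expand by multilinearity of the determinant in the rows (a Laplace-type expansion for sums of matrices) to get
$$det_{I,J}(\X + M) = \sum_{S \subseteq I,\; T \subseteq J,\; |S|=|T|} \epsilon_{S,T}\, det_{S,T}(\X)\, det_{I\setminus S,\, J\setminus T}(M),$$
which is again an $\F_{q^2}$-linear combination of elements of $\Delta(\ell)$ with constant coefficients coming from minors of $M$. For $\X \mapsto \X^T$, the identity $H = H^* = \overline{H}^T$ forces $H^T = \overline{H} \in \HM$, so the map is a bijection of $\HM$, and the polynomial identity $det_{I,J}(\X^T) = det_{J,I}(\X) \in \Delta(\ell)$ settles (ii) at once. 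I expect the only real bookkeeping difficulty to be organizing the Cauchy--Binet application in the first case cleanly; the remaining two cases reduce to standard determinant identities.
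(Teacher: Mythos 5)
Your proof is correct, but it takes a genuinely different route from the paper's. The paper observes that $\CH$ is a puncturing of $\CA$ (viewed over $\F_{q^2}$) at the non-Hermitian positions, invokes the previously cited automorphisms of $\CA$, and then notes that any automorphism of $\CA$ that fixes $\HM$ setwise descends to an automorphism of the punctured code; checking that $A^{(q)T}HA$, $H+H'$, and $H^T$ are Hermitian when $H$ is does the rest. Your argument is a self-contained direct verification: you check each $\sigma$ is a bijection of $\HM$ and then show $f\circ\sigma \in \Fl$ via Cauchy--Binet, multilinearity in the rows, and the transpose identity, so that the coordinate permutation maps $ev(\Fl)$ into itself. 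What the paper's approach buys is brevity, at the cost of implicitly needing the automorphism statement for $\CA$ over $\F_{q^2}$ (the cited proposition is stated over $\F_q$) and the standard fact that a setwise-stabilizing automorphism of a code induces an automorphism of its puncturing. What your approach buys is independence from the $\CA$ machinery and explicit constant coefficients in $\Fl$. One small point worth making explicit in your write-up: condition (ii) gives only that the permuted codeword lies in $\CH$; to conclude the map is an automorphism (the "if and only if" in the definition), observe that the induced map on $\F_{q^2}^n$ is a linear injection sending $\CH$ into $\CH$, hence onto, or alternatively that each $\sigma^{-1}$ is of the same listed form (use $A^{-1}$, $-M$, and transpose is an involution) and run the same argument.
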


\begin{proof}
Note that $\CH$ is obtained from $\CA$ by removing all matrices which are not Hermitian. That is the code $\CH$ is a puncturing of the code $\CA$ at the matrices in $\mathbb{M}^{\ell \times \ell}(\F_{q^2}) \setminus \HM$ to obtain $\CH$. Therefore the permutations in $Aut(\CA)$ fixing $\HM$ setwise are permutations of $Aut(\CH)$. If $H$ is a Hermitian matrix, then so are the matrices $A^{(q)T}HA$, $H + H'$ where $H' \in \HM$ and $H^T$. Therefore these permutations are automorphisms of $\CH$.
%The aforementioned maps take elements from $\Delta(\ell)$ to $\Delta(\ell)$ and preserve Hermitian matrices. %are automorphisms of the affine Grassmann code
\end{proof}

%\begin{definition}
%    A minor $\mathcal{M}_{A,B}$ with respective row and column sets $A,B$ is maximal if:
%    $\forall \mathcal{M}_{I,J}\in supp(f)$,$ A\nsubseteq I$ or $B\nsubseteq J$.
%\end{definition}
%\\

One of the most important parameters of a linear code is its minimum distance. We state its definition as follows:

\begin{definition}
The \emph{Hamming distance} of the vectors $x = (x_1, x_2, \ldots, x_n)$ and $y = (y_1, y_2, \ldots, y_n)$ is the number of positions in which $x$ and $y$ differ. That is:

$$d(x,y) := \#\{ i \ | \ x_i \neq y_i  \} $$

\end{definition}
For example: let $x = (1010)$ and $y = (1100)$\\
Then $d(x,y) = 2$ because they differ in the second and third position.

\begin{definition}
The \emph{weight} of $x = (x_1, x_2, \ldots, x_n)$ is the number of positions in which $x_i \neq 0$. That is:

$$w(x) := \#\{ i \ | \ x_i \neq 0  \} $$

\end{definition}

Note that the weight of a vector is the same as its distance to the zero vector. That is $w(x) = d(x,0)$. We now state the definition of the minimum distance of a code.

\begin{definition}

Let $C$ be a code. Then the \emph{minimum distance} of $C$ is the minimum number of positions in which any two distinct elements of $C$ differ.

$$d(C) = \min\limits_{x,y \in C} d(x,y). $$

\end{definition}

For linear codes, as is our case, the distance can be calculated in terms of the weights of the vectors.  One of our remarkable findings is that not only that $\CH$ may be considered as a $\F_q$ code, but also that $\CH$ has a much better minimum distance than $\CA$. The distance of affine Grassmann codes (see \cite{BGT}) is  $$d(\CA) = \# GL_\ell(\F_q) = \prod\limits_{i=0}^{\ell-1}q^\ell - q^i.$$ Now we compare with our main result.

\begin{claim}

Suppose that $\ell \geq 2$. Then $$d(\CH)  = q^{\ell^2}- q^{\ell^2-1}- q^{\ell^2-3}.$$ 

\end{claim}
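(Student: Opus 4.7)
The plan is to prove the claim in two stages: exhibit an explicit codeword attaining weight $q^{\ell^2} - q^{\ell^2-1} - q^{\ell^2-3}$, and then show no nonzero codeword has smaller weight.

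For the upper bound, I would take
\[
f(X) = \det_{\{1,2\},\{1,2\}}(X) + X_{1,2} + X_{2,1} \in \Fl.
\]
The coefficients satisfy $f_{I,J}^q = f_{J,I}$, so by Corollary~\ref{col: Fq} the codeword $ev(f)$ is $\F_q$-valued. Writing $a = X_{1,1}$, $b = X_{2,2}$, $z = X_{1,2}$, the equation $f(X) = 0$ on a Hermitian matrix becomes $ab = z^{q+1} - z - z^q$. Using the identity $(z-1)(z^q-1) = z^{q+1} - z - z^q + 1$ and substituting $w = z - 1$, it reduces to $ab = N(w) - 1$, where $N(w) = w^{q+1}$ is the norm $\F_{q^2} \to \F_q$. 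Partitioning $w \in \F_{q^2}$ according to $N(w) = 0$ ($1$ value), $N(w) = 1$ ($q+1$ values), and $N(w) \in \F_q^* \setminus \{1\}$ ($(q-2)(q+1)$ values), and counting pairs $(a,b) \in \F_q^2$ with the required product in each case, one obtains $q(q^2+1) = q^3 + q$ zeros in the $4$-dimensional $\F_q$-slice of $\HM$ carved out by $X_{1,1}, X_{2,2}, X_{1,2}$. Since $f$ is constant in the remaining $\ell^2 - 4$ $\F_q$-dimensional entries of a Hermitian matrix, the total number of zeros is $q(q^2+1)q^{\ell^2-4} = q^{\ell^2-1} + q^{\ell^2-3}$, giving the claimed weight.

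For the lower bound, by Proposition~\ref{prop: Sticht 2}, Lemma~\ref{lem: Fq}, and Corollary~\ref{col: Fq} it suffices to consider $f \in \Fl$ whose coefficients satisfy $f_{I,J}^q = f_{J,I}$. My plan is to use the automorphisms $\X \mapsto A^{(q)T}\X A$ for $A \in GL_\ell(\F_{q^2})$, the translations $\X \mapsto \X + M$ for $M \in \HM$, and the transpose $\X \mapsto \X^T$ to reduce $f$ to a canonical form controlled by the Hermitian rank of its leading minor part, then induct on $\ell$. The base case $\ell = 2$ reduces to a finite classification of $\F_q$-valued linear combinations of the six minors $1, X_{1,1}, X_{2,2}, X_{1,2}+X_{2,1}, \alpha X_{1,2}+\alpha^q X_{2,1}, \det_{\{1,2\},\{1,2\}}(X)$, verified by direct calculation of the possible zero counts. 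For $\ell \geq 3$, slicing $\HM$ by fixing entries outside a distinguished Hermitian principal submatrix reduces to smaller instances handled by the induction hypothesis.

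The main obstacle is the lower bound. While the upper bound follows from an explicit point count, showing that no linear combination of minors of varying orders can produce more than $q^{\ell^2-1} + q^{\ell^2-3}$ zeros requires careful use of the Hermitian constraint $X_{j,i} = X_{i,j}^q$ to control how the minors interact when evaluated on $\HM$. I expect the argument to parallel the determinantal variety techniques of Beelen, Ghorpade, and H\o{}holdt~\cite{BGT} for $\CA$, with Hermitian analogues of determinantal varieties---whose point counts naturally involve the factor $q^2+1$---replacing the classical ingredients used there.
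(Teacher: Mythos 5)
Your upper bound is correct and verifiable: $f = \det_{\{1,2\},\{1,2\}}(X) + X_{1,2} + X_{2,1}$ does evaluate to a self-conjugate codeword of weight $q^{\ell^2} - q^{\ell^2-1} - q^{\ell^2-3}$; your norm-form computation $ab = N(w) - 1$ and the resulting count $q^3 + q$ of zeros in the four-$\F_q$-dimensional slice are right, and the extension by $q^{\ell^2-4}$ follows since $f$ is independent of the remaining entries. The paper uses $f = \det_{\{1,2\},\{1,2\}}(X) + 1$ instead, which falls into the same branch of its classification (Lemma~\ref{lem:22classfier}, the case where $f_0 + f_{1,2}^{q+1} - f_{1,1}f_{2,2} \neq 0$), so the two examples are interchangeable.

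The lower bound, however, is only a sketch, and the sketch elides essentially all of the content where the difficulty lies. Several specific gaps: (1)~The paper does not induct on $\ell$. It inducts on the \emph{size $k$ of the maximal minor} in $\mathrm{supp}(f)$, reducing a general $f$ on $\ell \times \ell$ Hermitian matrices to a bound of the form $wt(f) \geq q^{\ell^2 - k^2} w_{k,k}$. This requires the notion of the \emph{spread} of a minor and a reduction lemma (Lemma~\ref{lem:spread reduction}) showing, via congruence automorphisms $\X \mapsto A^{(q)T}\X A$, that one may always shrink the spread of a maximal minor down to $k$, i.e., assume the maximal minor is principal and confined to a $k \times k$ block. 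Without this, the slicing you describe does not obviously reduce to a smaller instance with the same structure. (2)~The induction in the paper is a two-step recurrence ($w_{K+1,K+1}$ is bounded using both $w_{K,K}$ and $w_{K-1,K-1}$), so both $\ell = 2$ and $\ell = 3$ must be worked out as base cases; the $\ell = 3$ case itself requires a delicate count of invertible Hermitian matrices (Proposition~\ref{prop: invertible hermite}) and a specialization analysis (Lemma~\ref{lem:33classifier}). You mention only $\ell = 2$. (3)~The inductive inequality the paper actually proves is $w_{k,k} \geq q^{k^2} - q^{k^2-1} - q^{k^2-3} + q^{k^2-2k} - 1$, strictly stronger than the stated minimum distance; the extra positive slack $q^{k^2-2k} - 1$ is precisely what is needed to absorb the error terms in the recurrence and make the induction close. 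A direct attempt to prove $w_{k,k} \geq q^{k^2} - q^{k^2-1} - q^{k^2-3}$ by induction without strengthening the hypothesis would fail. Your appeal to ``Hermitian analogues of determinantal varieties'' gestures at the right heuristic but does not supply any of these mechanisms.
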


In the next two sections of the paper we work out a proof of this claim by induction. We use polynomial evaluation and bounds from the fundamental theorem of Algebra to determine $d(\CH).$

\section{Determining $d(\CH)$ for $2 \leq \ell \leq 3$}

Our calculation of $d(\CH)$ depends on mathematical induction. The base cases $\ell = 2$ and $\ell = 3$ are the two cases which best illustrate our proof. We include several examples. Once we establish certain bounds on $wt(ev(f))$ over $\HM$, we then derive respective bounds on $wt(ev(f))$ for the Hermitian matrices of size $\ell +1 \times \ell +1$. We begin with the following lemma counting the number of zeroes of a certain quadratic equation.

\begin{lemma}\label{lem:hyperboliczeroes}
 Let $a,b,\lambda \in \F_q$, where $\lambda\neq 0$. Then
 \begin{itemize}
     \item The equation $(X_1+a)(X_2+b) = 0$ has $2q-1$ solutions over $\F_q$.
     \item The equation $(X_1+a)(X_2+b) = \lambda $ has $q-1$ solutions over $\F_q$.
 \end{itemize}
\end{lemma}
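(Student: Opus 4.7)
The plan is to reduce both equations to counting zeros of a product over a field, which is a standard finite-field exercise, and then handle each case by elementary inclusion--exclusion or a substitution argument. In both cases the affine shifts $a$ and $b$ play no essential role, so my first move will be the change of variables $Y_1 = X_1 + a$, $Y_2 = X_2 + b$, which is a bijection $\F_q^2 \to \F_q^2$ and hence preserves the number of solutions. After this substitution the two equations become $Y_1 Y_2 = 0$ and $Y_1 Y_2 = \lambda$ respectively.

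For the first equation, I would appeal to the fact that $\F_q$ is an integral domain, so $Y_1 Y_2 = 0$ if and only if $Y_1 = 0$ or $Y_2 = 0$. The set $\{(Y_1,Y_2) : Y_1 = 0\}$ contains $q$ pairs and likewise for $\{(Y_1,Y_2) : Y_2 = 0\}$; their intersection is the single point $(0,0)$. By inclusion--exclusion the total number of solutions is $q + q - 1 = 2q-1$, as claimed.

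For the second equation, since $\lambda \neq 0$ any solution must satisfy $Y_1 \neq 0$ and $Y_2 \neq 0$. I would then observe that for each of the $q-1$ nonzero values of $Y_1$, the equation $Y_1 Y_2 = \lambda$ uniquely determines $Y_2 = \lambda Y_1^{-1}$ (which is automatically nonzero). This parametrization gives exactly $q-1$ ordered pairs $(Y_1, Y_2)$, and undoing the shift yields the same count for $(X_1, X_2)$.

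There is essentially no hard step here; the only place to be careful is not double-counting the origin in the first case, and checking that the candidate $Y_2$ produced in the second case is indeed nonzero so that no spurious solutions are introduced or missed. Once these two counts are in hand the lemma follows immediately, and the result will feed into the subsequent induction on $\ell$ by providing the exact number of zeros of the hyperbolic quadric factor that appears after Laplace-expanding a $2\times 2$ Hermitian determinant.
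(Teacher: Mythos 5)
Your proof is correct and follows essentially the same approach as the paper's: the paper argues directly in terms of $X_1+a$ and $X_2+b$ rather than first renaming them $Y_1, Y_2$, but the case split and the counting (inclusion--exclusion for the zero right-hand side, unique $X_2$ for each of $q-1$ admissible $X_1$ when $\lambda\neq 0$) are identical.
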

\begin{proof}
We begin with the case $(X_1+a)(X_2+b) = 0$. In this case then either $(X_1+a)= 0$ or $(X_2+b) = 0$. If $X_1 = -a$, any of the $q$ values for $X_2$ is a solution to the equation. Similarly, for $X_2 = -b$, any of the $q$ values for $X_1$ is a solution to the equation. The solution $X_1 = -a, X_2 = -b$ is counted twice this implies we have $q+q-1 = 2q-1$ total solutions to the equation.

Now we consider $(X_1+a)(X_2+b) = \lambda \neq 0$. If $X_1 = -a$ then the equation becomes $(a-a)(X_2+b) = \lambda$ which has no solution.  If $X_1 = \alpha$  where $\alpha$ is any element of $\F_q$ except $\alpha = -a$, then for there is exactly one value of   $X_2$ (namely $X_2 = \frac{\lambda}{\alpha+a}-b$) such that the equation is satisfied. %We check that 

%$$ (X_1+a)(X_2+b) = \lambda $$ has $X_1 = \alpha, X_2 = \frac{\lambda}{\alpha+a}-b$ as solutions.

%$$ \left(\alpha+a\right)\left(\frac{\lambda}{\alpha+a}-b+b\right) = \lambda $$
%$$ \left(\alpha+a\right)\left(\frac{\lambda}{\alpha+a}\right) = \lambda $$
%$$ \lambda = \lambda.$$

As for any of $q-1$ values $\alpha \neq -a$ for $X_1$, we find exactly one value of $X_2$ such that $(X_1+a)(X_2+b) = \lambda $ is satisfied, it is established that $(X_1+a)(X_2+b) = \lambda $ has $q-1$ solutions. \end{proof}

We shall also need the following lemma on the number of solutions to a particular system of polynomial equations over $\F_{q^2}$.

\begin{lemma}\label{lem:systemsols}
Let $a_1, a_2, \ldots, a_n \in \F_q$ and for $1 \leq i,j \leq n$ let $b_{i,j} \in \F_{q^2}$. Then the system of polynomial equations given by $$X_i^{q+1} = a_i, 1 \leq i \leq n $$
$$X_iX_j^q = b_{i,j} $$ has at most $q+1$ solutions.
\end{lemma}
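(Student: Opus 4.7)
The plan is to exploit the fact that the norm map $N\colon \F_{q^2}^* \to \F_q^*$ given by $N(x) = x^{q+1}$ is surjective with kernel of order $q+1$. Consequently, for any $a_i$, the equation $X_i^{q+1} = a_i$ has $0$, $1$, or exactly $q+1$ solutions in $\F_{q^2}$ according as $a_i \notin \F_q$, $a_i = 0$, or $a_i \in \F_q^*$. In the first situation the whole system is inconsistent and the bound $0 \leq q+1$ is trivial, so we may assume every $a_i \in \F_q$.

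Next I would split on whether every $a_i$ vanishes. If $a_i = 0$ for all $i$, then the norm equations force $X_i = 0$ for all $i$, giving at most one solution (and only when all $b_{i,j} = 0$). Otherwise, fix an index $k$ with $a_k \in \F_q^*$; there are at most $q+1$ possible values of $X_k$, and crucially every such $X_k$ is nonzero. For any other index $j$, the equation $X_k X_j^q = b_{k,j}$ now uniquely determines $X_j^q$ by dividing by $X_k$, and since the Frobenius $y \mapsto y^q$ is a bijection on $\F_{q^2}$, this in turn uniquely determines $X_j$. Thus each of the $\leq q+1$ values of $X_k$ gives rise to at most one candidate tuple $(X_1, \ldots, X_n)$, establishing the upper bound.

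The argument is genuinely an \emph{upper} bound argument, so I do not need to check that the candidates actually satisfy the remaining equations $X_j^{q+1} = a_j$ or $X_i X_j^q = b_{i,j}$ for $i, j \neq k$ — any failure only reduces the solution count further. The main conceptual point, and the only place where anything could go wrong, is the dichotomy between \emph{all $a_i = 0$} and \emph{some $a_k \neq 0$}; without the latter we cannot invert $X_k$ to propagate uniqueness to the other coordinates. Once this case split is set up, the rest is a short direct computation, and the bound is seen to be tight in the single-variable case $n=1$ with $a_1 \in \F_q^*$.
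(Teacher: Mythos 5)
Your proof is correct and follows essentially the same strategy as the paper: reduce to the case where some $a_k \neq 0$, observe that the norm equation $X_k^{q+1} = a_k$ admits at most $q+1$ solutions, and then use the mixed equations involving index $k$ to propagate a unique value to each remaining coordinate. The only cosmetic difference is that the paper solves $X_i = b_{i,k}/X_k^q$ directly from $X_i X_k^q = b_{i,k}$, while you solve $X_j^q = b_{k,j}/X_k$ from $X_k X_j^q = b_{k,j}$ and then invert the Frobenius; both are valid and equally simple.
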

\begin{proof}

If the $a_i$'s satisfy $a_i =  0,$ then the only possible solution is $X_i = 0.$ If there is some $a_s \neq 0$, then there are at most $q+1$ values for $X_s$ which satisfy $X_s^{q+1} = a_s$ In this case, for each of the $q+1$ solutions of $X_s^{q+1} = a_s$, there is at most one value (namely $X_s = c_s$, $X_i = \frac{b_{i,s}}{c_s^q}$ which satisfies the equation $X_i X_s^q = b_{i,s}$. Therefore there are at most $q+1$ solutions to the system of equations. \end{proof}

Now we define the support of a combination of minors $f \in \Fl$ and the concept of a maximal ``term". This concept will be akin to the degree in order to make the induction proof for the minimum distance.

\begin{definition}
Let $f \in \Fl$, where $$f = \sum\limits_{I,J \subseteq [\ell], \# I = \# J} f_{I,J} det_{I,J}(X).$$ The support of $f$ is defined as $$supp(f) :=\{ det_{I,J}(X) \ | \ f_{I,J} \neq 0  \}. $$

\end{definition}
Example: Let $\X = \begin{bmatrix}
    X_1&X_2\\X_2^q&X_3
\end{bmatrix}$\\
If $f = 1+X_1X_3-X_2^{q+1}$, then $supp(f) = \{ det_{\emptyset, \emptyset}(X), det_{\{1,2\},\{1,2\}}(X)\}$
\begin{definition}
Let $f \in \Fl$, where $$f = \sum\limits_{I,J \subseteq [\ell], \# I = \# J} f_{I,J} det_{I,J}(X).$$ We say a minor $det_{I,J} \in supp(f)$ is \emph{maximal} if and only if for any other minor $det_{I',J'} \in supp(f)$ we have that $I \not\subseteq I'$ or $J \not\subseteq J'$. That is the columns and rows of $det_{I,J}$ are not contained in the rows and columns of any other determinant in $supp(f)$.
\end{definition}
Example: Let $X = \begin{bmatrix}
    X_1&X_2&X_3\\X_2^q&X_4&X_5\\X_3^q&X_5^q&X_6
\end{bmatrix}$ and $$f = det_{\emptyset, \emptyset}(X) + det_{\{1\}, \{2\} }(X) det_{\{1,2\},\{1,2\}}(X)+ det_{\{1,2\},\{2,3\}}(X)\},$$

and $$supp(f) = \{ det_{\emptyset, \emptyset}(X),det_{\{1\},\{2\}}(X), det_{\{1,2\},\{1,2\}}(X), det_{\{1,2\},\{2,3\}}(X)\}.$$ The minors $det_{\{1,2\},\{1,2\}}(X)$ and $det_{\{1,2\},\{2,3\}}(X)$ are maximal. However, the minors $det_{\emptyset, \emptyset}(X)$ and $det_{\{1\},\{2\}}(X)$ are not maximal, because their row and column sets are contained in the row and column sets of other minors. Moreover, we remark that $f \neq f_{conj}$, because  $det_{\{2,3\},\{1,2\}}(X) \notin supp(f).$

The following lemma relating translations and self--conjugate combinations gives insight on the structure of $f \in \Fl.$ We shall use the automorphisms of the form $f(X) \mapsto f(X +H), H \in \HM$ to simplify the calculation of $wt(ev(f))$. Now we prove that that under a certain translation, we may consider $f$ has no ``terms" of second highest degree.

\begin{lemma}\label{lem:minorclear}

Let $f \in \Fl$ where $f = f_{conj}$. Suppose $det_{I,I}$ is a maximal minor of size $k$ in $f$. There exists a matrix $H \in \HM$ such that $f(\X + H)$ has no $\# I -1 \times \# I -1$ minors in its support whose rows and columns are contained in $I$. %The minors $det_{I \setminus \{i\}, I \setminus \{j\}}(\X +H)$ will also be a linear combination of $det_{I \setminus \{i\}, I \setminus \{j\}}(\X )$ and other minors of smaller size.

\end{lemma}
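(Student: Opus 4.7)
The plan is to search for $H \in \HM$ supported on the block $I \times I$ (i.e., $H_{i,j} = 0$ whenever $(i,j) \notin I \times I$) and to solve for its entries by matching coefficients. Writing $k = \#I$ and applying the Laplace expansion
$$det_{A,B}(\X + H) = \sum_{\substack{S \subseteq A,\, T \subseteq B \\ \#S = \#T}} \varepsilon(A,B,S,T)\, det_{S,T}(\X)\, det_{A \setminus S,\, B \setminus T}(H)$$
to every minor $det_{A,B} \in supp(f)$, the coefficient of a fixed $(k-1)$-minor $det_{I \setminus \{s\},\, I \setminus \{t\}}(\X)$ in $f(\X + H)$ becomes a sum of contributions indexed by those $(A, B) \in supp(f)$ with $I \setminus \{s\} \subseteq A$ and $I \setminus \{t\} \subseteq B$.

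\textbf{Reduction to two contributors and solution.} With $H$ supported on $I \times I$, the factor $det_{A \setminus S,\, B \setminus T}(H)$ vanishes unless $A \setminus S$ and $B \setminus T$ lie entirely in $I$, which forces $A, B \subseteq I$. Combined with $\#A = \#B \geq k - 1$ and $\#I = k$, only two cases remain: $(A, B) = (I \setminus \{s\},\, I \setminus \{t\})$, contributing its own coefficient $a_{I \setminus \{s\},\, I \setminus \{t\}}$, and $(A, B) = (I, I)$, contributing $a_{I,I}\,\varepsilon_{s,t}\, H_{s,t}$ for a Laplace sign $\varepsilon_{s,t} \in \{\pm 1\}$. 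The support restriction on $H$ automatically rules out the contributions from larger minors, so we only need $a_{I,I} \neq 0$, which follows from $det_{I,I} \in supp(f)$. Setting the sum to zero gives
$$H_{s,t} = -\varepsilon_{s,t}\cdot \frac{a_{I \setminus \{s\},\, I \setminus \{t\}}}{a_{I,I}},$$
which determines each $H_{s,t}$ uniquely.

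\textbf{Hermitianity and main obstacle.} It remains to verify $H \in \HM$. The Laplace sign is symmetric, $\varepsilon_{s,t} = \varepsilon_{t,s}$, since it depends only on the ranks of $s$ and $t$ within $I$. The assumption $f = f_{conj}$ gives $a_{I,I} \in \F_q$ and $\overline{a_{I \setminus \{t\},\, I \setminus \{s\}}} = a_{I \setminus \{s\},\, I \setminus \{t\}}$, so substituting into the formula above yields $\overline{H_{t,s}} = H_{s,t}$ and $H_{s,s} \in \F_q$; together with the imposed $I \times I$ support, this places $H \in \HM$. The delicate step is the bookkeeping in the Laplace expansion: one must simultaneously rule out contributions from larger minors $det_{A,B} \in supp(f)$ and from off-$I$ entries of $H$. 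Both collapse precisely because we restrict the support of $H$ to $I \times I$; an unrestricted $H$ would introduce higher-degree-in-$H$ corrections and nonlocal couplings among the $H_{s,t}$, and would require the maximality hypothesis in a considerably more subtle way.
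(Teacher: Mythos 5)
Your proof follows the same strategy as the paper: translate $\X$ by an $H\in\HM$ supported on the block $I\times I$, with entries $H_{s,t}$ chosen to cancel the coefficient of each $(k-1)\times(k-1)$ minor inside $I$, and invoke $f=f_{conj}$ to verify that the resulting $H$ is Hermitian. Your write-up is in fact tighter than the paper's: you explicitly record that restricting the support of $H$ to $I\times I$ makes $\det_{A\setminus S,\,B\setminus T}(H)$ vanish for any contributing $(A,B)$ not contained in $I\times I$, which is exactly what rules out interference from other minors of $supp(f)$ after the translation — a point the paper's proof passes over silently.
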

\begin{proof}
Let $H \in \HM$. If $det_{I,I}(\X + H)$ expanded is via minor expansions along a row $i \in I$ it may be seen that $det_{I,I}(\X + H)$ is a combination of $det_{I,I}(\X)$, $det_{I \setminus \{i\}, I \setminus \{j\}}(\X)$ and other minors of smaller size. However, the minor given by rows $I \setminus \{i\}$ and columns  $I \setminus \{j\}$ appears with coefficient $(-1)^{i+j+1}H_{i,j}$.

Suppose $f = f_{conj}$ and that $det_{I,I}$ is a maximal minor. Note that as $f$ is self--conjugate, for any $i,j \in I$, $f_{I\setminus \{ i \},I \setminus \{j\}} = f_{I\setminus \{ j \},I \setminus \{i\}}^q$. Without loss of generality, we assume $f_{I,I} = 1$. Suppose $H_f \in \HM$ where $h_{i,j} = (-1)^{i+j-1}f_{I \setminus\{i\}, J \setminus \{j\}}.$

Because $f$ is self--conjugate, the matrix is Hermitian. Clearly the expansion of $det_{I,J}(\X+H) + det_{J,I}(\X + H)$ has $det_{I\setminus \{i \}, J \setminus \{j\}}(\X)$ in its support, appearing with coefficient $-f_{I \setminus\{i\}, J \setminus \{j\}}$.

After the expansion of $f(\X +H) $ we see that the $\# I -1 \times \# I -1$ minors from $det_{I,I}(\X + H)$ cancel the $\# I -1 \times \# I -1$ minors in $f$, which implies $f(\X +H)$ has no $\# I -1 \times \#I -1$ minors whose rows and columns are contained in $I$.   \end{proof}

\subsection{Calculating $d(C^{\mathbb{H}}(2))$}

For the case $\ell = 2$, we shall consider a generic matrix $\X \in \HM $ as a matrix of the form %then we shall prove that   $d_2 = (q^{{\ell^2}-3})(q^3-q^2-1) = q^4-q^3-q$\\
$$\X = \begin{bmatrix}
    X_1&X_2\\X_2^q&X_3
    \end{bmatrix},$$ \\
where $X_1, X_3 \in \F_q, X_2 \in \F_{q^2}$    

In this case a function $f \in \Fl$ is of the form
$$f = f_0 + f_{1,1}X_1 + f_{1,2}X_2 + f_{2,1}X_2^q+ f_{2,2}X_3+ f_{\{1,2\},\{1,2\}}(X_1X_3-X_2^{q+1}), f_{i,j} \in \F_{q^2}.$$
More specifically,  as we assume $f = f_{conj}$, Corollary \ref{col: Fq} implies $f$ is of the form $$f = f_0 + f_{1,1}X_1 + f_{1,2}X_2 + f_{1,2}^qX_2^q+ f_{2,2}X_3+ f_{\{1,2\},\{1,2\}}(X_1X_3-X_2^{q+1})$$
Where $f_0, f_{1,1}, f_{2,2},f_{\{1,2\},\{1,2\}} \in \F_q$ and $f_{1,2} \in \F_{q^2}$.

We split our proof in two cases: $f_{\{1,2\},\{1,2\}} = 0$ or $f_{\{1,2\},\{1,2\}} \neq 0$.
\begin{lemma}
Let $\ell = 2$. Suppose $f \in \Fl$ where $f$ is of the form
$$f = f_0 + f_{1,1}X_1 + f_{1,2}X_2 + f_{2,1}X_2^q+ f_{2,2}X_3.$$
Then $wt(ev(f)) \geq q^4-q^3$

\end{lemma}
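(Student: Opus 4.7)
The plan is to realize $f$ restricted to $\HM$ as an $\F_q$--affine map into $\F_q^2$ and then bound its zero set via a rank count. First I would fix an $\F_q$--basis $\{1,\alpha\}$ of $\F_{q^2}$ with $\alpha \notin \F_q$, so that in particular $\alpha^q \neq \alpha$. This identifies $\HM$ with $\F_q^4$ through the coordinates $(X_1, X_3, Y_1, Y_2)$, where $X_2 = Y_1 + \alpha Y_2$ and $X_2^q = Y_1 + \alpha^q Y_2$. Substituting into the formula for $f$ gives
$$f(\X) = f_0 + f_{1,1}X_1 + f_{2,2}X_3 + (f_{1,2}+f_{2,1})Y_1 + (\alpha f_{1,2}+\alpha^q f_{2,1})Y_2,$$
an $\F_{q^2}$--affine form in the four $\F_q$--variables $(X_1, X_3, Y_1, Y_2)$.

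Next I would view $f$ as an $\F_q$--affine map $\F_q^4 \to \F_q^2$ via the identification $\F_{q^2} \cong \F_q^2$. Letting $r \in \{0,1,2\}$ denote the rank of its linear part, the fiber $f^{-1}(0)$ is either empty or an affine subspace of dimension $4-r$ over $\F_q$, hence has $0$ or $q^{4-r}$ points. Whenever $r \geq 1$, there are at most $q^3$ zeros on $\HM$, which yields $wt(ev(f)) \geq q^4 - q^3$ immediately.

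The remaining case $r = 0$ requires showing $f$ must be a constant. The vanishing of each of the four $\F_{q^2}$--coefficients gives $f_{1,1} = 0$, $f_{2,2} = 0$, $f_{1,2}+f_{2,1} = 0$, and $\alpha f_{1,2}+\alpha^q f_{2,1} = 0$; the last two combine to $(\alpha - \alpha^q)f_{1,2} = 0$, and since $\alpha \neq \alpha^q$ this forces $f_{1,2} = f_{2,1} = 0$. Hence $f = f_0$; either $f_0 = 0$, in which case $f = 0$ lies outside the intended scope of the statement, or else the weight is $q^4 \geq q^4 - q^3$. The one delicate step is precisely this rank--zero verification, where $\alpha \neq \alpha^q$ is what prevents the $Y_1$ and $Y_2$ equations from being redundant and forces the two $X_2$--coefficients to vanish individually rather than only in a combined way.
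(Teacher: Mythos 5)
Your proof is correct, but it takes a genuinely different route from the paper. The paper argues directly with polynomial degree bounds: if $f_{1,1}\neq 0$ (resp.\ $f_{2,2}\neq 0$) it fixes $X_2, X_3$ (resp.\ $X_2, X_1$) arbitrarily and notes that $f$ is nonconstant of degree one in $X_1$ (resp.\ $X_3$) over $\F_q$, giving at most $q^3$ zeros; if $f_{1,1}=f_{2,2}=0$ it fixes $X_1, X_3$ and observes that what remains, $f_{2,1}X_2^q + f_{1,2}X_2 + f_0$, is a polynomial of degree at most $q$ in $X_2$ over the field $\F_{q^2}$, hence has at most $q$ roots, again giving at most $q^3$ zeros. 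Your approach instead linearizes everything at once: you split $X_2 = Y_1 + \alpha Y_2$ so that $f|_{\HM}$ becomes an $\F_q$--affine map $\F_q^4 \to \F_{q^2}\cong\F_q^2$, and then the fiber over $0$ is empty or a coset of the kernel of the linear part, so has at most $q^{4-r}$ points where $r$ is its rank; the only thing to rule out is $r=0$, which you correctly show forces $f$ to be a constant because $\alpha\neq\alpha^q$ makes the two $X_2$--coefficient equations independent. Both arguments implicitly assume $f$ is not the zero polynomial, as they must. What your version buys is uniformity: one rank computation handles all the variables symmetrically instead of a case split on $f_{1,1}, f_{2,2}$, and it makes transparent exactly why $q^3$ is the right threshold (rank at least one). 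What the paper's version buys is consistency with the rest of the paper's toolkit — degree bounds and the fundamental theorem of algebra applied variable by variable — which is the same mechanism reused in the $2\times 2$--minor case (Lemma~\ref{lem:wt22}) and in the inductive step, so its casework is not wasted effort there.
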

\begin{proof}

To determine $w(f)$ we count the solutions to $$f = f_0 + f_{1,1}X_1 + f_{1,2}X_2 + f_{2,1}X_2^q+ f_{2,2}X_3 = 0.$$ Suppose that $f_{1,1} \neq 0$. Then, for any of the $q$ values of $X_3$ and any of the $q^2$ values of $X_2$, there are at least $q-1$ values of $X_1$ which make $f \neq 0.$ The case $f_{2,2} \neq 0 $ is similar.

Now we consider $f_{1,1} = f_{2,2} = 0$. As $X_1, X_3$ represent elements of $\F_q$, we may replace them with arbitrary values from $\F_q$. In that case, the equation $f = 0$ becomes $f_{1,2}^qX_2^q + f_{1,2} X_2 = -f_0$. As this is a polynomial of degree at most $q$ over $\F_{q^2}$, it has at most $q$ zeros for each of the $q^2$ total possible values of $X_1$ and $X_3$. Consequently there are at most $q^3$ matrices in $\HM$ such that $f(H) = 0$ and $w(f) \geq q^4-q^3$. 
%This implies that $d_2 \leq q^4-q^3.$
\end{proof}

Now consider $f \in \Fl$ where the $2 \times 2$ minor of $\X$ appears in $supp(f)$. Without loss of generality we assume $f_{\{1,2\},\{1,2\}} = 1$
\begin{lemma}\label{lem:wt22}
Let $\ell = 2$. Suppose $f \in \Fl$ where $f$ is of the form
$$f = f_0 + f_{1,1}X_1 + f_{1,2}X_2 + f_{2,1}X_2^q+ f_{2,2}X_3+ X_1X_3-X_2^{q+1}.$$
Then $wt(ev(f)) \geq q^4-q^3-q$

\end{lemma}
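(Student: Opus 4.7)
The plan is to first exploit the translation automorphism $\X \mapsto \X + H$ to reduce $f$ to a very simple form, and then count zeros directly. Since $\det_{\{1,2\},\{1,2\}}(\X) = X_1 X_3 - X_2^{q+1}$ is the unique (hence maximal) minor of size $2$ in $\mathrm{supp}(f)$, and since we may assume $f = f_{conj}$ (by the reduction via Proposition \ref{prop: Sticht 2} and Corollary \ref{col: Fq}), Lemma \ref{lem:minorclear} produces $H \in \HM$ so that $f(\X+H)$ has no $1 \times 1$ minors in its support. Because translation by a Hermitian matrix is an automorphism of $\CH$, it preserves weight, so I may replace $f$ by $f_0' + X_1 X_3 - X_2^{q+1}$ for some $f_0' \in \F_q$ and bound the number $N$ of zeros of $X_1 X_3 - X_2^{q+1} = -f_0'$ on $\HM$.

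Set $\lambda = -f_0' \in \F_q$ and split on the value of $X_2$. For $X_2 = 0$, the equation $X_1 X_3 = \lambda$ is handled directly by Lemma \ref{lem:hyperboliczeroes}: it has $2q-1$ solutions when $\lambda = 0$ and $q-1$ when $\lambda \neq 0$. For $X_2 \neq 0$, I use the fact that the norm map $X_2 \mapsto X_2^{q+1}$ sends $\F_{q^2}^*$ onto $\F_q^*$ with every fiber of size exactly $q+1$. Thus I parameterize by $\mu = X_2^{q+1} \in \F_q^*$ and, for each such $\mu$, count the $q+1$ preimages in $X_2$ together with the solutions $(X_1,X_3)$ of $X_1 X_3 = \mu + \lambda$, again via Lemma \ref{lem:hyperboliczeroes}.

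The worst (and hence decisive) case is $\lambda \neq 0$: exactly one choice $\mu = -\lambda$ makes the right-hand side vanish and contributes $(q+1)(2q-1)$ zeros, while the remaining $q-2$ values of $\mu$ each contribute $(q+1)(q-1)$ zeros, and $X_2 = 0$ adds $q-1$. Summing yields
\[
N \;=\; (q+1)(2q-1) + (q-2)(q+1)(q-1) + (q-1) \;=\; q^3 + q.
\]
When $\lambda = 0$ an analogous tally gives the strictly smaller $N = q^3 - q^2 + q$. In either case $N \leq q^3 + q$, so $wt(ev(f)) = q^4 - N \geq q^4 - q^3 - q$, as required.

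The only genuine subtlety is keeping the fibers of the norm map straight, since a miscount by one would change the final constant; everything else is a mechanical combination of Lemmas \ref{lem:minorclear} and \ref{lem:hyperboliczeroes}. I expect the same translate-then-count strategy, with the $\ell\times\ell$ maximal minor replacing the role of $X_1 X_3 - X_2^{q+1}$, to drive the inductive step in the next section.
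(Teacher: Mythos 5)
Your proof is correct, and the arithmetic checks out ($N = q^3+q$ when $\lambda\neq 0$, $N = q^3-q^2+q$ when $\lambda=0$), but it travels a different road than the paper does. The paper does not translate $X_2$ at all: it moves the $X_1,X_3$ terms to one side, completes the product to write $(X_1+f_{2,2})(X_3+f_{1,1})=P(X_2)$ with $P$ of degree $q+1$, and then bounds the number of roots $\#S\leq q+1$ purely by the degree of $P$ over $\F_{q^2}$. That degree bound, fed into Lemma \ref{lem:hyperboliczeroes}, yields $N\leq (q+1)(2q-1)+(q^2-q-1)(q-1)=q^3+q$ without ever knowing the exact value of $\#S$. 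You instead invoke Lemma \ref{lem:minorclear} to translate away all the degree-one minors and land directly on $X_1X_3-X_2^{q+1}=\lambda$, then count exactly using the fact that the norm $X_2\mapsto X_2^{q+1}$ is a $(q+1)$-to-one surjection from $\F_{q^2}^*$ onto $\F_q^*$. In effect you are proving the sharper Lemma \ref{lem:22classfier} (which the paper defers to just after this lemma) and reading off the bound as a corollary. The trade-off is that Lemma \ref{lem:minorclear} requires $f=f_{conj}$ to produce a Hermitian $H$, so your argument proves the bound only for self-conjugate $f$; you correctly flag this and justify it through Proposition \ref{prop: Sticht 2} and Corollary \ref{col: Fq}. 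This restriction is harmless for the main theorem (and in fact the paper's own proof silently uses $f_{1,1},f_{2,2}\in\F_q$ when it invokes Lemma \ref{lem:hyperboliczeroes} and writes $f_{1,2}^q$ for $f_{2,1}$, so it is also implicitly working with self-conjugate $f$), but it is worth noting that the paper's degree-bound version survives with weaker hypotheses. Your closing prediction that the same translate-then-count strategy drives the induction is partly right: Lemma \ref{lem:minorclear3} does exactly this translation for $\ell=3$, but the general inductive step in Lemma \ref{lem: induction} proceeds by column specialization rather than explicit norm-fiber counting.
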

\begin{proof}

As in the previous case we count the number of solutions to 
$$f = f_0 + f_{1,1}X_1 + f_{1,2}X_2 + f_{2,1}X_2^q+ f_{2,2}X_3+ X_1X_3-X_2^{q+1} = 0.$$

We move the terms with $X_1, X_3$ to one side and obtain:

$$f_{1,1}X_1 + f_{2,2}X_3 + X_1X_3 = X_2^{q+1} - f_{1,2}^qX_2^q -f_{1,2}X_2 -f_0.$$

Now we add $f_{1,1}f_{2,2}$ to both sides:

$$f_{1,1}f_{2,2} + f_{1,1}X_1 + f_{2,2}X_3 + X_1X_3 = X_2^{q+1} - f_{1,2}^qX_2^q -f_{1,2}X_2 -f_0 +f_{1,1}f_{2,2}.$$

The left hand side of the equation factors as:

$$(X_1+f_{2,2})(X_3+f_{1,1}) = X_2^{q+1} - f_{1,2}^qX_2^q -f_{1,2}X_2 -f_0 +f_{1,1}f_{2,2}.$$

The right hand side of the equation is an univariate polynomial in $X_2$ of degree $q+1$. Denote by $$P(X_2) =X_2^{q+1} - f_{1,2}^qX_2^q -f_{1,2}X_2 -f_0 +f_{1,1}f_{2,2}.$$ Let $S = \{ \lambda \in \F_{q^2} | P(\lambda) = 0\}$ denote the set of zeroes of $P(X_2)$. Note that $\# S \leq q+1$. Let $\alpha \in S.$
In this case $P(\alpha) = 0$. Lemma \ref{lem:hyperboliczeroes} implies that there are $2q-1$ values of $X_1$ and $X_3$ such that the equation
$$(X_1+f_{2,2})(X_3+f_{1,1}) = P(\alpha)$$ is satisfied. This implies both sides are $0$ for exactly $\#S(2q-1)$ values.

Now assume $\alpha \in \F_{q^2} \setminus S.$ In this case Lemma \ref{lem:hyperboliczeroes} implies that there are $q-1$ values of $X_1$ and $X_3$ such that the equation
$$(X_1+f_{2,2})(X_3+f_{1,1}) = P(\alpha)$$ is satisfied. This implies there are $(q^2-\# S)(q-1)$ solutions to the equation where $P(\alpha) \neq 0.$

Therefore there are $$\# S (2q-1) + (q^2-\#S)(q-1)$$ elements of $\HM$ such that $f = 0.$ As $\# S \leq q+1$, we have that $f$ has at most $$(q+1)(2q-1) + (q^2-q-1)(q-1) = q^3+q$$
Consequently, $w(f) \geq q^4-q^3-q$.
\end{proof}

It is very useful to classify the codewords in $\CA$ for $\ell = 2$ where the $2 \times 2$ determinant is in $supp(f)$.

\begin{lemma}\label{lem:22classfier}
Let $\ell = 2$. Suppose $f \in \Fl$ where $f$ is of the form
$$f = f_0 + f_{1,1}X_1 + f_{1,2}X_2 + f_{1,2}^qX_2^q+ f_{2,2}X_3+ X_1X_3-X_2^{q+1}.$$ The following statements are true:

If $f_0 + f_{1,2}^{q+1}-f_{1,1}f_{2,2} = 0 $ then  $wt(f) = q^4 -q^3 +q^2-q.$

If $f_0 + f_{1,2}^{q+1}-f_{1,1}f_{2,2} \neq 0 $ then  $wt(f) = q^4-q^3-q$.

\end{lemma}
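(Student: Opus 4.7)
The plan is to refine the counting argument from Lemma \ref{lem:wt22} by determining $\#S$ exactly, rather than merely bounding it by $q+1$. Recall that proof reduced the count of zeros of $f$ to solutions of
$$(X_1+f_{2,2})(X_3+f_{1,1}) = P(X_2),$$
where $P(X_2) = X_2^{q+1} - f_{1,2}^q X_2^q - f_{1,2} X_2 - f_0 + f_{1,1}f_{2,2}$ and $S = \{X_2 \in \F_{q^2} : P(X_2) = 0\}$. The key observation I would make is that $P$ is a translate of a norm form. Writing $N(y) = y \cdot y^q = y^{q+1}$ for the norm $\F_{q^2} \to \F_q$, a direct expansion gives
$$N(X_2 - f_{1,2}^q) = (X_2 - f_{1,2}^q)(X_2^q - f_{1,2}) = X_2^{q+1} - f_{1,2} X_2 - f_{1,2}^q X_2^q + f_{1,2}^{q+1},$$
so $P(X_2) = N(X_2 - f_{1,2}^q) - c$, where $c := f_0 + f_{1,2}^{q+1} - f_{1,1}f_{2,2}$ is precisely the quantity appearing in the statement.

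Consequently $S$ is the preimage of $c$ under the shifted norm $X_2 \mapsto N(X_2 - f_{1,2}^q)$. Since $N : \F_{q^2}^* \to \F_q^*$ is surjective with every fiber of size exactly $q+1$ (a standard fact from the cyclic structure of $\F_{q^2}^*$), this pins down $\#S$ exactly: $\#S = 1$ when $c = 0$, with the unique solution $X_2 = f_{1,2}^q$, and $\#S = q+1$ when $c \neq 0$.

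To finish, I substitute into the count already established in the proof of Lemma \ref{lem:wt22}, namely
$$\#\{(X_1,X_2,X_3) : f(X_1,X_2,X_3) = 0\} = \#S \cdot (2q-1) + (q^2 - \#S)(q-1) = q \cdot \#S + q^3 - q^2.$$
Plugging in $\#S = 1$ yields $q^3 - q^2 + q$ zeros, giving $wt(f) = q^4 - q^3 + q^2 - q$, and plugging in $\#S = q+1$ yields $q^3 + q$ zeros, giving $wt(f) = q^4 - q^3 - q$, as claimed.

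I do not anticipate a real obstacle. The factorization $(X_1+f_{2,2})(X_3+f_{1,1}) = P(X_2)$ and Lemma \ref{lem:hyperboliczeroes} already do the combinatorial heavy lifting; the only new ingredient is recognizing $P$ as a shifted norm, after which the uniform-fiber property of $N$ converts the earlier inequality $\#S \leq q+1$ into the exact dichotomy needed for the weight formula.
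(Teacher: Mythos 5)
Your proposal is correct and follows essentially the same route as the paper: both reduce to the factorization $(X_1+f_{2,2})(X_3+f_{1,1})=P(X_2)$ from Lemma~\ref{lem:wt22}, both shift $X_2$ by $f_{1,2}^q$ to expose $P$ as a pure norm minus a constant, and both then split on whether that constant (an element of $\F_q$) is zero, using that a nonzero element of $\F_q$ has exactly $q+1$ norm preimages in $\F_{q^2}$ while $0$ has exactly one. Your phrasing in terms of recognizing $P$ as a shifted norm form is a slightly cleaner packaging of the paper's explicit change of variables $X_2=T_2-f_{1,2}^q$, and your constant $c=f_0+f_{1,2}^{q+1}-f_{1,1}f_{2,2}$ matches the lemma statement directly (the paper's written derivation carries a sign slip in the intermediate expression for $P$ that your version avoids), but the mathematical content is identical.
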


\begin{proof}

Suppose $f$ is of the form $$f = f_0 + f_{1,1}X_1 + f_{1,2}X_2 + f_{1,2}^qX_2^q+ f_{2,2}X_3+ X_1X_3-X_2^{q+1}.$$ As in the proof of Lemma \ref{lem:wt22}, we write $f$ as $$f = (X_1 +f_{2,2})(X_3 +f_{1,1}) + P(X_2) $$ where $$P(X_2) = X_2^{q+1} + f_{1,2}^qX_{2}^q +f_{1,2}X_{2} +f_0 -f_{1,1}f_{2,2} $$ is a polynomial of degree $q+1$.

%ltiplications, we obtain:
%$$f = f_0 + f_{1,1}T_1-f_{1,1}f_{2,2} + f_{1,2}T_2 + f_{1,2}^{q+1} + f_{1,2}^qT_2^q + f_{1,2}^{q+1}+ f_{2,2}T_3 + T_1T_3 - f_{2,2}T_3 -f_{1,1}T_1 - T_2^{q+1} - f_{1,2}T_2 -f_{1,2}^qT_2^q + f_{1,2}^{q+1}.$$

%Cancelling out like terms we obtain:

%$$f = f_0+  f_{1,1}f_{2,2} + f_{1,2}^{q+1} + T_1T_3 - T_2^{q+1}.$$
Now we shall change the variable $X_2$ to the variable $T_2$ where $X_2 = T_2 - f_{1,2}^q$. In this case $$ P(T_2- f_{1,2}^q) = (T_2 - f_{1,2}^q)^{q+1} + f_{1,2}^q(T_2 - f_{1,2}^q)^q +f_{1,2}(T_2 - f_{1,2}^q) +f_0 -f_{1,1}f_{2,2}.$$

Expanding and eliminating like terms we obtain

$$P(T_2- f_{1,2}^q) = T_2^{q+1} - f_{1,2}^{q+1}+f_0 - f_{1,1}f_{2,2}. $$
Note that if $f_{1,2}^{q+1}-f_0 + f_{1,1}f_{2,2} = 0$, then $P(T_2 - f_{1,2}^q)$ has exactly one zero, namely $T_2 = f_{1,2}^q$.

Since $f = f_{conj}$ we assume $f_0, f_{1,1}, f_{2,2}$ are in $\F_q$. As $f_{1,2}^{q+1}-f_0 + f_{1,1}f_{2,2} \neq 0$ it assumes values over $\F_q$. Hence $P(T_2 - f_{1,2}^q)$ has $q+1$ zeroes over $\F_{q^2}$. Note that the number of zeroes of $P(T_2 - f_{1,2}^q)$ is precisely the same number of solutions to $P(X_2) = 0$

When $P(X_2) = 0$, there are $2q-1$ values of $X_1$ and $X_3$ which make $f = 0$. When $P(X_2) \neq 0$, there are $q-1$ values of $X_1$ and $X_3$ which make $f = 0$.

Therefore, $f = 0$ for $2q-1 + (q^2-1)(q-1) = q^3 - q^2 -q +1 +2q -1 =q^3-q^2+q$ and $wt(f) = q^4 -q^3 +q^2-q.$

If $f_{1,2}^{q+1}-f_0 + f_{1,1}f_{2,2} \neq 0$, then there are $q+1$ values of  $X_2$ which make $P(X_2) = 0$. There are $2q-1$ values of $X_1$ and $X_3$ which make $f = 0$. When $P(X_2)  \neq 0$, there are $q-1$ values of $X_1$ and $X_3$ which make the equation true. Therefore $f = 0$ for $(2q-1)(q+1) + (q^2-q-1)(q-1) = q^3 +q $ and $wt(f) = q^4 -q^3 -q.$

\end{proof}

\subsection{Calculating $d(C^{\mathbb{H}}(3))$}

The technique used in \cite{BGT} to find the minimum distance of $\CA$ was to specialize from the $\ell \times \ell$ case down to the $(\ell-1) \times (\ell -1 )$ case. As the matrices in the affine Grassmann code are generic, one can perform a partial evaluation on any row and any column while still preserving the structure of the code $\CA$. In the case of Hermitian matrices, the Hermitian property of the matrices must be preserved. This means that a partial evaluation on a column also fixes the corresponding row. In order to refine the concept of the size of a minor and to simplify our induction proof we introduce the following definition.

\begin{definition}
Let $I, J \subseteq [\ell]$. We define the \emph{spread} of the minor $X^{I,J}$ as the set $I \cup J$. 

\end{definition}

 Consider the matrix $X = \begin{bmatrix}
    X_1&X_2& Y_1\\X_2^q&X_3 & Y_2\\ Y_1^q&Y_2^q&Y_3
    \end{bmatrix}$ and the minor given by rows $\{1,2\}$ and columns $\{2,3\}$. That is the minor $f = det_{\{1,2\},\{2,3\}}(X) =  \begin{vmatrix}
    X_2& Y_1\\X_3 & Y_2
    \end{vmatrix}$. The spread of the minor is  $I \cup J = \{1,2\} \cup \{2,3\} =  \{1,2,3\}$. The following lemma will prove that in several cases we can view the $\ell = 3$ case as several $\ell = 2$ cases.

\begin{lemma}
Let $f \in \Fl$. Suppose that $\ell = 3$ and $f$ has a maximal minor whose spread has size $\leq 2$. Then $wt(f) \geq q^9 -q^8-q^6$.
\end{lemma}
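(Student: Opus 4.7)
The plan is to fix the ``outer'' entries of the $3\times 3$ Hermitian matrix and view $f$ as a family of codewords of $\CH$ for $\ell=2$, then invoke Lemma~\ref{lem:wt22}. First I would unpack the hypothesis: since $|I|=|J|$, the condition $|I\cup J|\le 2$ forces the maximal minor to be either the empty minor, a single entry $X_{i,j}$, or a principal $2\times 2$ minor $det_{I,I}$ with $|I|=2$. By applying a row/column permutation (an automorphism of $\CH$), I may assume all indices of this maximal minor lie in $\{1,2\}$.

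The main case is when the maximal minor is $det_{\{1,2\},\{1,2\}}(\X) = X_{1,1}X_{2,2}-X_{1,2}X_{1,2}^q$, appearing in $f$ with a nonzero coefficient $c$. By maximality, no minor in $supp(f)$ has row set containing $\{1,2\}$ and column set containing $\{1,2\}$; concretely, the full $3\times 3$ determinant is excluded. Next, partition the entries of $\X$ into inner variables (those of the top-left $2\times 2$ block) and outer variables ($X_{1,3}, X_{2,3}, X_{3,3}$, which admit $q^5$ assignments). For each choice of outer variables, $f$ specializes to a polynomial $g\in\mathcal{F}(2)$. The key verification is that the coefficient of $det_{\{1,2\},\{1,2\}}$ in $g$ is still $c$: this top-weight monomial could a priori arise from $det_{\{1,2\},\{1,2\}}$ itself, from the excluded $3\times 3$ determinant, or from other minors, but a cofactor-expansion check shows that every remaining minor in $supp(f)$ (such as $det_{\{1,3\},\{1,3\}}$, $det_{\{2,3\},\{2,3\}}$, or the non-principal $2\times 2$ minors) specializes to a polynomial in the inner variables of degree strictly less than that of the $2\times 2$ determinant. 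Hence Lemma~\ref{lem:wt22} applies to $g$ for every outer specialization, giving at most $q^3+q$ zeros of $g$ per specialization; summing over the $q^5$ outer configurations yields at most $q^5(q^3+q)=q^8+q^6$ zeros of $f$, so $wt(ev(f))\ge q^9-q^8-q^6$.

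The subcases where the maximal minor is a single entry or the empty minor are handled analogously: after specializing the outer variables, $g\in\mathcal{F}(2)$ retains a nonzero lower-degree term from the maximal minor while the $2\times 2$ determinant coefficient of $g$ vanishes (again by maximality), so a direct zero count on such $g$ gives $wt(g)\ge q^4-q^3$, which summed over the $q^5$ outer choices already beats $q^9-q^8-q^6$. I expect the main technical obstacle to be the cofactor-expansion bookkeeping needed to confirm that the maximality hypothesis really does rule out every contribution to the leading $2\times 2$ determinant coefficient in $g$, rather than merely the most obvious ones.
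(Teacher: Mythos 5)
Your proposal is correct and follows essentially the same approach as the paper: specialize the row and column outside the spread of the maximal minor, observe that maximality guarantees the maximal minor's coefficient survives in the induced codeword of $\mathcal{F}(2)$, and apply the $\ell=2$ bound of $q^4-q^3-q$ per specialization across all $q^5$ specializations. Your version is somewhat more careful than the paper's terse two-sentence argument, explicitly checking the cofactor contributions and splitting into subcases by the size of the maximal minor, but the mechanism is the same.
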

\begin{proof}
Let $f$ be as in the statement of the lemma. There is a row and a column which does not appear in the spread of the maximal minor. Then for any of the $q^5$ values one can put on this column, $f$ specializes to a combination of $2\times2$ determinants with the same maximal minors (though others may be changed due to the specialization). As each specialization has weight at least $q^4-q^3-q$ and there are $q^5$ specializations, the statement follows.\end{proof}

To calculate the distance of our code, we need to count the invertible Hermitian matrices. We shall use group actions with the following sets:
\begin{itemize}
    \item The set of $\ell \times \ell$ Hermitian matrices \\ $${HL}_{\ell}(\mathbb{F}_{q^2}) = \{H \in \mathbb{H}^{\ell \times \ell}(\mathbb{F}_{q^2})| det(H) \neq 0 \}$$ 
    \item The General Linear group over $\F_{q^2}$,  \\ $${GL}_{\ell}(\mathbb{F}_{q^2}) = \{M \in \mathbb{M}^{\ell \times \ell}(\mathbb{F}_{q^2})| det(M) \neq 0 \}$$ 
    \item The group of Unirary matrices over $\F_{q^2}$ \\ $${U}_{\ell}(\mathbb{F}_{q^2}) = \{U\in {GL}_{\ell}(\mathbb{F}_{q^2}) | U\times U^{*} = I \}$$
    %\item ${U}_{\ell}(\mathbb{F}_{q^2}) = \{U \in {GL}_{\ell}(\mathbb{F}_{q^2})|(A\in {HL}_{\ell}(\mathbb{F}_{q^2}))  [U^{*}^{T}AU = A]  \}$
\end{itemize}
The cardinality of the classic finite groups is known:
\begin{itemize}
    \item $$\#{GL}_{\ell}(\mathbb{F}_{q^2}) = q^{2\binom{\ell}{2}}\prod_{i=1}^{\ell} (q^{2i}-1)$$
    \item $$\# {U}_{\ell}(\mathbb{F}_{q^2}) = q^{\binom{\ell}{2}}\prod_{i=1}^{\ell} (q^i-(-1)^i)$$
\end{itemize}
\begin{proposition}\label{prop: invertible hermite}
 The cardinality of ${HL}_{\ell}(\mathbb{F}_{q^2})= \{H \in \mathbb{H}^{\ell \times \ell}(\mathbb{F}_q)| det(H) \neq 0 \}$ is given by:
 $ q^{\binom{\ell}{2}}\prod_{i=1}^{\ell} (q^i+(-1)^i). $
\end{proposition}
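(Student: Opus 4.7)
The plan is to use the group action of $GL_\ell(\mathbb{F}_{q^2})$ on $HL_\ell(\mathbb{F}_{q^2})$ given by congruence $H \mapsto A^{*}HA$, combined with the orbit-stabilizer theorem. First I would verify the action is well-defined: if $H$ is Hermitian then $(A^{*}HA)^{*} = A^{*}H^{*}A = A^{*}HA$, and $\det(A^{*}HA) = \det(A)^{q+1}\det(H) \neq 0$ whenever $\det(H) \neq 0$. So the action sends $HL_\ell(\mathbb{F}_{q^2})$ to itself.

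Next, I would show the action on $HL_\ell(\mathbb{F}_{q^2})$ is transitive, i.e.\ every nondegenerate Hermitian form is equivalent to the standard one $I$. This is the classical Gram--Schmidt argument over $\mathbb{F}_{q^2}$: any Hermitian matrix can be diagonalized by congruence, and after diagonalization each diagonal entry lies in $\mathbb{F}_q^{*}$ (since diagonal entries of Hermitian matrices are in $\mathbb{F}_q$), and because the norm map $N\colon \mathbb{F}_{q^2}^{*}\to \mathbb{F}_q^{*}$, $\alpha\mapsto \alpha^{q+1}$, is surjective, each diagonal entry can be rescaled to $1$ by conjugation with a diagonal matrix. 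Thus every element of $HL_\ell(\mathbb{F}_{q^2})$ lies in the orbit of $I$.

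Then I would compute the stabilizer: $A^{*}IA = I$ is exactly the defining condition of $U_\ell(\mathbb{F}_{q^2})$, so $\mathrm{Stab}(I) = U_\ell(\mathbb{F}_{q^2})$. By orbit-stabilizer,
\[
\#HL_\ell(\mathbb{F}_{q^2}) \;=\; \frac{\#GL_\ell(\mathbb{F}_{q^2})}{\#U_\ell(\mathbb{F}_{q^2})} \;=\; \frac{q^{2\binom{\ell}{2}}\prod_{i=1}^{\ell}(q^{2i}-1)}{q^{\binom{\ell}{2}}\prod_{i=1}^{\ell}(q^{i}-(-1)^{i})}.
\]
The final step is a short simplification using the identity $q^{2i}-1 = (q^{i}-(-1)^{i})(q^{i}+(-1)^{i})$ (check both parities: for $i$ even, $(q^i-1)(q^i+1)$; for $i$ odd, $(q^i+1)(q^i-1)$). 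Cancelling the factor $q^{i}-(-1)^{i}$ in each term of the product and simplifying the powers of $q$ yields $q^{\binom{\ell}{2}}\prod_{i=1}^{\ell}(q^{i}+(-1)^{i})$, as claimed.

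The main obstacle is the transitivity step — the other pieces are formal. The key ingredient is surjectivity of the norm $N\colon \mathbb{F}_{q^2}^{*}\to \mathbb{F}_q^{*}$ together with the fact that Hermitian diagonalization proceeds by a standard induction: if $H\neq 0$, one can find a vector $v$ with $v^{*}Hv\neq 0$ (otherwise polarization forces $H=0$ since $\mathrm{char}\,\mathbb{F}_{q^2}$ is prime to the relevant scalars), extend to a basis, and reduce to the $(\ell-1)\times(\ell-1)$ case. Everything else is bookkeeping with orders of classical groups.
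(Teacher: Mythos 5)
Your proposal is correct and follows essentially the same route as the paper: both use the congruence action $H \mapsto A^{*}HA$ of $GL_{\ell}(\mathbb{F}_{q^2})$ on nondegenerate Hermitian matrices, invoke transitivity (the paper cites Fulton for the single-orbit fact, whereas you sketch the Gram--Schmidt plus norm-surjectivity argument), identify the stabilizer of $I$ as $U_{\ell}(\mathbb{F}_{q^2})$, and finish with orbit-stabilizer and the factorization $q^{2i}-1=(q^i-(-1)^i)(q^i+(-1)^i)$. The only difference is that you prove the transitivity step explicitly rather than citing it, which makes your writeup more self-contained.
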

\begin{proof}

%Note that by \cite{fulton_1977}, ${HL}_{\ell}(\mathbb{F}_{q^2})$ is the orbit of ${GL}_{\ell}(\mathbb{F}_{q^2})$ when ${U}_{\ell}(\mathbb{F}_{q^2})$ acts under the group action defined as $U \cdot A = U*AU$

Recall that ${GL}_{\ell}(\mathbb{F}_{q^2})$ acts upon ${HL}_{\ell}(\mathbb{F}_{q^2})$ under the group action defined as $G \cdot A = G^*AG$. By \cite{fulton_1977}, this action has exactly one orbit. Moreover, the orbit stabilizer, that is the subgroup that fixes the identity, is ${U}_{\ell}(\mathbb{F}_{q^2})$. 
%\\
%Let our group be ${GL}_{\ell}(\mathbb{F}_{q^2})$, our set be ${HL}_{\ell}(\mathbb{F}_{q^2})$ and our group action $\alpha$ be $\alpha(X,Y) = X^{*}YX$. By \cite{fulton_1977}, for any hermitian matrix A, there exists an invertible matrix $P \in {GL}_{\ell}(\mathbb{F}_{q^2})$ such that $PAP^{*} = I_k$. This implies that for $H\in {HL}_{\ell}(\mathbb{F}_{q^2})$, there exists an invertible matrix $P \in {GL}_{\ell}(\mathbb{F}_{q^2})$ such that $PAP^{*} = I_\ell$. This implies that the action $\alpha$ is a transitive group action on ${HL}_{\ell}(\mathbb{F}_{q^2})$. Consequently, by taking $A = I$, we obtain that ${U}_{\ell}(\mathbb{F}_{q^2})$ is the stabilizer group when using $I_\ell$. %That is that ${U}_{\ell}(\mathbb{F}_{q^2})$ is the subgroup of G which fixes X. 
%Because the action is transitive, there is exactly one orbit, that is ${HL}_{\ell}(\mathbb{F}_{q^2}) $.\\ 
By the Orbit-Stabilizer Theorem:
%\begin{equation}
%    \#G \cdot X = \frac{\#G}{\#G_x}
%\end{equation}
\begin{equation}
   \#{HL}_{\ell}(\mathbb{F}_{q^2}) =\frac{\#{GL}_{\ell}(\mathbb{F}_{q^2})}{\#{U}_{\ell}(\mathbb{F}_{q^2})}
\end{equation}
\begin{equation}
    =\frac{q^{2\binom{\ell}{2}}\prod_{i=1}^{\ell} (q^{2i}-1)}{q^{\binom{\ell}{2}}\prod_{i=1}^{\ell} (q^i-(-1)^i)}
\end{equation}
\begin{equation}
    =\frac{q^{2\binom{\ell}{2}}\prod_{i=1}^{\ell} (q^{i}+1)(q^{i}-1)}{q^{\binom{\ell}{2}}\prod_{i=1}^{\ell} (q^i-(-1)^i)}
\end{equation}
\begin{equation}
    =q^{\binom{\ell}{2}}\prod_{i=1}^{\ell} (q^i+(-1)^i)
\end{equation}
\end{proof}
Now we shall assume $\X$ is of the form

   $$\X = \begin{bmatrix}
    X_1&X_2& Y_1\\X_2^q&X_3 & Y_2\\ Y_1^q&Y_2^q&Y_3
    \end{bmatrix},$$ 
    where $X_1, X_3, Y_3$ are variables of elements in $\F_q$ and $X_2, Y_1, Y_2$ are variables of elements in $\F_{q^2}$. Now we shall study the $q^5$ possible specializations of $Y_1, Y_2, Y_3$ and their effect on $f \in \Fl$ on the remaining unspecialized $2 \times 2$ submatrix. We remind the reader of the following minor expansion:
    
    \begin{proposition}
$$    det(\X) = det_{\{1,2\}, \{1,2\}}(\X)Y_3 -det_{\{1,3\}, \{1,2\}}(\X)Y_2 + det_{\{2,3\}, \{1,2\}}(\X)Y_1   $$
    \end{proposition}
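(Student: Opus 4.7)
The claim is a direct instance of the standard Laplace (cofactor) expansion of a $3 \times 3$ determinant along a column, specialized to the third column of $\mathbf{X}$, whose entries are $Y_1$, $Y_2$, and $Y_3$. So the plan is very short: write down the cofactor expansion along the third column and match each cofactor to the $2\times 2$ minor named in the statement.

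More concretely, I would proceed as follows. First, recall that for any $3\times 3$ matrix $M$ over a commutative ring, the Laplace expansion along column $k$ gives
\begin{equation*}
\det(M) = \sum_{i=1}^{3} (-1)^{i+k} M_{i,k}\, \det(M^{(i,k)}),
\end{equation*}
where $M^{(i,k)}$ denotes the $2\times 2$ submatrix obtained by deleting row $i$ and column $k$. Since $\mathbf{X}$ has entries in $\mathbb{F}_{q^2}$, which is commutative, this formula applies verbatim. Specialize to $k=3$: the third column of $\mathbf{X}$ is $(Y_1, Y_2, Y_3)^T$, so
\begin{equation*}
\det(\mathbf{X}) = (-1)^{1+3} Y_1\, \det(\mathbf{X}^{(1,3)}) + (-1)^{2+3} Y_2\, \det(\mathbf{X}^{(2,3)}) + (-1)^{3+3} Y_3\, \det(\mathbf{X}^{(3,3)}).
\end{equation*}
Second, I would identify each submatrix with the notation $\det_{I,J}$. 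Deleting row $1$ and column $3$ leaves rows $\{2,3\}$ and columns $\{1,2\}$, giving $\det_{\{2,3\},\{1,2\}}(\mathbf{X})$; deleting row $2$ and column $3$ leaves rows $\{1,3\}$ and columns $\{1,2\}$, giving $\det_{\{1,3\},\{1,2\}}(\mathbf{X})$; and deleting row $3$ and column $3$ leaves rows $\{1,2\}$ and columns $\{1,2\}$, giving $\det_{\{1,2\},\{1,2\}}(\mathbf{X})$. Reading off the signs $+,-,+$ from $(-1)^{i+3}$ for $i=1,2,3$ and rearranging the summation in the order $Y_3, Y_2, Y_1$ yields exactly the stated identity.

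There is essentially no obstacle here: the only thing to verify is the bookkeeping of the indices and signs, and that the Hermitian constraints on the entries of $\mathbf{X}$ (namely $X_2^q$, $Y_1^q$, $Y_2^q$ in the subdiagonal and $X_1, X_3, Y_3 \in \mathbb{F}_q$) play no role, since Laplace expansion is a formal identity over the polynomial ring in the independent entries and remains valid under any specialization. I would therefore present the argument in two lines: invoke Laplace expansion along the third column, then translate each cofactor into the $\det_{I,J}$ notation.
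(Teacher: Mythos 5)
Your proof is correct and is exactly the argument the paper has in mind: the proposition is stated as a reminder of standard cofactor expansion along the third column, and the paper gives no separate proof. Your bookkeeping of signs and index sets matches the identity as written, and your observation that the Hermitian specialization is irrelevant to this formal determinant identity is accurate.
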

    
%As the set $\HM$ is a set closed under addition, a matrix $H \in \HM$  will permute the set $\HM$ via the map $M \mapsto M + H.$ The matrix $H \in \HM$ also induces a permutation of $\Fl$ via the map $f(\X) \mapsto f(\X + H).$ We shall use Hermitian translations to transform the codewords and simplify the calculation of $wt(f)$. 

Let us denote by $f_{a,b,c}(\X)$ the minor combination obtained by the partial evaluation of $f(\X)$ at $Y_1 = a$, $Y_2= b$ and $Y_3 = c$.

\begin{lemma}\label{lem:fulleval}
Let $f \in \Fl$. Suppose that for all $q^5$ partial evaluations we have that $f_{a,b,c}(\X) \neq 0.$ Then $wt(f) \geq q^9-q^8-q^6$

\end{lemma}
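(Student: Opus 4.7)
The plan is to partition the matrices in $\mathbb{H}^{3}(\F_{q^2})$ by the value of their last row/column and reduce the weight computation, fiber by fiber, to the already-established $\ell=2$ weight bound.

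Every $H\in\mathbb{H}^{3}(\F_{q^2})$ is determined by a triple $(a,b,c)=(Y_1,Y_2,Y_3)\in \F_{q^2}\times\F_{q^2}\times\F_q$ together with an upper-left $2\times 2$ Hermitian block $M\in\mathbb{H}^{2}(\F_{q^2})$. There are $q^2\cdot q^2\cdot q=q^5$ such triples, each fiber has size $q^4$, and
\[
wt(f) \;=\; \sum_{(a,b,c)} wt(f_{a,b,c}),
\]
where $wt(f_{a,b,c})$ is the number of $2\times 2$ Hermitian matrices on which $f_{a,b,c}$ does not vanish.

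First I would verify that every partial evaluation $f_{a,b,c}$ lies in the analogous space $\mathcal{F}(2)$: cofactor-expanding each $3\times 3$ minor of $\X$ along the third row (or column) and substituting $(Y_1,Y_2,Y_3)=(a,b,c)$ expresses any such minor as an $\F_{q^2}$-linear combination of minors of the upper-left $2\times 2$ block, with coefficients that are polynomials in $a,b,c,a^q,b^q$. Moreover, since we may assume $f=f_{conj}$ for the minimum distance computation (by Proposition \ref{prop: Sticht 2}) and $c\in\F_q$, the specialized polynomial $f_{a,b,c}$ still takes values in $\F_q$ on $\mathbb{H}^{2}(\F_{q^2})$; by Corollary \ref{col: Fq} this means $f_{a,b,c}$ is itself self-conjugate in the $\ell=2$ sense, so the $\ell=2$ bounds apply verbatim.

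Next, under the hypothesis $f_{a,b,c}\neq 0$, I would invoke the $\ell=2$ weight bound: the lemma immediately preceding Lemma \ref{lem:wt22} yields $wt(f_{a,b,c})\geq q^4-q^3$ when the $2\times 2$ minor is absent from the support, while Lemma \ref{lem:wt22} itself yields $wt(f_{a,b,c})\geq q^4-q^3-q$ when it is present. Either way,
\[
wt(f_{a,b,c})\ \geq\ q^4-q^3-q.
\]
Summing over the $q^5$ fibers then gives
\[
wt(f)\ \geq\ q^5\bigl(q^4-q^3-q\bigr)\;=\;q^9-q^8-q^6,
\]
as claimed.

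The main (modest) obstacle is the bookkeeping in Step~1: verifying that specialization actually produces an element of $\mathcal{F}(2)$ and that the self-conjugacy hypothesis required by Lemma \ref{lem:wt22} survives the substitution. Both points follow quickly from multilinearity of the determinant and from $c\in\F_q$; once they are in place, the proof collapses to the fiber decomposition above and a single pigeonhole sum.
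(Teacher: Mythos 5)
Your proposal is correct and takes essentially the same approach as the paper: both decompose $\mathbb{H}^{3}(\F_{q^2})$ into the $q^5$ fibers determined by the third row/column, apply the $\ell=2$ bound $wt(f_{a,b,c})\geq q^4-q^3-q$ on each fiber, and sum. The paper's proof is terser (it simply invokes the $\ell=2$ bound and multiplies by $q^5$); your added bookkeeping verifying that each $f_{a,b,c}$ lands in $\mathcal{F}(2)$ and inherits self-conjugacy is sound and makes the step more explicit, but it is not a different argument.
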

\begin{proof}

Suppose that $f$ is as in the hypothesis of the lemma. Then for any given partial evaluation, $f_{a,b,c}(\X)$ is a  nonzero combination of minors in the $2 \times 2$ case. As $wt(f_{a,b,c}) \geq q^4-q^3-q$ and all $q^5$ combinations are nonzero, the result follows.
\end{proof}
%In the case one of the evaluations is the zero function we can prove the minors appearing in the support are special.

%\begin{lemma}
%Let $f \in \Fl$. If there exists a partial evaluation along the last row such that $f_{a,b,c}(\X) = 0$ then there is a translation $g = f(\X + H)$ such that $g_{0,0,0}(\X) = 0.$
%\end{lemma}
%\begin{proof}
%Take $H$ to be any Hermitian matrix with $-a,-b,-c$ on the last column. \end{proof}
%A function $f \in \Fl$ which is zero whenever the entries on the third column are zero will have only certain minors occurring on its support.
%\begin{lemma}
%Let $f \in \Fl$. Suppose that $f_{0,0,0}(\X) = 0$ then all minors in $supp(f)$ contain either row $3$ or column $3$
%\end{lemma}
%\begin{proof}
%If $f$ has any minor not containing row $3$ or column $3$, it would not be affected by the partial evaluation, which then implies $f_{0,0,0} \neq 0.$ \end{proof}

We are now ready to find $wt(f)$ for $\ell = 3.$ We begin with the full $3 \times 3$ determinant first.  %We may assume without loss of generality that $f_{0,0,0} = 0$ and therefore all minors of $supp(f)$ have full spread $\{1,2,3\}$.

\begin{lemma}\label{lem:minorclear3}
Let $f \in \Fl$ where $f = f_{conj}$. Suppose that the maximal minor of $f$ is the full determinant $det_{\{1,2,3\}, \{1,2,3\} }(\X)$. Then we may assume $f$ is of the form $det_{\{1,2,3\}, \{1,2,3\} }(\X) + a_1 X_{1,1} + a_2 X_{2,2} + a_3 X_{3,3} + a_4$.
\end{lemma}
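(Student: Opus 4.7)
The plan is a two-stage reduction combining the translation argument of Lemma \ref{lem:minorclear} with the linear change of variables $\X \mapsto A^{(q)T}\X A$ from the automorphism group. By hypothesis $det_{\{1,2,3\},\{1,2,3\}}(\X)$ is maximal in $supp(f)$, and self--conjugacy forces $f_{\{1,2,3\},\{1,2,3\}} \in \F_q$, so after rescaling we may take its coefficient to be $1$.

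For the first stage, invoke Lemma \ref{lem:minorclear} with $I = \{1,2,3\}$. It produces $H \in \HM$ such that $f(\X + H)$ has no $2\times 2$ minor whose rows and columns lie in $\{1,2,3\}$---that is, no $2 \times 2$ minor at all, since $\ell = 3$. Translation by a Hermitian matrix is an automorphism that preserves $f = f_{conj}$, so after renaming we may write
$$f(\X) = det_{\{1,2,3\},\{1,2,3\}}(\X) + \sum_{i,j} g_{i,j}\,X_{i,j} + c_0,$$
with the $1\times1$--coefficient matrix $G = (g_{i,j})$ Hermitian and $c_0 \in \F_q$.

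For the second stage, apply $\X \mapsto A^{(q)T}\X A$ with $A \in GL_3(\F_{q^2})$ to be chosen. Because $det(A^{(q)T}\X A) = (\det A)^{q+1}\, det(\X)$, the top determinant is only rescaled by $\lambda = (\det A)^{q+1} \in \F_q^*$ and no smaller minors appear. The linear form $\sum g_{i,j}X_{i,j}$ transforms into a linear form whose coefficient matrix is $A^{(q)}G A^T$, which is again Hermitian; the constant $c_0$ is unchanged. Setting $B = A^{(q)}$, the transformation becomes $G \mapsto B G B^*$, which is the classical congruence action of $GL_3(\F_{q^2})$ on Hermitian matrices, whose orbit representatives are diagonal (this is the same ingredient underlying the transitivity statement used in Proposition \ref{prop: invertible hermite}). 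Choose $A$ so that $A^{(q)}G A^T$ is diagonal, with diagonal entries $a_1, a_2, a_3 \in \F_q$, and then rescale the whole expression by $\lambda^{-1}$ to restore the coefficient of $det(\X)$ to $1$; this yields the stated form with $a_4 = c_0/\lambda \in \F_q$.

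The main obstacle is ensuring that the second stage does not re-introduce $2 \times 2$ minors. This follows from Cauchy--Binet: under any linear substitution $\X \mapsto B \X C$, every $k \times k$ minor of $B\X C$ is an $\F_{q^2}$-linear combination of $k \times k$ minors of $\X$. Applied to our expression, the $3\times 3$ determinant stays proportional to itself and the linear monomials remain linear combinations of entries of $\X$, so the degree-two part stays empty throughout the second stage and the two reductions combine cleanly to produce the claimed normal form.
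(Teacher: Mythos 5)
Your proposal is correct and follows essentially the same two-stage reduction as the paper: first invoke Lemma~\ref{lem:minorclear} to eliminate the $2\times 2$ minors, then exploit that the resulting linear part has a Hermitian coefficient matrix and use the congruence action of $GL_3(\F_{q^2})$ (the same transitivity fact cited for Proposition~\ref{prop: invertible hermite}) to diagonalize it, noting that the determinant only rescales under this substitution. Your explicit appeal to Cauchy--Binet to justify that the degree-two part stays empty is a slightly cleaner way of saying what the paper verifies by direct computation, but the underlying argument is identical.
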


\begin{proof}

Recall $f_{\{1,2,3\}, \{1,2,3\} }\neq 0$ implies we may assume without loss of generality that $f_{\{1,2,3\}, \{1,2,3\} } = 1$. Then, Lemma \ref{lem:minorclear} implies that we may assume without loss of generality that $f$ has no $2 \times 2$ minors.  As $f  = f_{conj}$ we have that $f_{i,j} = f_{i,j}^q$. Moreover, we may express $f_{i,j}X_a+f_{i,j}^qX_a^q$ as $Tr(f_{i,j}X_a)$. This implies that $f$ is of the form: 
%$$\begin{bmatrix}
 %   X_1&X_2& Y_1\\X_2^q&X_3 & Y_2\\ Y_1^q&Y_2^q&Y_3
  %  \end{bmatrix} +    f_{1,1} X_1 + f_{1,2}X_2 +f_{1,2}^qX_2^q + f_{2,2}X_3 +f_{1,3} Y_1 + f_{1,3}^qY_1^q + f_{2,3}Y_2 +f_{2,3}^qY_2^q+ f_{3,3}Y_3 + \delta$$ where $f_{1,1}, f_{2,2}, f_{3,3} \in \F_q$, $f_{1,2}, f_{1,3}, f_{2,3} \in \F_{q^2}$.
    
    $$\begin{bmatrix}
    X_1&X_2& Y_1\\X_2^q&X_3 & Y_2\\ Y_1^q&Y_2^q&Y_3
    \end{bmatrix} +    f_{1,1} X_1 + Tr(f_{1,2}X_2) + f_{2,2}X_3 +Tr(f_{1,3} Y_1) + Tr(f_{2,3}Y_2) + f_{3,3}Y_3 + f_0$$ where $f_{1,1}, f_{2,2}, f_{3,3}, f_0 \in \F_q$, $f_{1,2}, f_{1,3}, f_{2,3} \in \F_{q^2}$.
    
    If we let $F = \begin{bmatrix}
    f_{1,1}&f_{1,2}& f_{1,3}\\f_{1,2}^q& f_{2,2} & f_{2,3}\\ f_{1,3}^q&f_{2,3}^q& f_{3,3}
    \end{bmatrix} $ then $F$ is a Hermitian matrix and $f$ is of the form  $$f(\X) = det_{\{1,2,3\},\{1,2,3\}}(\X) + Tr(F^T \X) +f_{0}.$$
    
    If $F$ is a Hermitian matrix of rank $r$ there exists a nonsingular matrix $A$ such that $A^{(q)T} I_r A = F$.
    
    We rewrite 
    $$f(\X) = det_{\{1,2,3\},\{1,2,3\}}(\X) + Tr(A^{(q)t}I_rA \X) +f_{0}.$$
    
    Recall that any nonsingular matrix $A^{-1}$ induces a permutation of $\HM$ via the map $H \mapsto A^{-1}H (A^{-1})^{(q)t}$. We change the matrix of variables $\X$ to $\Y$ where $\X = (A^{-1})\Y (A^{-1})^{(q)t}$. 
    
    In this case $f$ becomes 
    
        $$f(\Y) =  \frac{1}{a^{q+1}}det_{\{1,2,3\},\{1,2,3\}}(\Y) + Tr(A^{(q)t}I_rA (A^{-1})\Y (A^{-1})^{(q)t} ) +f_{0}.$$
        
        Which in turn it equals
        $$f(\Y) =  \frac{1}{a^{q+1}}det_{\{1,2,3\},\{1,2,3\}}(\Y) + Tr(A^{(q)t}I_r\Y (A^{-1})^{(q)t} ) +f_{0}.$$

        As the matrix $A^{(q)t}I_r\Y (A^{-1})^{(q)t} $ is similar to $I_r \Y$, we have that 
                $$f(\Y) =  \frac{1}{a^{q+1}}det_{\{1,2,3\},\{1,2,3\}}(\Y) + Tr(I_r\Y ) +f_{0}$$ and the result follows.   
    
\end{proof}

\begin{lemma}\label{lem:33classifier}
Let $f \in \Fl$. Suppose $f = f_{conj}$. Suppose that the maximal minor of $f$ is the full determinant $det_{\{1,2,3\}, \{1,2,3\} }(\X)$. Then $$ wt(f) \geq q^9-q^8-q^6+q^5-q^4+q^3  $$
\end{lemma}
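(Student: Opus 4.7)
The plan is to use Lemma \ref{lem:minorclear3} to reduce $f$ to a canonical form, then count the number of zeros of $f$ on $\HM$ by partitioning over the last row $(Y_1,Y_2,Y_3) = (a,b,c) \in \F_{q^2} \times \F_{q^2} \times \F_q$ and applying Lemma \ref{lem:22classfier} to each resulting $2\times 2$ specialization. By Lemma \ref{lem:minorclear3}, I may assume
$$ f(\X) = det_{\{1,2,3\},\{1,2,3\}}(\X) + a_1 X_{1,1} + a_2 X_{2,2} + a_3 X_{3,3} + a_4, \quad a_1,a_2,a_3,a_4 \in \F_q. $$
Cofactor expansion along the last column (the proposition preceding Lemma \ref{lem:fulleval}) then yields
$$ f_{a,b,c} = c\,(X_1 X_3 - X_2^{q+1}) + (a_1 - b^{q+1}) X_1 + a^q b X_2 + a b^q X_2^q + (a_2 - a^{q+1}) X_3 + (a_3 c + a_4). $$

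For $c \neq 0$, dividing by $c$ puts $f_{a,b,c}$ in the form treated by Lemma \ref{lem:22classfier}, and a short computation shows the classifying quantity $f_0 + f_{1,2}^{q+1} - f_{1,1} f_{2,2}$ of that lemma equals $c^{-2}\,\Phi(a,b,c)$ where
$$ \Phi(a,b,c) := a_3 c^2 + a_4 c + a_1 a^{q+1} + a_2 b^{q+1} - a_1 a_2. $$
Thus the number of zeros of $f_{a,b,c}$ is $q^3 - q^2 + q$ when $\Phi = 0$ and $q^3 + q$ otherwise. For $c = 0$, the specialization $f_{a,b,0} = (a_1 - b^{q+1})X_1 + \text{Tr}_{\F_{q^2}/\F_q}(a^q b X_2) + (a_2 - a^{q+1})X_3 + a_4$ is an $\F_q$-valued affine function on $\F_q \times \F_{q^2} \times \F_q$; its linear part vanishes identically if and only if $a^q b = 0$, $b^{q+1} = a_1$, and $a^{q+1} = a_2$. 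An elementary count using the fact that $t \mapsto t^{q+1}$ maps $\F_{q^2}^*$ onto $\F_q^*$ in a $(q+1)$-to-one fashion shows the number $M$ of such degenerate pairs is $0$, $1$, or $q+1$, depending on which of $a_1, a_2$ vanish. When the linear part is nontrivial there are exactly $q^3$ zeros, while in the degenerate case there are $q^4$ zeros if $a_4 = 0$ and none otherwise.

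Summing the contributions reduces the problem to verifying, in every configuration of vanishing coefficients among $a_1, a_2, a_3, a_4$, that the total zero count is at most $q^8 + q^6 - q^5 + q^4 - q^3$. The norm-counting identity above is essential both for counting the $(a,b,c)$ with $\Phi(a,b,c) = 0$ in the $c \neq 0$ regime and for counting the degenerate pairs in the $c = 0$ regime. The main obstacle lies in the interaction of the two regimes when $a_1 = a_2 = 0$: there the degenerate $c = 0$ pairs contribute up to $q^4$ zeros each (if $a_4 = 0$), and this gain is precisely offset by the savings from the $c \neq 0$ specializations where $\Phi(a,b,c) = a_3 c^2 + a_4 c$ vanishes. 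A direct verification with this bookkeeping shows that $f(\X) = det_{\{1,2,3\},\{1,2,3\}}(\X) + a_3 X_{3,3}$ with $a_3 \neq 0$ attains equality, so the case analysis must track these cancellations exactly; all other coefficient configurations yield a strictly smaller zero count and hence a weight bounded below by the claimed value.
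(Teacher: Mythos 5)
Your setup follows the paper's strategy exactly: apply Lemma~\ref{lem:minorclear3}, expand along the third row and column, and classify each $2\times 2$ specialization $f_{a,b,c}$ via Lemma~\ref{lem:22classfier}. Your formula $\Phi(a,b,c) = a_3 c^2 + a_4 c + a_1 a^{q+1} + a_2 b^{q+1} - a_1 a_2$ for the classifying quantity is correct, and your identification of $f = det_{\{1,2,3\},\{1,2,3\}}(\X) + a_3 X_{3,3}$ with $a_3\neq 0$ as an extremal configuration checks out: its zero count is $q^4(q-1)(q^3+q) + (q^4-1)q^3 + q^4 = q^8+q^6-q^5+q^4-q^3$, exactly complementary to the claimed weight.

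The gap is that the decisive verification is deferred rather than performed. For each configuration of vanishing among $a_1,a_2,a_3,a_4$ one must count the $(a,b,c)$ with $c\neq 0$ and $\Phi=0$, count the degenerate pairs at $c=0$ together with the $a_4$-dependent contribution, and confirm the aggregate never exceeds $q^8+q^6-q^5+q^4-q^3$. That bookkeeping is the content of the lemma: when $a_1a_2\neq 0$ the number of solutions to $a_1 a^{q+1}+a_2 b^{q+1} = -(a_3c^2+a_4c-a_1a_2)$ is sensitive to whether the right-hand side vanishes; when $a_1=a_2=0$ the potential $q^4$-zero cell at $c=0$ must be offset against the disappearance of the $\Phi=0$ locus at $c\neq 0$. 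The paper carries this analysis through, aided by the fact that the proof of Lemma~\ref{lem:minorclear3} congruence-reduces the linear part to $Tr(I_r\Y)$, so one may take $a_1\geq a_2\geq a_3$ in $\{0,1\}$ and split by rank rather than over all of $\F_q^3$; you invoke only the weaker stated form, which makes the deferred verification substantially longer. Moreover your closing assertion that ``all other coefficient configurations yield a strictly smaller zero count'' is false as stated: the configuration $a_1\neq 0$, $a_2=a_3=a_4=0$ also attains equality (at $c=0$ there are now $q+1$ degenerate pairs each contributing $q^4$ zeros, compensated by the $\Phi=0$ solutions at $c\neq 0$). As written this is a correct plan for the paper's proof, not the proof itself.
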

\begin{proof}

Lemma \ref{lem:minorclear3} implies $f$ is of the form

$$det_{\{1,2,3\}, \{1,2,3\} }(\X) + f_{1,1} X_1 + f_{2,2} X_3 + f_{3,3} Y_3 + f_0.$$

We shall consider what happens when we evaluate $f$ along the third row and column.

Suppose now that $f_{a,b,c}$ is the partial evaluation of $f$ with $Y_1 = a$, $Y_2 = b$ and $Y_3 = c \neq 0.$ Then $f_{a,b,c}$ looks as such:

$$c(X_1X_3 - X_2^{q+1}) + Tr(a^qbX_2) -b^{q+1}X_1 -a^{q+1}X_3 +  f_{1,1} X_1 + f_{2,2} X_3 + f_{3,3}c + f_0.$$

Now we shall apply Lemma \ref{lem:22classfier} to determine the weight of each partial evaluation. Lemma \ref{lem:22classfier} implies that if the coefficients of $f_{a,b,c}$ satisfy $$\left(\frac{a^qb}{c}\right)^{q+1} - \frac{f_{1,1}-b^{q+1}}{c} \frac{f_{2,2}-a^{q+1}}{c} + f_{3,3} + \frac{f_0}{c} = 0,  $$
where
\begin{itemize}
    \item  Coefficient of $X_1$: $\frac{f_{1,1}-b^{q+1}}{c}$
    \item Coefficient of $X_2$: 
    $\left(\frac{a^qb}{c}\right)$
    \item Coefficient of $X_2^q$: 
    $\left(\frac{a^qb}{c}\right)^{q}$
    \item Coefficient of $X_3$: $   \frac{f_{2,2}-a^{q+1}}{c}$
    \item Constant term:
    $f_{3,3} + \frac{f_0}{c}$
    
\end{itemize}then the partial evaluation has weight $q(q-1)(q^2+1)$, and otherwise it has weight $q^4-q^3-q$.

After evaluating the parenthesis, and multiplying by $c^2$ we obtain

$$a^{q+1}b^{q+1} - (f_{1,1}-b^{q+1})(f_{2,2} - a^{q+1}) + c^2f_{3,3} + cf_0 = 0 $$
or
$$-f_{1,1}f_{2,2}+f_{1,1}a^{q+1}+f_{2,2}b^{q+1}+ c^2f_{3,3} + cf_0 = 0.$$

Now we shall consider the different cases, depending on if $f_{1,1} = 0$ or $f_{1,1} \neq 0$.

If $f_{1,1} = 0$, then Lemma \ref{lem:minorclear3} implies we may assume $f_{2,2} = f_{3,3} = 0$ too. In this case $f = det_{\{1,2,3\}, \{1,2,3\}}(\X) + f_0 = 0$.

If $f_0 = 0$, by Proposition \ref{prop: invertible hermite} we have $$wt(f) = q^3(q-1)(q^2+1)(q^3-1) = q^9 -q^8 + q^7 -2q^6 -q^4 +q^3 .$$
If $f_0 \neq 0$, we count the $3 \times 3$ Hermitian matrices such that $det_{\{1,2,3\}, \{1,2,3\} }(H)  = - f_0 \neq 0$ for $f_0 \in \F_q.$ Recall that, by hypothesis, $f$ is self--conjugate which implies $f_0 = f_0^q$. Recall that the number of $3 \times 3$ Hermitian matrices with $det_{\{1,2,3\}, \{1,2,3\}}(H) \neq 0$ is $$(q-1)(q^2+1)(q^3-1)q^3.$$ The number of such matrices where $det_{\{1,2,3\}, \{1,2,3\}}(H) = -f_0$ is $(q^2+1)(q^3-1)q^3$ which implies   then $wt(f) = q^9 - q^3(q^2+1)(q^3-1) = q^9 -q^8 -q^6+q^5 +q^3$.

There is always one value of $a$ and $b$ which make 

If $f_{1,1} \neq 0$, but $f_{2,2} = f_{3,3} = 0$, if $f_0 = 0$ there is one value of $a$ such that $$f_{1,1}f_{2,2}+f_{1,1}a^{q+1}+f_{2,2}b^{q+1}+ c^2f_{3,3} + cf_0 = 0.$$
In this case there are $(q-1)q^2(q^2-1)$ partial evaluations of weight $q^4-q^3-q$ and $(q-1)q^2$ partial evaluations of weight $q(q-1)(q^2+1)$.

If $f_0 \neq 0$ there are $q+1$ values of $a$ such that $$f_{1,1}f_{2,2}+f_{1,1}a^{q+1}+f_{2,2}b^{q+1}+ c^2f_{3,3} + cf_0 = 0.$$
In this case there are $(q-1)q^2(q^2-q-1)$ partial evaluations of weight $q^4-q^3-q$ and $(q-1)q^2(q+1)$ partial evaluations of weight $q(q-1)(q^2+1)$.

%For each value of $a$, there are at most $q+1$ values of $b$ which make the equation true. Therefore for each of the $q-1$ values of $c$ there are at leas

%If $f_{1,1}, f_{2,2} \neq 0$,  $$f_{1,1}f_{2,2}+f_{1,1}a^{q+1}+f_{2,2}b^{q+1}+ c^2f_{3,3} + cf_0 = 0.$$

%implies that we may assume without loss of generality that $f$ has no $2 \times 2$ minors. This implies that $f_{a,b,c}$ is of the form: 
%$$\begin{bmatrix}
 %   X_1&X_2& Y_1\\X_2^q&X_3 & Y_2\\ Y_1^q&Y_2^q&Y_3
  %  \end{bmatrix} +    f_{1,1} X_1 + f_{1,2}X_2 +f_{1,2}^qX_2^2 + f_{2,2}X_3 +f_{1,3} Y_1 + f_{1,3}^qY_1^q + f_{2,3}Y_2 +f_{2,3}^qY_2^q+ f_{3,3}Y_3 + \delta$$ where $f_{1,1}, f_{2,2}, f_{3,3} \in \F_q$, $f_{1,2}, f_{1,3}, f_{2,3} \in \F_{q^2}$.
    
%Any given specialization $Y_1 = a$, $Y_2 = b$, $Y_3 = c \neq 0$ has the $2 \times 2$ determinant as its maximal term. In this case there are $q^5-q^4$ specializations with each having at least $q^4-q^3-q$ nonzero values. 

Now we shall count the number of zeroes of $f$ when specializing $X_{3,3} = 0.$ The specialization $X_{1,3} = a, X_{2,3} = b, X_{3,3} = 0$ is of the form 

$$f_{a,b,0}(\X) = \begin{bmatrix}
    X_{1}&X_{2}& a\\X_{2}^q&X_{3} & b\\ a^q&b^q&0
    \end{bmatrix} +    f_{1,1} X_{1} + f_{2,2} X_{3} + f_{3,3} (0) + f_0.$$

Expanding the $3 \times 3$ determinant we obtain:

$$f_{a,b,0}(\X) = a^qbX_{2} + ab^qX_{2}^q - a^{q+1}X_3 - b^{q+1}X_1 +    f_{1,1} X_1 + f_{2,2}X_{3} + f_0.$$ 

%Rearranging the coefficients we obtain:

%$$f_{a,b,0} = X_1(f_1 - a^{q+1}) + X_3(f_3 - b^{q+1}) + (ab^q+f_2^q)X_2^q + (a^qb+f_2)X_2 +\delta_{a,b,0}.$$

%Let us suppose $a \neq 0$ (the case $b \neq 0$ is done in a similar way. If we consider the translation given by $X_3 \mapsto X_3 + \frac{f_3}{a^{q+1}}$, then the expression $$ = a^qbX_2 + ab^qX_2^q - a^{q+1}X_3 - b^{q+1}X_1 +    f_1 X_1 + f_2X_2 +f_2^qX_2^2 + \frac{f_3^2}{a^{q+1}} + \delta_{a,b,0}$$

%has $X_3$ as a maximal $1 \times 1$ minor. Therefore it is nonzero for $q^4-q^3$ values of $X_1, X_2, X_3$
Note that the coefficient of the $(1,1)$--minor of the partial specialization $f_{a,b,0}$ is $f_{1,1} - b^{q+1}$,  the coefficient of the $(2,2)$--minor is $f_{2,2} - a^{q+1}$, and the coefficients of the $(1,2)$--minor and the $(2,1)$--minor are $a^qb$ and $ab^q $ respectively. Lemma \ref{lem:systemsols} implies there are at most $q+1$ partial specializations such that all three coefficients are $0$. Therefore, for $q^4-q-1$ specializations we get a nonzero polynomial with at most $q^3$ zeroes.

In the previous cases we obtain

$$wt(f) \geq (q-1)q^2(q^2-1)(q^4-q^3-q)+(q-1)q^2(q(q-1)(q^2+1)) + (q^4-q-1)(q^4-q^3) $$
$$wt(f) \geq q^9-q^8-q^6+q^5-q^4+q^3 $$

\end{proof}

\begin{lemma}
Let $f \in \Fl$. Suppose that $f$ has no maximal $3 \times 3$ minors in its support. Nor that it has a $2 \times 2$ principal minor in its support. It does have $2 \times 2$ nonprincipal minors. Moreover, $f = f_{conj}$. Then $wt(f) \geq q^9-q^8-q^6+q^5$. 

\end{lemma}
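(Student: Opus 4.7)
The plan is to specialize $Y_1,Y_2,Y_3$ to fixed values $a,b,c$ and reduce to the $\ell=2$ setting, as in Lemma~\ref{lem:fulleval}. For each $(a,b,c)\in\F_{q^2}\times\F_{q^2}\times\F_q$ the specialization $f_{a,b,c}$ is a polynomial in $X_1,X_2,X_3$ on the top-left $2\times 2$ Hermitian block. Under the hypotheses (no $3\times 3$ determinant, no principal $2\times 2$ minor), every nonprincipal $2\times 2$ minor of $\X$ specializes to a polynomial of degree at most $1$ in $X_1,X_2,X_2^q,X_3$, so $f_{a,b,c}$ is always an $\F_{q^2}$-linear combination of $1,X_1,X_2,X_2^q,X_3$ with no $X_1X_3-X_2^{q+1}$ term. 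Thus by the first weight lemma for $\ell=2$, every nonzero $f_{a,b,c}$ satisfies $wt(f_{a,b,c})\geq q^4-q^3$. Letting $N$ count the triples with $f_{a,b,c}\equiv 0$, summing yields $wt(f)\geq(q^5-N)(q^4-q^3)$. Since $(q^5-q^2)(q^4-q^3)=q^9-q^8-q^6+q^5$, it suffices to show $N\leq q^2$ in all but one exceptional subcase.

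Let $\alpha,\beta,\gamma\in\F_{q^2}$ denote the coefficients in $f$ of $det_{\{1,2\},\{1,3\}}(\X)$, $det_{\{1,2\},\{2,3\}}(\X)$, $det_{\{1,3\},\{2,3\}}(\X)$ respectively (at least one is nonzero by hypothesis). A minor expansion yields the coefficients of $1,X_1,X_2,X_3$ in $f_{a,b,c}$:
\begin{align*}
A_1(a,b,c) &= f_{1,1}+Tr(\alpha b),\\
A_3(a,b,c) &= f_{2,2}-Tr(\beta a),\\
A_2(a,b,c) &= f_{1,2}-\alpha^q a^q+\beta b+\gamma c,\\
A_0(a,b,c) &= f_0+Tr(f_{1,3}a)+Tr(f_{2,3}b)+f_{3,3}c-Tr(\gamma ab^q),
\end{align*}
and the coefficient of $X_2^q$ is $A_2^q$ by self-conjugacy, so $f_{a,b,c}\equiv 0$ iff $A_0=A_1=A_2=A_3=0$. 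In each generic subcase, two of the equations $A_1=A_2=A_3=0$ already force at most $q^2$ triples. For instance when $\gamma\neq 0$ and $\alpha\neq 0$: the equation $A_1=0$ has $q$ solutions in $b$ (surjectivity of the trace), while $A_2=0$ determines $c=(\alpha^q a^q-\beta b-f_{1,2})/\gamma$; the constraint $c\in\F_q$ becomes the $\F_q$-linear equation $z-z^q=-R(b)$ in $z=\gamma\alpha a$, whose right-hand side one checks lies in $\ker Tr$, giving $q$ solutions in $a$ per $b$, so $\leq q^2$ triples overall. The remaining generic subcases ($\gamma=0$ with $\alpha\neq 0$ or $\beta\neq 0$; and $\gamma\neq 0,\alpha=0,\beta\neq 0$) are analogous, invoking $A_3=0$ instead of $A_1=0$ where needed.

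The exceptional subcase is $\alpha=\beta=0,\gamma\neq 0$ with $f_{1,1}=f_{2,2}=0$ and $y:=-f_{1,2}/\gamma\in\F_q$, in which $N$ may be as large as $q^3+q^2-q$. Here $X_1,X_3$ do not appear in $f$, and $f$ collapses to
$$f=Tr(u\,X_2)+g(Y_1,Y_2,Y_3),\quad u=f_{1,2}+\gamma Y_3,$$
with $g(Y_1,Y_2,Y_3)=f_0+Tr(f_{1,3}Y_1)+Tr(f_{2,3}Y_2)+f_{3,3}Y_3-Tr(\gamma Y_1Y_2^q)$. A direct count of zeros: for $Y_3\neq y$ we have $u\neq 0$, the map $X_2\mapsto Tr(uX_2)$ is $\F_q$-linear surjective with all fibers of size $q$, so each such $(Y_1,Y_2,Y_3)$ yields exactly $q^3$ zeros of $f$; for $Y_3=y$ we have $u=0$ and $f$ reduces to the constant $g(Y_1,Y_2,y)$, contributing $q^4$ zeros per pair $(Y_1,Y_2)$ with $g=0$. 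The main obstacle is bounding $N':=\#\{(Y_1,Y_2):g(Y_1,Y_2,y)=0\}\leq q^3+q^2-q$: fixing $Y_2$, the coefficient of $Y_1$ in $g$ is $Tr((f_{1,3}-\gamma Y_2^q)Y_1)$, which is surjective in $Y_1$ for all but one exceptional value $Y_2=(f_{1,3}/\gamma)^{1/q}$, giving exactly $q$ solutions in $Y_1$ for each of the other $q^2-1$ choices of $Y_2$, and at most $q^2$ more from the exceptional $Y_2$. Combining,
$$wt(f)=q^9-(q-1)q^7-N'q^4\geq q^9-q^8+q^7-(q^3+q^2-q)q^4=q^9-q^8-q^6+q^5.$$
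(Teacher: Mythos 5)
Your proof is correct, and it is broadly in the same spirit as the paper's (specialize the third row/column to $(a,b,c)$, count the partial specializations that vanish identically, and use the $\ell=2$ weight lower bound), but you execute it with considerably more care and a different bookkeeping. The paper first reduces, without explicit justification, to the situation where $supp(f)$ consists only of minors meeting row or column $3$ and where $f_{\{1,2\},\{2,3\}}\neq 0$ (the latter by a row/column permutation automorphism), and then splits on whether $Tr\bigl(f_{\{1,2\},\{2,3\}}b\bigr)$ is zero, counting nonzero positions in each piece. Your proof instead writes out the coefficients $A_0,A_1,A_2,A_3$ of the specialization explicitly, bounds the number $N$ of vanishing specializations via two of the linear conditions, and observes that $N\le q^2$ already gives $(q^5-N)(q^4-q^3)\ge q^9-q^8-q^6+q^5$. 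The one subcase where this clean bound fails (only $\gamma=f_{\{1,3\},\{2,3\}}\neq 0$ with $f_{1,1}=f_{2,2}=0$ and $-f_{1,2}/\gamma\in\F_q$) you handle by a sharper direct count via $N'$, which is correct and delivers exactly the same final bound. It is worth noting that this exceptional subcase can be avoided entirely: since $S_3$ acts transitively (via $\X\mapsto P^T\X P$) on the three conjugate-pair classes of nonprincipal $2\times 2$ minors, one may assume $\beta=f_{\{1,2\},\{2,3\}}\neq 0$, and then $A_3=0$ and $A_2=0$ already force $N\le q^2$ in every case. Your proof is more self-contained (it never relies on the paper's unexplained reduction to "$f_{0,0,0}=0$ and all minors meet row/column $3$"), at the cost of the extra exceptional-case analysis.
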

\begin{proof}
    We may assume that $f_{0,0,0} = 0$ and that all determinants in $supp(f)$ contain either row $3$ or column $3$.  For any partial specialization $f_{a,b,c}$ we may assume without loss of generality that $f_{a,b,c}(\X)$ is a polynomial with terms $X_1, X_2, X_2^q$ and $X_3$ (and a constant over $\F_q$). Without loss of generality, we assume $f_{\{1,2 \}, \{2,3 \}} \neq 0$

 %$$ f_{a,b,c}(\X) = \alpha(bX_1-a{X_2}^q)+\alpha^q(b^qX_1-a^qX_2)+\beta(bX_2-aX_3)+\beta^q(b^q{X_2}^q-a^qX_3)+\gamma(X_2c-ab^q)+\gamma^q({X_2}^qc-a^qb) + \delta$$
 
 %where $\alpha = f_{\{1,2 \}, \{2,3 \}}, \beta = f_{\{1,2 \}, \{2,3 \}}, \gamma = f_{\{1,3 \}, \{2,3 \}}, \delta \in \F_q $. As $f$ has a leading $2 \times 2$ principal minor at least one of $\alpha, \beta, \gamma \neq 0$. Without loss of generality, we assume $\alpha \neq 0$. This implies f = $\alpha(bX_1-a{X_2}^q)+\alpha^q(b^qX_1-a^qX_2)+g(X)$\\
    %This implies f = $X_1Tr(\alpha b)-Tr(\alpha a X_2)+g(X)$.
%Note that the coefficients of the variables $X_1, X_2, X_2^q,X_3$ in the polynomial $f_{a,b,c}(\X)$ are as follows:

\begin{itemize}
    \item  Coefficient of $X_1$: $f_{\{1,2 \}, \{2,3 \}} b +f_{\{1,2 \}, \{2,3 \}}^qb^q$
    \item Coefficient of $X_2$: $f_{\{1,2 \}, \{2,3 \}}^q a^q + f_{\{1,2 \}, \{2,3 \}} b +f_{\{1,3 \}, \{2,3 \}} c  $
    \item Coefficient of $X_2^q$: $f_{\{1,2 \}, \{2,3 \}} a + f_{\{1,2 \}, \{2,3 \}}^q b^q +f_{\{1,3 \}, \{2,3 \}}^q c^q  $
    \item Coefficient of $X_3$: $   f_{\{1,2 \}, \{2,3 \}} a -f_{\{1,2 \}, \{2,3 \}}^q a^q$
    
\end{itemize}

We shall compute $wt(f)$ by splitting the specializations into two exclusive cases:  $Tr(f_{\{1,2 \}, \{2,3 \}} b) \neq 0$ and $Tr(f_{\{1,2 \}, \{2,3 \}} b) = 0 $   

\textbf{Case 1: $Tr(f_{\{1,2 \}, \{2,3 \}} b) \neq 0$}: 

If $Tr(f_{\{1,2 \}, \{2,3 \}} b) \neq 0 $, then for any of the $q^6$ specializations possible for $X_2, X_3, a, c$, we can find $q-1$ values of $X_1$ such that $f \neq 0.$ Note we have $q^2-q$ values of $b$ such that $Tr(f_{\{1,2 \}, \{2,3 \}} b)\neq 0$ and we have $q^6$ remaining values for $X_2,X_3,a,c$ and $q-1$ values for $X_1$.
This implies there are at least $q^6(q^2-q)(q-1)$ nonzero values in case 1.

\textbf{Case 2: $Tr(f_{\{1,2 \}, \{2,3 \}} b) = 0$}:

If $Tr(f_{\{1,2 \}, \{2,3 \}} b) = 0$, there are $q^2-1$ values for $a$, such that $a \neq 0$. Thus the polynomial $Tr(f_{\{1,2 \}, \{2,3 \}} a X_2) \neq 0$ for $q^2-q$ values of $X_2$. This holds for any of the $q^3$ possible specialization of $X_3, X_1, c$. For the $q^2-1$ nonzero values of $a$, the partial specialization $f_{a,b,c}$ is a polynomial of at most degree $q$ in $X_2$. Thus, $f_{a,b,c} \neq 0$ for at least $q^2- q$ values of $X_2$. Therefore together with the $q$ values of $b$ where $Tr(f_{\{1,2 \}, \{2,3 \}} b) = 0$, thus there are at least $q^4(q^2-1)(q^2-q)$ nonzero values in case 2.

Adding the two exclusive cases together where either $Tr(f_{\{1,2 \}, \{2,3 \}} b) = 0$ or $Tr(f_{\{1,2 \}, \{2,3 \}} b) \neq 0 $ we get obtain: $$w(f) \geq q^6(q^2-q)(q-1)+q^4(q^2-1)(q^2-q).$$
%$w(f) \geq q(q^4-q)(q^4-q^3)$\\
%$w(f) \geq (q^5-q^2)(q^4-q^3)$\\
$$w(f) \geq q^9-q^8-q^6+q^5.$$ 

%The case $f_{\{1,2 \}, \{2,3 \}} \neq 0$ is done in the same way as $f_{\{1,2 \}, \{2,3 \}}  = 0$. If  $f_{\{1,2 \}, \{2,3 \}} = f_{\{1,2 \}, \{2,3 \}} = 0$, then 
% $$ f_{a,b,c}(\X) = f_{\{1,3 \}, \{2,3 \}}(X_2c-ab^q)+f_{\{1,3 \}, \{2,3 \}}^q({X_2}^qc-a^qb) + \delta c$$ where $f_{\{1,3 \}, \{2,3 \}} \neq 0.$ When $f_{\{1,2 \}, \{2,3 \}} = f_{\{1,2 \}, \{2,3 \}} = 0$, we split the calculation on two cases: $c \neq 0$ or $c = 0$.

%\textbf{Case 3: $c \neq 0$} 

%If $c \neq 0$, there are $q^2-q$ values of $X_2$ such that $f_{a,b,c}(\X) \neq 0$. Therefore there are at least  $(q-1)(q^2-q)q^6$ nonzero values of $f$.

%\textbf{Case 4: $c = 0$} 

%If $c = 0$, then $f_{a,b,0} \neq 0$ if  $Tr(f_{\{1,3 \}, \{2,3 \}} ab^q) \neq 0$. In this case  $a$ can take any nonzero value and $b$ has to take one of $q^2-q$ nonzero values.
 
% Therefore $f$ has at least  $(q^2-1)(q^2-q)q^4$ nonzero values. Adding the two exclusive cases together as before we get $$w(f) \geq q^6(q^2-q)(q-1)+q^4(q^2-1)(q^2-q).$$
%$w(f) \geq q(q^4-q)(q^4-q^3)$\\
%$w(f) \geq (q^5-q^2)(q^4-q^3)$\\
%$$w(f) \geq q^9-q^8-q^6+q^5.$$    
\end{proof}

\section{Finding $d(\CH)$ for $\ell \geq 4$}
Now we generalize some of the previous arguments for any $\ell$. These generalizations will establish certain relations between $\CH$ and $C^\mathbb{H}(\ell+1)$. These relations determine the minimum distance in the general case. First, we recall the following notation:
\begin{definition}
We denote the elementary matrix for row addition operations $L_{i,j}(m)$ as the corresponding matrix obtained by adding m times row $j$ to row $i$.
\end{definition}
Example:\\
Let $\ell = 3$, $L_{1,2}(\alpha) = \begin{bmatrix}
    1&\alpha & 0\\0& 1 & 0\\ 0&0&1
    \end{bmatrix}$\\

From the cases with $\ell = 2$ and $\ell = 3$, we see that the weight of a function $wt(ev(f))$ for  $f\in \Fl$ depends on three aspects: the ambient matrix space $\HM$, the size of the maximal minor on $supp(f)$ and the spread of the maximal minor on $supp(f)$. We now introduce the following notation:
\begin{definition} Denote by 
$w_{n,k,s}$ the minimum weight of a function $ev(f)$ such that $f$ is obtained by evaluating $n \times n $ matrices with maximal minor size $k \times k$ and the smallest spread size among all $k \times k$ minors in $supp(f)$ is $s$.
\end{definition}
 Instead of inducting on the degree of the determinant functions, as in the case of affine Grassmann codes $\CA$, we will find the minimum distance of the code $\CH$ by inducting on $k$ and then $s$. Now we generalize some previous lemmas using the new notation.

\begin{lemma}\label{lem: reduce to spread}
Let $f \in \Fl$. Suppose that $f$ has a maximal minor of size $k$ whose spread has size $=s$. Then $wt(f) \geq q^{\ell^2-s^2}w_{s,k,s}$. %By a
\end{lemma}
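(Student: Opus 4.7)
The plan is to reduce to the $s \times s$ Hermitian case by partitioning $\HM$ according to the values of the entries outside the block indexed by $T = I \cup J$, where $det_{I,J}$ is the given maximal minor of size $k$ and spread $s$. The entries $X_{i,j}$ with $i,j \in T$ parameterize $\mathbb{H}^{s}(\F_{q^2})$ and account for $s^2$ of the $\ell^2$ $\F_q$-degrees of freedom of $\HM$, while the remaining ``outside'' entries account for $\ell^2 - s^2$. For each choice $H'$ of outside entries, $f$ restricts to a polynomial $f^{(H')}$ on the inside $s \times s$ block, and
\[
wt(f) \;=\; \sum_{H'} wt\bigl(f^{(H')}\bigr),
\]
where the sum runs over the $q^{\ell^2 - s^2}$ choices of $H'$.

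The key step is to prove $wt(f^{(H')}) \geq w_{s,k,s}$ for every $H'$, which follows once I verify that $det_{I,J}$ survives as a maximal minor of size $k$ and spread $s$ in $f^{(H')}$ (after relabeling $T$ as $[s]$). For this, I apply Laplace expansion to every $det_{I'',J''}$ in $supp(f)$, writing its specialization as a $\F_{q^2}$-linear combination of inside-block minors $det_{A,B}$ with $A \subseteq I'' \cap T$ and $B \subseteq J'' \cap T$. Hence the coefficient of $det_{I,J}$ in $f^{(H')}$ can receive a contribution from $det_{I'',J''}$ only when $I \subseteq I''$ and $J \subseteq J''$; maximality of $det_{I,J}$ in $f$ forces $(I'',J'') = (I,J)$, so that coefficient equals $f_{I,J} \neq 0$. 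A symmetric argument excludes any minor $det_{A,B}$ in $supp(f^{(H')})$ with $I \subsetneq A$ and $J \subsetneq B$: such a minor would trace, through the Laplace expansion, back to some $det_{I'',J''}$ in $supp(f)$ with $I \subsetneq I''$ and $J \subsetneq J''$, again contradicting maximality.

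Combining these observations, $det_{I,J}$ is a maximal minor of size $k$ and spread exactly $s$ in every restriction $f^{(H')}$, so $wt(f^{(H')}) \geq w_{s,k,s}$ by definition of $w_{s,k,s}$, and summing over the $q^{\ell^2 - s^2}$ specializations yields the claim. The main obstacle is precisely this Laplace-expansion bookkeeping: one must track simultaneously that no contribution from another minor of $f$ cancels the coefficient of $det_{I,J}$ in $f^{(H')}$ and that no strictly larger minor containing $(I,J)$ can appear in $supp(f^{(H')})$ from specialization. Both conclusions hinge on the maximality hypothesis, which propagates cleanly through the specialization in the basis of inside-block minors.
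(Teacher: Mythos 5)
Your proof follows the paper's approach exactly: specialize all entries outside the $s \times s$ block indexed by the spread $T = I \cup J$, and observe that each of the $q^{\ell^2-s^2}$ resulting restrictions to $\mathbb{H}^{s}(\F_{q^2})$ still has a maximal minor of size $k$ and spread $s$, hence weight at least $w_{s,k,s}$. The paper asserts the preservation of maximality with only a parenthetical remark, whereas your Laplace-expansion bookkeeping (showing the coefficient of $det_{I,J}$ survives as $f_{I,J}\neq 0$, and that no larger minor containing $(I,J)$ can arise in the specialization) supplies the justification explicitly and correctly.
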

\begin{proof}
    Let $f$ be as in the statement of the lemma. This implies there are $\ell-s$ rows and columns which do not appear in the spread of the maximal minor. Then for any of the $q^{\ell^2-s^2}$ values one can put on these columns, $f$ specializes to a combination of $s\times s$ determinants with the same maximal minors (though others may be changed due to the specialization). As each specialization has weight $w_{s,k,s}$ and there are $q^{\ell^2-s^2}$ specializations, the statement follows. 
\end{proof}

Note the following behavior in the $3\times 3$ case:\\
Let $\X = \begin{bmatrix}
    X_1&X_2& Y_1\\X_2^q&X_3 & Y_2\\ Y_1^q&Y_2^q&Y_3
    \end{bmatrix}$ and $f = det_{\{1,2\},\{2,3\}}(\X) =  \begin{vmatrix}
    X_2& Y_1\\X_3 & Y_2
    \end{vmatrix}$\\
Note that $f \in \Fl$ is a minor of size $2$ and spread $3$. Also recall that the following map $\X \mapsto A^{(q)T}\X A$ is an automorphism which preserves the weight of of a codeword, that is $wt(f(\X)) = wt(f(A^{(q)T} \X A))$. \\
Then let $A = \begin{bmatrix}
    1&0&1\\0&1&0\\ 0&0&1
    \end{bmatrix}$ the map $\X \mapsto A^{(q)T} \X A$ is as follows:\\
    $\begin{bmatrix}
    1&0&0\\0&1&0\\ 1&0&1
    \end{bmatrix} \begin{bmatrix}
    X_1&X_2& Y_1\\X_2^q&X_3 & Y_2\\ Y_1^q&Y_2^q&Y_3
    \end{bmatrix}  \begin{bmatrix}
    1&0&1\\0&1&0\\ 0&0&1
    \end{bmatrix} = \begin{bmatrix}
    X_1&X_2& X_1+Y_1\\X_2^q&X_3 & X_2^q+Y_2\\ X_1+Y_1^q& X_2+Y_2^q&X_1+Y_1+Y_1^q+Y_3
    \end{bmatrix}$\\
Then $$f =  \begin{vmatrix}
    X_2& X_1+Y_1\\X_3 & X_2^q+Y_2
    \end{vmatrix} = \begin{vmatrix}
    X_2& Y_1\\X_3 & Y_2
    \end{vmatrix}+\begin{vmatrix}
    X_2& X_1\\X_3 & X_2^q
    \end{vmatrix}=det_{\{1,2\},\{2,3\}}(\X)-det_{\{1,2\},\{1,2\}}(\X)$$
Note that applying this map, $f(A^{(q)T} \X A)$ is combination of minors in $\Fl$ containing a maximal minor of size $2$ and spread $2$. We may transform a function of a given size and spread to another function of the same weight with the same size but with smaller spread. 
The matrix $A$  we used in the transformation may be obtained from elementary matrices. In this example we used $L_{1,3}(1)$. In a more general sense, we state the following:\\
\begin{lemma}\label{lem:spread reduction}
Let $f \in \Fl$. Suppose that $f$ has a maximal minor of minimal spread size, $\mathcal{M}$ of size $k$ and spread $=s>k$. Then there exists $\bar{f} \in \Fl$ of the same weight with a maximal minor with size $k$ and spread size $\leq s-1$.
\end{lemma}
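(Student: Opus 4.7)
The plan is to realize the spread reduction through the weight-preserving automorphism $\X \mapsto A^{(q)T}\X A$ with $A = I_\ell + \lambda E_{i_0,j_0} \in GL_\ell(\F_{q^2})$ an elementary matrix, combined with the multilinearity of the determinant. Since $|I|=|J|=k$ and $|I\cup J|=s>k$, the sets $I$ and $J$ differ, so we can pick $i_0\in I\setminus J$ and $j_0\in J\setminus I$. The map above adds $\lambda$ times column $i_0$ to column $j_0$ and $\lambda^q$ times row $i_0$ to row $j_0$, preserving Hermitianity; being an automorphism of $\CH$, it guarantees $wt(\bar f) = wt(f)$ where $\bar f(\X) := f(A^{(q)T}\X A)$.

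Expanding the transformed minor $\mathcal M$ by multilinearity, and using that $j_0\notin I$ (so the row operation does not disturb the rows of $\mathcal M$), yields
$$
det_{I,J}(A^{(q)T}\X A) = det_{I,J}(\X) + \sigma\lambda\, det_{I,J'}(\X),
$$
with $J'=(J\setminus\{j_0\})\cup\{i_0\}$ and $\sigma=\pm1$. The spread of $det_{I,J'}$ is $|I\cup J'| = |(I\cup J)\setminus\{j_0\}| = s-1$. Every other minor in $supp(f)$ transforms similarly; tracking the column and row operations shows that only the four minors of $f$ indexed by $(I,J'), (I,J), (I^*,J'), (I^*,J)$---where $I^*=(I\setminus\{i_0\})\cup\{j_0\}$---contribute to the coefficient of $det_{I,J'}$ in $\bar f$, giving a polynomial
$$
g(\lambda) = f_{I,J'} + \sigma f_{I,J}\lambda + \sigma' f_{I^*,J'}\lambda^q + \sigma\sigma' f_{I^*,J}\lambda^{q+1}.
$$
Because the linear coefficient $\sigma f_{I,J}$ is nonzero, $g$ is a nonzero polynomial of degree at most $q+1$; a root count in $\F_{q^2}^*$ (using that $g$ cannot have every nonzero element as a root once its linear coefficient is nonzero, since otherwise $g$ would be a scalar multiple of $\lambda^{q+1}-1$) produces some $\lambda\neq 0$ with $g(\lambda)\neq 0$, placing $det_{I,J'}\in supp(\bar f)$ with size $k$ and spread $s-1$.

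It remains to verify that some maximal minor of $\bar f$ has size $k$ and spread at most $s-1$. Because the change of variables is linear in the entries of $\X$, every minor appearing in $supp(\bar f)$ has the same size as some minor of $supp(f)$, so no size larger than those already present in $supp(f)$ can appear. If $det_{I,J'}$ is itself maximal in $\bar f$ we are done; otherwise it is strictly contained in some $det_{I''',J'''}\in supp(\bar f)$ of size $k'''>k$, but the assumption that $\mathcal M$ had \emph{minimal} spread among all maximal minors of $f$---together with the size-preservation just noted---forces the existence of some other size-$k$ minor in $\bar f$ of spread $\le s-1$ that is maximal.

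The main obstacle is the maximality argument in the last paragraph. Producing the size-$k$ minor of spread $s-1$ in $supp(\bar f)$ is essentially a one-line multilinearity expansion plus a root count on a polynomial of degree $q+1$. Checking Definition 7's containment condition for the new minor (or for a substitute of the same size), however, requires careful bookkeeping of how each higher-size minor of $f$ transforms under the elementary automorphism, and makes essential use of the hypothesis that $\mathcal M$ was chosen to have \emph{minimal} spread among all maximal minors of $f$.
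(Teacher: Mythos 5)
Your proof follows the same approach as the paper's: apply an elementary Hermitian-preserving automorphism $\X \mapsto A^{(q)T}\X A$ with $A$ a transvection, expand by multilinearity of the determinant, and choose $\lambda$ so that the coefficient of the spread-$(s-1)$ minor $\det_{I,J'}(\X)$ is nonzero. However, the step you flag as ``the main obstacle'' is not one: you have already observed that, because the change of variables is linear in the entries of $\X$, every minor in $supp(\bar f)$ has the same size as some minor in $supp(f)$, so no minor of size greater than $k$ can appear. Since $\det_{I,J'}$ has the maximal size $k$, it cannot be properly contained in any other minor of $supp(\bar f)$ and is therefore automatically maximal; the ``otherwise'' branch in your last paragraph is vacuous and needs no separate argument.

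Two further small points. First, the paper observes — by the same minimal-spread argument you invoke — that $f_{I,J'}$ and $f_{I^*,J}$ must already vanish, since those minors also have size $k$ and spread $s-1<s$ (hence would themselves be maximal minors of smaller spread, contradicting the choice of $\mathcal{M}$). Thus $g(\lambda)$ reduces to the two-term polynomial $\sigma f_{I,J}\lambda + \sigma' f_{I^*,J'}\lambda^q$, which is the form the paper works with. Second, your justification that ``$g$ would be a scalar multiple of $\lambda^{q+1}-1$'' is literally correct only when $q=2$ (where $q+1 = q^2-1$); for $q\ge 3$ the simpler count $\deg g \le q+1 < q^2-1$ already shows $g$ cannot vanish on all of $\F_{q^2}^*$ unless it is the zero polynomial, contradicting $f_{I,J}\neq 0$. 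Either way a suitable $\lambda$ exists, so the conclusion stands.
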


%WE ALSO NEED TO ESTABLISH THAT ALL OTHER MAXIMAL MINORS OF  F AND F BAR ARE THE SAME.
\begin{proof}
Let $\mathcal{M} = det_{I,J}(\X)$ be a maximal minor of size $k$ and spread size $s>k$ in the support of $f$.
Applying a suitable permutation of rows and columns, Without loss of generality, we may assume $$I = [k] \makebox{ and } J = \{s-k+1, s-k+2, ..., s\}.$$ 
%$s>k$ implies $|J-I| \neq \emptyset$\\
%Note $\exists A \in GL_{\ell}(\F_{q^2})$ such that $\X \mapsto A^{(q)T}\X A$ is equivalent to adding column $1$ to column $s$ and row $1$ to row $s$.\\
Let $\lambda \in \F_{q^2}^*$ such that $$-\lambda f_{I,J} - \lambda^qf_{I\cup \{s\}-\{1\},J\cup \{1\}-\{s\}} \neq 0.$$ We take $L_{1,s}(\lambda)\in GL_{\ell}(\F_{q^2})$. Consider the generic matrix $\Y=L_{1,s}(\lambda^q)\X L_{s,1}(\lambda)$. We shall prove that the codeword given by $\bar{f} = f(\Y) = f(L_{1,s}(\lambda^q)\X L_{s,1}(\lambda))$ has a maximal determinant with a smaller support. The multilinearity of the determinant implies that $$det_{I,J}(\Y) = det_{I,J}(\X) - det_{I,J\cup \{1\}-\{s\}}(\X).$$

Note that the only minors of the form $det_{A,B}(\Y)$ such that $det_{I,J\cup \{1\}-\{s\}}(\X)$ may appear in are the following:
$$\mathcal{Q} = det_{I\cup \{s\}-\{1\},J\cup \{1\}-\{s\}}(\Y), \mathcal{P} = det_{I\cup \{s\}-\{1\},J}(\Y).$$

Note the spread of $\mathcal{P}$ has size $s-1$, which contradicts that $\mathcal{M}$ is of minimal spread size. This implies $f_{I\cup \{s\}-\{1\},J} = 0$. This implies we only need to worry about $\mathcal{Q}$ and its corresponding $f_{I\cup \{s\}-\{1\},J\cup \{1\}-\{s\}}$. Thus the coefficient of $det_{I,J\cup \{1\}-\{s\}}(\X)$ in $\Bar{f}$ is  $-\lambda f_{I,J} - \lambda^qf_{I\cup \{s\}-\{1\},J\cup \{1\}-\{s\}} \neq 0$. Therefore $\mathcal{N} = det_{I,J\cup \{1\}-\{s\}}(\X) \in supp(\Bar{f})$ where $\mathcal{N}$ is size k and has spread $s-1$.

%we have a linear combination of determinants that includes $det_{I,I}(X)$.\\

\end{proof}
\begin{comment}
\begin{lemma}
Let $f \in \Fl$. Suppose that $f$ has a maximal minor of minimal spread, $\mathcal{M}$ of size $k$ and spread $=s=k+1$. Then we may find $\bar{f}$ of the same weight with a maximal minor of minimal spread, $\mathcal{N}$ with size $k$ and spread $s$.
\end{lemma}
\begin{proof}
Let $\mathcal{M} = det_{I,J}(\X)$ be a maximal minor of size $k$ and spread $s=k+1$ in the support of $f$.\\
Without loss of generality, we may assume $I = [k]$ and $J = \{2,3,...,k+1\}$\\
%$s>k$ implies $|J-I| \neq \emptyset$\\
Let $\lambda \in \F_{q^2}^*$ such that $\lambda*f_{I,J} + \lambda^q*f_{J,I} \neq 0$.\\
%Let $L_{1,s}(1) \in GL_{\ell}(\F_{q^2})$ such that $\X \mapsto A^{(q)T}\X A$ is equivalent to adding a $\lambda$*column $1$ to column $s$ and $\lambda^q$*row $1$ to row $s$.\\
We take $L_{1,s}(\lambda)\in GL_{\ell}(\F_{q^2})$
Let $\Y=A^{(q)T}\X A$, by the multilinearity of the determinant, this implies that $det_{I,J}(\Y) = det_{I,J}(\X) + \lambda*det_{I,I}(\X)$ and $det_{J,I}(\Y) = det_{J,I}(\X) + \lambda^q*det_{I,I}(\X)$.\\
By algebraic manipulation, this implies the new coefficient of $det_{I,I}(\X)$ in $\Bar{f}$ is $\lambda*f_{I,J} + \lambda^q*f_{J,I} \neq 0$\\
This implies $\mathcal{N} = det_{I,I}(\X) \in supp(\Bar{f})$ where $\mathcal{N}$ is size k and has spread $k$.\\
\end{proof}

As a direct consequence of the previous two lemmas, we obtain the following:
\end{comment}
As a direct consequence of the previous lemma, we obtain the following corollaries:
\begin{corollary}\label{cor:spread s to k}
Let $f \in \Fl$. Suppose that $f$ has a maximal minor of minimal spread, $\mathcal{M}$ of size $k$ and spread $=s>k$. Then we may find $\bar{f}$ of the same weight with a maximal minor of minimal spread, $\mathcal{N}$ with size $k$ and spread $k$.
\end{corollary}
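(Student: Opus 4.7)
The plan is to deduce the corollary from Lemma \ref{lem:spread reduction} by iterating the spread-reduction construction until the spread reaches $k$. First I would observe that any minor $det_{I,J}(\X)$ of size $k$ satisfies $|I \cup J| \geq \max(|I|,|J|) = k$, so $k$ is the smallest possible value of the spread of a $k$-minor; the iteration therefore cannot drop below this floor and must halt exactly there.

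Formally, the argument proceeds by induction on $s - k$. The base case $s = k$ is vacuous: take $\bar{f} = f$ and $\mathcal{N} = \mathcal{M}$. For the inductive step, assume the statement holds for every value of the minimal-spread parameter strictly smaller than $s$, and suppose our function $f$ has minimal spread $s > k$. Applying Lemma \ref{lem:spread reduction} to $f$ yields $f_1 \in \Fl$ with $wt(f_1) = wt(f)$ whose support contains a maximal minor of size $k$ with spread at most $s-1$. Let $s_1$ denote the minimal spread among all maximal $k$-minors in $supp(f_1)$; then $k \leq s_1 \leq s - 1$. If $s_1 = k$ we set $\bar{f} = f_1$ and are done. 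Otherwise, we invoke the inductive hypothesis on $f_1$, whose minimal-spread parameter is strictly smaller than $s$, to obtain the desired $\bar{f}$ of weight $wt(f_1) = wt(f)$ carrying a maximal $k$-minor $\mathcal{N}$ of spread exactly $k$.

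The only subtlety to verify is that the hypothesis of Lemma \ref{lem:spread reduction} remains available at each iteration: we need each $f_i$ to still contain a maximal minor of size exactly $k$ that realizes the minimal spread among maximal minors in its support. This is guaranteed by the lemma, which explicitly places a new size-$k$ maximal minor in the support of $f_1$; we simply re-select a minimal-spread maximal $k$-minor of $supp(f_1)$ as the input for the next iteration. Since the sequence of minimal spreads is a strictly decreasing sequence of integers bounded below by $k$, the process terminates in at most $s - k$ steps. There is no substantive obstacle, as the core work has already been done in Lemma \ref{lem:spread reduction}; the corollary is simply its iteration.
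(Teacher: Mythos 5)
Your argument is correct and is exactly the paper's approach: the paper simply says ``apply Lemma \ref{lem:spread reduction} repeatedly,'' and you have filled in the routine details of the iteration (induction on $s-k$, the floor $s \geq k$, and re-selecting a minimal-spread maximal $k$-minor at each step).
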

\begin{proof} Simply apply Lemma \ref{lem:spread reduction} repeatedly.
\end{proof}
\begin{corollary}\label{cor:weight bound}
Let $f \in \Fl$. Suppose that $f$ has a maximal minor of size $k$ whose spread has size $=s$. Then $wt(f) \geq q^{\ell^2-k^2}(w_{k,k,k})$.
\end{corollary}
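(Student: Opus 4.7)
The plan is to chain together Corollary \ref{cor:spread s to k} and Lemma \ref{lem: reduce to spread}. The corollary is essentially a packaging of these two earlier results into a single uniform bound indexed by the minor size $k$ alone, which is the form needed for the induction on $k$ that will drive the general $\ell$ case in the remainder of the section.

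First, I would reduce to the case where the spread of the maximal minor equals its size. If $s=k$ we take $\bar f = f$; otherwise, since $s>k$, Corollary \ref{cor:spread s to k} (which just iterates the spread-reduction move of Lemma \ref{lem:spread reduction}, using the transformation $\X \mapsto L_{1,s}(\lambda^q)\X L_{s,1}(\lambda)$ at each step) yields a function $\bar f \in \Fl$ of the same weight as $f$ whose maximal minor has size $k$ and spread exactly $k$.

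Next I would apply Lemma \ref{lem: reduce to spread} to $\bar f$ with spread parameter $s=k$: since $\bar f$ has a maximal minor of size $k$ with spread $k$, the lemma yields
$$wt(\bar f) \geq q^{\ell^2 - k^2}\, w_{k,k,k}.$$
Since $wt(f) = wt(\bar f)$ by construction, the bound transfers back to $f$ and gives the corollary.

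There is no real obstacle; the argument is a two-line consequence of the preceding machinery. The only point worth checking is that the iterated spread reductions stay inside the hypothesis of Lemma \ref{lem:spread reduction} — that is, that at each step the minor of minimal spread among maximal $k\times k$ minors still has spread strictly greater than $k$ until we finally reach spread $k$ — but this is automatic from the inductive statement of Corollary \ref{cor:spread s to k}.
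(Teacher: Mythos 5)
Your proof is correct and is essentially identical to the paper's: apply Corollary \ref{cor:spread s to k} to produce a weight-preserving $\bar f$ with a maximal minor of size $k$ and spread $k$, then invoke Lemma \ref{lem: reduce to spread} with $s=k$ to obtain $wt(f)=wt(\bar f)\geq q^{\ell^2-k^2}w_{k,k,k}$.
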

\begin{proof}
Let $f$ have a maximal minor of size $k$ whose spread has size $=s$. By Corollary \ref{cor:spread s to k}, we may apply an automorphism such that we transform $f$ to a function of the same weight with a maximal minor of minimal spread, with size $k$ and spread $k$. By Lemma \ref{lem: reduce to spread}, we may then state that  $wt(f) \geq q^{\ell^2-k^2}(w_{k,k,k})$. \end{proof}

\subsection{Finding $w_{n,k}$ via induction}
Having determined $w_{2,2}$, $w_{3,2}$ and $w_{3,3}$ (the base cases for $2\leq \ell \leq 3$), we shall now calculate $w_{\ell,k}$ for general $\ell$. Note that by Corollary \ref{cor:spread s to k}, we may assume that $f$ has a maximal minor of size $k$ whose spread is $k$. This implies we may assume we have a principal $k\times k$ minor in its support.

Therefore we shall define $w_{n,k}$.
\begin{definition}
$w_{n,k}$ denotes the minimum weight of a function $f \in \Fl$ evaluated on  $n \times n$ matrices, where $supp(f)$ has  a maximal minor of size $k \times k$.
\end{definition}
%We shall denote $w_{n,k}$ as the weight of a function $f$ such that $\ell = n$ and the maximal minor $\mathcal{M}$ is of size $k\times k$.\\
%For example: Let $n = 3$ and $k = 2$. This implies we are working over $3x3$ matrices with a maximal $2x2$ non principal minor.\\

\begin{lemma}\label{lem: induction}
$w_{k,k}\geq q^{k^2}-q^{k^2-1}-q^{k^2-3}+q^{k^2-2k}-1$
\end{lemma}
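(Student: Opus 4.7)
The plan is to proceed by induction on $k$, with the base case $k=2$ furnished by Lemma \ref{lem:wt22}: substituting $k=2$ into the claimed bound yields $q^4 - q^3 - q + q^0 - 1 = q^4 - q^3 - q$, which matches. For the inductive step, fix $f \in \mathcal{F}(k)$ whose support contains the full $k \times k$ determinant, normalized so that the coefficient of $det_{[k],[k]}(\X)$ equals $1$. Applying Lemma \ref{lem:minorclear} with $I = [k]$, I may further assume that $f$ has no $(k-1) \times (k-1)$ minors in its support, so $f$ decomposes as $det_{[k],[k]}(\X) + g$ where $g$ is a combination of minors of size at most $k-2$.

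Next I would decompose $\X$ as $\begin{bmatrix} \X' & Y \\ Y^{*} & c \end{bmatrix}$ where $\X'$ is the generic $(k-1) \times (k-1)$ Hermitian matrix, $Y = (Y_1, \ldots, Y_{k-1})^T \in \F_{q^2}^{k-1}$, and $c = X_{k,k} \in \F_q$. For each $(b,c) \in \F_{q^2}^{k-1} \times \F_q$, let $f_{b,c} \in \mathcal{F}(k-1)$ denote the partial evaluation obtained by substituting $Y = b$ and $X_{k,k}=c$. These $q^{2k-1}$ specializations partition the evaluation of $f$, so $wt(f) = \sum_{b,c} wt(f_{b,c})$. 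Expanding $det_{[k],[k]}(\X)$ along the last column reveals that the coefficient of $det(\X')$ in $f_{b,c}$ is exactly $c$. Hence for each of the $(q-1)q^{2k-2}$ specializations with $c \neq 0$, the polynomial $f_{b,c}$ still contains the full $(k-1)\times(k-1)$ determinant with nonzero coefficient, so the inductive hypothesis yields $wt(f_{b,c}) \geq w_{k-1,k-1}$.

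For the remaining $q^{2k-2}$ specializations with $c = 0$, the leading term vanishes, and what remains from the determinant expansion is a combination of $(k-2) \times (k-2)$ minors of $\X'$ whose coefficients are bilinear in $b$ and $b^{(q)}$. I would bound the number of choices of $b$ for which $f_{b,0}$ is identically zero via an analogue of Lemma \ref{lem:systemsols}, and lower-bound the weight for the remaining choices using the bound for an $\mathcal{F}(k-1)$-codeword with a nonzero $(k-2) \times (k-2)$ minor.

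The main obstacle is the arithmetic. The naive bound $wt(f) \geq (q-1)q^{2k-2} w_{k-1,k-1}$ alone, after substituting the inductive hypothesis, expands to contain a $-2q^{k^2-1}$ term, whereas the target has coefficient $-q^{k^2-1}$. Recovering the correct leading coefficient, together with the precise correction $+q^{k^2-2k} - 1$, requires extracting the missing $+q^{k^2-1}$ from either the $c = 0$ specializations or from a refined analysis of the $c \neq 0$ ones. I anticipate this will rely on a dichotomy analogous to Lemma \ref{lem:22classfier}: partitioning the $c \neq 0$ specializations according to whether a certain algebraic condition on the coefficients of $f_{b,c}$ holds, and applying the stronger weight lower bound in the favorable subcase while using only $w_{k-1,k-1}$ in the other. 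This subcase counting mirrors, at the inductive level, the strategy used for $k=3$ in Lemma \ref{lem:33classifier}.
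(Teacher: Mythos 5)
Your approach is essentially the paper's: specialize the last row and column, split on whether the $(k,k)$-entry makes the coefficient of the $(k-1)\times(k-1)$ principal minor nonzero, invoke the inductive bound $w_{k-1,k-1}$ for those $(q-1)q^{2k-2}$ specializations, and handle the remaining $q^{2k-2}$ specializations by tracking the $(k-2)\times(k-2)$ minors and using Lemma \ref{lem:systemsols} to bound the number of bad specializations by $q+1$. The reduction via Lemma \ref{lem:minorclear} is a cosmetically different but equivalent way of setting up the dichotomy (the paper instead works with the condition $x_{K+1,K+1}\neq -f_{[K],[K]}$ directly). Two things are off, one minor and one a real gap.

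First, your speculation about needing a dichotomy analogous to Lemma \ref{lem:22classfier} is misplaced. The recursion you have already written down,
\[
w_{k,k} \;\geq\; \bigl(q^{2k-1}-q^{2k-2}\bigr)\,w_{k-1,k-1} \;+\; \bigl(q^{2k-2}-q-1\bigr)\,q^{2k-3}\,w_{k-2,k-2},
\]
closes the arithmetic on its own. The ``missing'' $+q^{(k)^2-1}$ you notice from expanding the first term is exactly the leading contribution of the second term: $\bigl(q^{2k-2}-q-1\bigr)q^{2k-3}\cdot q^{(k-2)^2} = q^{k^2-1} - (\text{lower order})$. Once you write the $c=0$ contribution in terms of $w_{k-2,k-2}$ (via Corollary \ref{cor:weight bound}, scaling by $q^{(k-1)^2-(k-2)^2}=q^{2k-3}$) and substitute both inductive bounds, a routine polynomial expansion establishes the claimed inequality. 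No refined case analysis of the $c\neq 0$ specializations is required.

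Second, and this is the real gap: because the recursion drops to both $w_{k-1,k-1}$ and $w_{k-2,k-2}$, you need \emph{two} base cases before the induction can run. You cite only $k=2$. To establish $k=4$ you need $w_{3,3}$ and $w_{2,2}$, but to get $w_{3,3}$ from the recursion you would need $w_{1,1}$, for which the claimed bound is not even well-formed (it would involve $q^{-2}$). The paper therefore proves the $k=3$ case separately (Lemma \ref{lem:33classifier}, which gives $w_{3,3}\geq q^9-q^8-q^6+q^5-q^4+q^3 \geq q^9-q^8-q^6+q^3-1$) and starts the strong induction at $K=3$. You need to add this base case explicitly. Also a small precision issue: the criterion for a ``bad'' specialization $f_{b,0}$ is not that it is identically zero, but that it has no $(k-2)\times(k-2)$ minor in its support (it can still contain smaller minors); the count of at most $q+1$ bad values via Lemma \ref{lem:systemsols} is for the vanishing of all the $(k-2)\times(k-2)$ coefficients.
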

\begin{proof}
In Lemma \ref{lem:22classfier} and Lemma \ref{lem:33classifier}, the base cases for $k = 2$ and $k = 3$ are established. We assume the statement of the lemma is true for $2\leq k \leq K$ and prove that the statement holds for $K+1$. Without loss of generality we may assume $f$ has a maximal $K+1 \times K+1$ principal minor in its support.

As we did in the argument for the base cases, we specialize $det_{[K+1],[K+1]}(\X)$ along the $(K+1)-th$ column. Note there are exactly $q^{2(K+1)-1}-q^{2(K+1)-2}$ values for the $(K+1)-th$ column such that $x_{K+1,K+1} \neq -f_{[K],[K]}$. This leaves us with a non-trivial combination in the $K$ case and with a $K \times K$ minor in its support. Therefore, there are $q^{2(K+1)-1}-q^{2(K+1)-2}$ values for the specialization of the $(K+1)-th$ row and column where we specialize into the case $w_{K,K}$. These specializations contribute $(q^{2(K+1)-1}-q^{2(K+1)-2})w_{K,K}$ to $w_{K+1, K+1}$

We shall now consider the specialization where we do not obtain such a $K \times K$ maximal minor. This is the exclusive case where $x_{K+1,K+1} = -f_{[K],[K]}$. Hence we may assume there is no $K\times K$ minor in its support. Now we consider the possibilities for the $K-1 \times K-1$ minors. 

Note that all such minors are of the form  $det_{[K+1]-\{i,K+1\},[K+1]-\{j,K+1\}}(\X)$. Note that if the minor given by ${[K+1]-\{i,K+1\},[K+1]-\{j,K+1\}}$ does not appear in the partial evaluation, then the coefficients of the partial evaluation must satisfy the equation  $$f_{[K+1]-\{i,K+1\},[K+1]-\{j,K+1\}} = x_{i,K+1} {x_{j,K+1}}^q.$$ Lemma \ref{lem:systemsols} implies that the system of polynomial equations as stated above has at most $q+1$ solutions. 
%Note that the system such that all of the $K-1 \times K-1$ minors vanish is of the form stated in Lemma 8. 
Thus there are at least $q^{2(K+1)-2}-q-1$ values for the partial evaluation on the $(K+1)-th$ column such that we have a non-trivial combination with a $K-1 \times K-1$ minor in its support. These specializations contribute $q^{2K-1}(w_{K-1,K-1})$ to $w_{K+1, K+1}$. We put together all the inequalities and we obtain:

$$w_{K+1, K+1} \geq $$ 
\begin{tabular}{ll}

     & $(q^{2(K+1)-1}-q^{2(K+1)-2})w_{K,K}+ (q^{2(K+1)-2}-q-1)q^{2(K+1)-3}w_{K-1,K-1}$ \\
% \end{tabular}
% 
% \begin{tabular}{rcl}
%     & Which implies   &\\
    $=$   & $(q^{2K+1}-q^{2K})w_{K,K}+ (q^{2K}-q-1)(q^{2K-1})w_{K-1,K-1}$ \\
%\end{tabular}

%& By substituting our induction hypothesis &\\  

%\begin{tabular}{rcl}       
 $\geq$ &$(q^{2K+1}-q^{2K})(q^{K^2}-q^{K^2-1}-q^{K^2-3}+q^{K^2-2K}-1)+$ \\
  &  $(q^{2K}-q-1)(q^{2K-1})(q^{K^2-2K+1}-q^{K^2-2K}-q^{K^2-2K-2}+q^{K^2-4K+2}-1)$\\
%\end{tabular}
%& Through some algebraic manipulation, we obtain the following &\\ 
%\begin{tabular}{rcl}
  $\geq$ &$q^{K^2+2K+1}-q^{K^2+2K}-q^{K^2+2K-2}+q^{K^2+1}-q^{K^2} +q^{K^2-1}$\\
  & $+q^{K^2-2}+q^{K^2-3}-q^{K^2-2K+2}-q^{4K-1}-q^{2K+1}+2q^{2K}-q^{2K-1}$\\

%Note $q^{K^2+1}-q^{K^2}+q^{K^2-1}+q^{K^2-2}+q^{K^2-3}-q^{K^2-2K+2}-q^{4K-1}-q^{2K+1}+2q^{2K}-q^{2K-1} \geq q^{K^2-1}-1$, this implies\\
%& Note the following terms are a polynomial with a positive leading term which is strictly positive for $q\geq 2$. This implies &\\

  $\geq$ &$q^{K^2+2K+1}-q^{K^2+2K}-q^{K^2+2K-2}+q^{K^2-1}-1$\\

%& Equivalently &\\
%$w_{K+1,K+1}$ 
$\geq$ &$q^{{(K+1)}^2}-q^{{(K+1)}^2-1}-q^{{(K+1)}^2-3}+q^{{(K+1)^2-{2(K+1)}}}-1$\\

\end{tabular}\\
Therefore, by the principle of strong mathematical induction, the bound is met.
%$\Rightarrow w_{K+1,K+1} = $\\

%$\Rightarrow w_{K+1,K+1} \geq q^{K^2+2K+1}-q^{K^2+2K}-q^{K^2+2K-2}+q^{K^2+1}-q^{K^2}+q^{K^2-1}+q^{K^2-2}+q^{K^2-3}-q^{K^2-2K+2}-q^{4K-1}-q^{2K+1}+2q^{2K}-q^{2K-1}$\\

%$\Rightarrow w_{K+1,K+1} \geq q^{K^2+2K+1}-q^{K^2+2K}-q^{K^2+2K-2}+q^{K^2-1}-1$\\

%$\Rightarrow w_{K+1,K+1} \geq q^{{(K+1)}^2}-q^{{(K+1)}^2-1}-q^{{(K+1)}^2-3}+q^{{(K+1)^2-{2(K+1)}}}-1$
%By a similar proof Lemma 20, we may assume we are in the case where 
\end{proof}

\begin{proposition}\label{prop: mindist}
$w_{k,k}\geq q^{k^2}-q^{k^2-1}-q^{k^2-3}$
\end{proposition}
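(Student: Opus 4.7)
The plan is to obtain Proposition \ref{prop: mindist} as an immediate corollary of Lemma \ref{lem: induction}, since the proposition is precisely the statement of the lemma with the lower-order correction terms $+q^{k^2-2k}-1$ discarded. All the substantive induction has already been carried out in Lemma \ref{lem: induction}; what remains is a positivity check.

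First, I would invoke Lemma \ref{lem: induction} to write
$$w_{k,k} \;\geq\; q^{k^2}-q^{k^2-1}-q^{k^2-3}+q^{k^2-2k}-1.$$
Next, I would verify that for the regime of interest (namely $k \geq 2$, matching the hypothesis of the main Claim that $\ell \geq 2$), the extra term $q^{k^2-2k}-1$ is nonnegative. Since $k^2-2k = k(k-2) \geq 0$ whenever $k \geq 2$, and $q \geq 2$, we have $q^{k^2-2k} \geq q^0 = 1$, so $q^{k^2-2k}-1 \geq 0$. Dropping this nonnegative quantity yields the cleaner bound
$$w_{k,k} \;\geq\; q^{k^2}-q^{k^2-1}-q^{k^2-3},$$
which is exactly the assertion of the proposition.

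There is no real obstacle here; the proposition simply repackages Lemma \ref{lem: induction} in the form that will feed into the final minimum distance computation. The reason the stronger form $+q^{k^2-2k}-1$ appeared in the lemma is that the induction needed the slightly stronger hypothesis to propagate cleanly through the estimate for $w_{K+1,K+1}$ (where terms of the form $q^{K^2-2K+1}$ arise from the specialization along the $(K+1)$-th row/column); once the induction has terminated, those auxiliary terms can be safely dropped to state the headline bound matching the target of the main Claim, $d(\CH) = q^{\ell^2}-q^{\ell^2-1}-q^{\ell^2-3}$.
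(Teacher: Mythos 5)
Your proof is correct and takes essentially the same approach as the paper: invoke Lemma \ref{lem: induction} and then drop the nonnegative surplus $q^{k^2-2k}-1$ for $k\geq 2$.
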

\begin{proof}
By Lemma \ref{lem: induction}, we have $w_{k,k}\geq q^{k^2}-q^{k^2-1}-q^{k^2-3}+q^{k^2-2k}-1$.
Note that for $k\geq 2$ we have $ q^{k^2}-q^{k^2-1}-q^{k^2-3}+q^{k^2-2k}-1 \geq q^{k^2}-q^{k^2-1}-q^{k^2-3}$. In fact, the equality is met for $k = 2$.
Therefore:
$$w_{k,k}\geq q^{k^2}-q^{k^2-1}-q^{k^2-3}$$
\end{proof}

Now we are finally ready to prove the main result of this paper
\begin{theorem}
Suppose that $\ell \geq 2$. Then $$d(\CH)  = q^{\ell^2}- q^{\ell^2-1}- q^{\ell^2-3}.$$ 
\end{theorem}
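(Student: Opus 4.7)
My plan is to bracket $d(\CH)$ with matching lower and upper bounds of $q^{\ell^2}-q^{\ell^2-1}-q^{\ell^2-3}$. The lower bound falls out of Proposition \ref{prop: mindist} combined with Corollary \ref{cor:weight bound}, while the upper bound follows from an explicit codeword obtained by inserting the $\ell = 2$ extremal combination into the top-left block of an $\ell \times \ell$ Hermitian matrix.

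For the lower bound, let $f \in \Fl$ be nonzero. If $supp(f)$ contains a maximal minor of size $k \geq 2$, Corollary \ref{cor:weight bound} gives $wt(ev(f)) \geq q^{\ell^2-k^2}\, w_{k,k,k}$; since any size-$k$ minor of a $k \times k$ matrix must be principal, $w_{k,k,k}$ coincides with $w_{k,k}$, which Proposition \ref{prop: mindist} bounds below by $q^{k^2}-q^{k^2-1}-q^{k^2-3}$. The product collapses to $q^{\ell^2}-q^{\ell^2-1}-q^{\ell^2-3}$ for every such $k$, so the bound is uniform. If instead every maximal minor of $f$ has size at most $1$, then $f$ is either a nonzero constant (weight $q^{\ell^2}$) or a nonzero affine combination of entries, whose weight is at least $q^{\ell^2-1}(q-1) = q^{\ell^2}-q^{\ell^2-1}$; both comfortably exceed the target.

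For the upper bound I take
$$f(\X) = 1 + X_{1,1} X_{2,2} - X_{1,2}^{q+1},$$
which depends only on the top-left $2 \times 2$ block. With $f_0 = 1$, $f_{1,1}=f_{2,2}=0$, $f_{1,2}=0$, and $f_{\{1,2\},\{1,2\}} = 1$, the hypothesis $f_0 + f_{1,2}^{q+1} - f_{1,1}f_{2,2} = 1 \neq 0$ of Lemma \ref{lem:22classfier} holds, so the restriction of $f$ to that block has exactly $q^3 + q$ zeros among the $q^4$ Hermitian $2 \times 2$ matrices. Because the remaining $\ell^2 - 4$ Hermitian entries are unconstrained, $f$ vanishes on $(q^3 + q)q^{\ell^2-4} = q^{\ell^2-1}+q^{\ell^2-3}$ matrices of $\HM$, producing weight exactly $q^{\ell^2}-q^{\ell^2-1}-q^{\ell^2-3}$. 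Proposition \ref{prop: mindist} already absorbs the hard inductive content, so the only points to watch when assembling the theorem are the identification $w_{k,k,k} = w_{k,k}$ in the $k \times k$ ambient case and the separate handling of the low-$k$ edge cases; neither should pose a real obstacle.
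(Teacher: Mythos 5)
Your proposal is correct and follows essentially the same route as the paper: lower bound via Corollary \ref{cor:weight bound} and Proposition \ref{prop: mindist}, upper bound via the codeword $ev(\det_{\{1,2\},\{1,2\}}(\X)+1)$. You are in fact slightly more careful than the printed proof in two spots---explicitly identifying $w_{k,k,k}$ with $w_{k,k}$, and separately disposing of the $k\le 1$ edge cases where the formula $q^{k^2}-q^{k^2-1}-q^{k^2-3}$ is not meaningful---both of which the paper glosses over.
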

\begin{proof}
Assume $\ell \geq 2$. Let $f \in \Fl$ with a maximal minor of size $k$ and spread $s$. 
By Corollary \ref{cor:weight bound}, $wt(f) \geq q^{\ell^2-k^2}(w_{k,k,k})$. By Proposition \ref{prop: mindist} $$w_{k,k,k}\geq q^{k^2}-q^{k^2-1}-q^{k^2-3}.$$ Taken together  $$wt(f) \geq q^{\ell^2-k^2}(q^{k^2}-q^{k^2-1}-q^{k^2-3}) = q^{\ell^2}-q^{\ell^2-1}-q^{\ell^2-3}.$$ 
This implies $d(\CH) \geq q^{\ell^2}- q^{\ell^2-1}- q^{\ell^2-3}.$

Equality is obtained by $$f= det_{\{1,2\},\{1,2\}}(\X)+1.$$ % then $wt(f) = q^{\ell^2}- q^{\ell^2-1}- q^{\ell^2-3}$. Therefore $$d(\CH)  = q^{\ell^2}- q^{\ell^2-1}- q^{\ell^2-3}.$$ 
\end{proof}

\begin{corollary}
Let $\ell \geq 2$, $d(\CH) \geq d(\CA)$. That is $$q^{\ell^2}- q^{\ell^2-1}- q^{\ell^2-3} \geq  \prod_{i=0}^{\ell-1} (q^{\ell}-q^{i}) $$
\end{corollary}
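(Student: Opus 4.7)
The plan is a purely algebraic estimate, since the statement to prove is just a numerical inequality between the two closed-form minimum distances. First I would put the right-hand side in a convenient normalized form by pulling out powers of $q$: factoring $q^i$ from the $i$-th factor of $\prod_{i=0}^{\ell-1}(q^\ell - q^i)$, and then pulling a $q^j$ out of each remaining $(q^j-1)$, yields
$$\prod_{i=0}^{\ell-1}(q^\ell - q^i) \;=\; q^{\binom{\ell}{2}}\prod_{j=1}^{\ell}(q^j-1) \;=\; q^{\ell^2}\prod_{j=1}^{\ell}\bigl(1-q^{-j}\bigr).$$
Dividing the desired inequality through by $q^{\ell^2}$ reduces everything to showing
$$\prod_{j=1}^{\ell}\bigl(1-q^{-j}\bigr) \;\leq\; 1 - q^{-1} - q^{-3}.$$

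Next I would reduce to the base case $\ell=2$. Each factor $(1-q^{-j})$ lies in $(0,1]$, so the left-hand product is monotonically non-increasing in $\ell$; hence it is enough to check the inequality at $\ell=2$. For $\ell=2$ the left-hand side expands as $(1-q^{-1})(1-q^{-2}) = 1 - q^{-1} - q^{-2} + q^{-3}$, and comparing with $1 - q^{-1} - q^{-3}$ the inequality collapses to $2q^{-3}\leq q^{-2}$, i.e.\ $q\geq 2$, which is automatic because $q$ is a prime power.

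I do not expect any real obstacle: the argument is elementary bookkeeping with powers of $q$, the reduction to $\ell=2$ via monotonicity is immediate, and the base case is one line of algebra. The only mild subtlety worth noting is that the inequality is an equality exactly at $q=2,\ \ell=2$, which matches the observation from Lemma \ref{lem:22classfier} that $d(\CH)=d(\CA)=6$ in that smallest Hermitian case, so the bound we prove is sharp there and strict everywhere else.
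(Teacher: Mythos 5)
Your proof is correct and follows essentially the same route as the paper: the paper's one-line chain $q^{\ell^2}- q^{\ell^2-1}- q^{\ell^2-3} \geq (q^2-1)(q^2-q)q^{\ell^2-4} \geq \prod_{i=0}^{\ell-1}(q^\ell - q^i)$ is precisely your normalization by $q^{\ell^2}$ followed by the $\ell=2$ base case and the monotonicity observation that the extra factors $(1-q^{-j})$ for $j\geq 3$ can only decrease the product. Your write-up just makes the two hidden steps explicit and correctly notes the sharpness at $q=2,\ \ell=2$.
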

\begin{proof}

Note that $q^{\ell^2}- q^{\ell^2-1}- q^{\ell^2-3} \geq (q^2-1)(q^2-q)q^{\ell^2-4} \geq  \prod_{i=0}^{\ell-1} (q^{\ell}-q^{i}) $.

\end{proof}

The most remarkable fact about the affine Hermitian Grassman code is that it a has better minimum distance than the affine Grassmann code while having the same length and dimension. In the tables below, we compare the parameters of these codes for $\ell = 2$ and $\ell = 3$.\\
\begin{tabular}{cc}
     
    \begin{tabular}{|c|c|c|c|c|}
        \hline
        %%%%For tables, you have to add the & to represent the cells
        q & n & k & $d(C^{\mathbb{A}}(2,4))$ & $d(C^{\mathbb{H}}(2))$\\
        \hline
        2 & 16 & 6 & 6 & 6\\
        \hline
        3 & 81 & 6 & 48 & 51\\
        \hline
        4 & 256 & 6 & 180 & 188\\
        \hline
        5 & 625 & 6 & 480 & 495\\
        \hline
        7 & 2,401 & 6 & 2,016 & 2,051\\
        \hline
        8 & 4,096 & 6 & 3,528 & 3,576\\
        \hline
        9 & 6,561 & 6 & 5,760 & 5,823\\
        \hline
        %11 & 14,641 & 6 & 13,200 & 13,299\\
        %\hline 
    \end{tabular}
    & 
    \begin{tabular}{|c|c|c|c|c|}
        \hline
        %%%%For tables, you have to add the & to represent the cells
        q & n & k & $d(C^{\mathbb{A}}(3,6))$ & $d(C^{\mathbb{H}}(3))$\\
        \hline
        2 & 512 & 20 & 168 & 192\\
        \hline
        3 & 19,683 & 20 & 11,232 & 12,393\\
        \hline
        4 & 262,144 & 20 & 181,440 & 192,512\\
        \hline
        5 & 1,953,125 & 20 & 1,488,000 & 1,546,875\\
        \hline
        7 & 40,353,607 & 20 & 33,784,128 & 34,471,157\\
        \hline
        8 & 134,217,728 & 20 & 115,379,712 & 117,178,368\\
        \hline
        9 & 387,420,489 & 20 & 339,655,680 & 343,842,327\\
     %   \hline
    %    11 & 2,357,947,691 & 20 & 2,124,276,000 & 2,141,817,249\\
        \hline 
    \end{tabular} \\
\end{tabular}
\\

\section{Dual Code}
In this section we study some properties of the dual code $\CH^\perp$ and its relation to the dual affine Grassmann code $\CA^\perp$. In particular, we prove that the minimum distance codewords of the dual codes are similar for both the dual of affine Grassmann codes and dual affine Hermitian Grassmann codes. We begin by defining a dual code. Please recall the following:
\begin{definition}
Let $C$ be an $[n,k]$ linear code, then its dual code $C^\perp$ is its orthogonal complement as a vector space. That is, $C^\perp$ is an $[n, n-k]$ code such that:

$$C^\perp := \{ h \in \F_q^n \ | \ \sum\limits_{i=1}^n c_ih_i = 0 \ \forall c \in C \}$$  
\end{definition}

Let $\CH^\perp$ denote the Dual of the affine Hermitian Grassmann code. Recall that $$\CH^\perp := \{ h \in \F_{q^2}^{\HM} \ | \ \sum\limits_{H\in \HM} h_H f(H) = 0 \ \forall f \in \Fl  \} $$

Then we know the following

$\CH^\perp$ is an $[q^{\ell^2}, q^{\ell^2}-\binom{2\ell}{\ell}]$ code and $\CH^\perp = {\CH^\perp}^(q)$. The latter fact is due to  \cite{Sticht}. 

In fact, we know more about the minimum distance of $\CA^\perp$.
\begin{proposition}\cite[Theorem 17]{BGT2}

Let $\ell \geq 2$. The minimum distance $d(\CA^\perp)$ of the code $\CA^\perp$ satisfies:
$$
d(\CA^\perp)=  \begin{cases} 
      3 & q>2 \\
      4 & q = 2 
   \end{cases}
$$
\end{proposition}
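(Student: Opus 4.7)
The plan is to view $d(\CA^\perp)$ as the minimum size of an $\F_q$-linearly dependent collection of columns of a generator matrix of $\CA$. Each column is indexed by a matrix $P \in \mathbb{M}^{\ell \times \ell}(\F_q)$ and records the values $det_{I,J}(P)$ for every pair $(I,J)$ with $\#I = \#J$.

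First I would pin down the lower bound using the constant minor $det_{\emptyset,\emptyset}$, which evaluates to $1$ at every matrix. Any nontrivial dependence $\sum_{i=1}^{t} \lambda_i\, ev(P_i) = 0$ with $\lambda_i \in \F_q^{\ast}$ forces $\sum_i \lambda_i = 0$, so $t = 1$ is impossible. For $t = 2$ the relation gives $ev(P_1) = ev(P_2)$, and since the $1\times 1$ minors are just the matrix entries, this forces $P_1 = P_2$, a contradiction. Hence $d(\CA^\perp) \geq 3$ in general. When $q = 2$ each $\lambda_i$ equals $1$, so $\sum_i \lambda_i = 0$ requires $t$ to be even; combined with the $t = 2$ argument this yields $d(\CA^\perp) \geq 4$.

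Next I would exhibit explicit dependencies meeting these bounds. Write $E_{ij}$ for the matrix whose only nonzero entry is a $1$ in position $(i,j)$. For $q > 2$, pick $c \in \F_q \setminus \{0,1\}$ and take the three distinct rank-$\leq 1$ matrices $P_1 = 0$, $P_2 = E_{11}$, $P_3 = c E_{11}$. Every minor of size $\geq 2$ vanishes on each of them, and the only nontrivial $1\times 1$ minor is the $(1,1)$-entry. The system $\lambda_1 + \lambda_2 + \lambda_3 = 0$ together with $\lambda_2 + c\lambda_3 = 0$ has the nonzero solution $(\lambda_1, \lambda_2, \lambda_3) = (c-1, -c, 1)$. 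For $q = 2$ consider the four distinct rank-$\leq 1$ matrices $P_1 = 0$, $P_2 = E_{11}$, $P_3 = E_{12}$, $P_4 = E_{11} + E_{12}$: all $2\times 2$ and larger minors vanish, the entries in positions $(1,1)$ and $(1,2)$ each sum to $0$ over $\F_2$, and the constant minors contribute $4 \equiv 0 \pmod 2$. Thus $\sum_{i=1}^{4} ev(P_i) = 0$.

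The main obstacle is not in the constructions themselves, which are short once rank-$1$ matrices are in hand, but in securing the lower bound cleanly over $\F_2$: the parity trick on the constant minor is essential, since otherwise one would only obtain $d \geq 3$. Combining the two bounds gives the stated value of $d(\CA^\perp)$.
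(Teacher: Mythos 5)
Your proposal is correct, and the paper itself gives no proof of this proposition — it only cites it as \cite[Theorem 17]{BGT2} — so there is nothing internal to compare your argument against. Still, your reasoning is the standard and natural one: identifying $d(\CA^\perp)$ with the minimum number of $\F_q$-linearly dependent columns of a generator matrix, using the constant minor $\det_{\emptyset,\emptyset}\equiv 1$ to rule out weight $1$, the $1\times 1$ minors (which recover all entries of $P$) to rule out weight $2$, and over $\F_2$ the parity of the constant coordinate to additionally rule out weight $3$. The explicit dependencies you give on rank-$\leq 1$ matrices, $\{0,\,E_{11},\,cE_{11}\}$ for $q>2$ and $\{0,\,E_{11},\,E_{12},\,E_{11}+E_{12}\}$ for $q=2$, are precisely (up to automorphism) the support sets that the paper later records, via \cite{AfGrass}, as characterizing the minimum-weight codewords of $\CA^\perp$, so your construction meets the lower bound and is consistent with the known classification. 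One small point worth making explicit: in the $q>2$ case, the matrices $0$, $E_{11}$, $cE_{11}$ are distinct only because $c\notin\{0,1\}$, which is exactly why $q>2$ is needed; this is implicit but would sharpen the writeup.
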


In subsequent work, one of the named authors along with P. Beelen characterized the minimum distance codewords of $\CA^\perp$.

\begin{definition}
Let $f \in \CH^\perp$ $supp(f) = \{H\in \mathbb{H} | c_{H} \neq 0 \}$
\end{definition}

\begin{definition}
Let $k\leq \ell$, we denote $I_k$ as the matrix with an $k\times k$ identity block and the remaining entries are 0. That is $a_{i,i} = 1$ if $1\leq i\leq k$.
\end{definition}

\begin{definition}
We denote $E_{i,j}$ to be the $\ell \times \ell$ matrix which all entries equal 0 except the $(i,j)-th$ entry which equals 1.
\end{definition}

\begin{proposition}
\cite[Theorem 8]{AfGrass}
Let $\ell \geq 2$, let $q > 2$ and let $c\in \CA^\perp$ be a weight 3 codeword with support $supp(c) = \{N_1, N_2, N_3\}$. Then there exists an automorphism such that we may map $c \rightarrow c'$ where $supp(c') = \{ 0, I_1, \alpha I_1\}$ and $\alpha = (\frac{c_{N_2}}{c_{N_1}+c_{N_2}})$.\\
Conversely, given $\alpha \in \F_{q}\setminus \{0,1\}$, there exists a codeword $c\in \CA^\perp$ with $supp(c) = \{ 0, I_1, \alpha I_1\}$. Its nonzero coordinates satisfy
$$ c_{I_1} = \frac{-\alpha}{\alpha -1}c_0  \makebox{ and } c_{\alpha I_1} = \frac{1}{\alpha -1}c_0$$
\end{proposition}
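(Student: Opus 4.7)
The plan is to exploit the parity-check definition of $\CA^\perp$ together with the translation and $GL_\ell(\F_q)$ automorphisms of $\CA$ stated earlier from \cite{BGT2}. First I would use the translation $\X \mapsto \X + M$ with $M = -N_1$ to shift the support of $c$ to $\{0, M_2, M_3\}$, where $M_2 = N_2 - N_1$ and $M_3 = N_3 - N_1$ are both nonzero and distinct; the values at the new positions are $c_{N_1}, c_{N_2}, c_{N_3}$. Applying the defining identity $\sum_H c_H f(H) = 0$ to $f = 1$ yields $c_{N_1} + c_{N_2} + c_{N_3} = 0$, and applying it to each $f = det_{I,J}(\X)$ with $|I| = |J| = k \geq 1$ (which vanishes at the zero matrix) gives $c_{N_2}\, det_{I,J}(M_2) + c_{N_3}\, det_{I,J}(M_3) = 0$.

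Reading the $1 \times 1$ instances off I would immediately deduce $M_3 = \lambda M_2$ entrywise, with $\lambda = -c_{N_2}/c_{N_3}$. Substituting this back into the $k \times k$ identities yields $(\lambda^k - \lambda)\, det_{I,J}(M_2) = 0$, so for every $k$ such that $M_2$ admits a nonzero $k \times k$ minor we must have $\lambda^{k-1} = 1$. Since $M_2 \neq M_3$ forces $\lambda \neq 1$, no $k \geq 2$ minor of $M_2$ can be nonzero, so the rank of $M_2$ is exactly one. This rank step is the crux of the argument but it falls out cleanly from the minor identities.

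Next I would use left- and right-multiplication automorphisms $\X \mapsto A \X B$ with $A, B \in GL_\ell(\F_q)$ to send the rank-one matrix $M_2$ to $I_1$; because the action is linear, it simultaneously sends $M_3 = \lambda M_2$ to $\lambda I_1$, so the support becomes $\{0, I_1, \alpha I_1\}$ with $\alpha = \lambda$. Combining $\lambda = -c_{N_2}/c_{N_3}$ with $c_{N_3} = -(c_{N_1}+c_{N_2})$ gives $\alpha = c_{N_2}/(c_{N_1}+c_{N_2})$, establishing the forward direction.

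For the converse I would simply solve the parity-check system for a prospective codeword supported on $\{0, I_1, \alpha I_1\}$. The only nonzero minor of $I_1$ of positive size is $det_{\{1\},\{1\}}(I_1) = 1$, so the infinitely many parity-check conditions collapse to the two equations $c_0 + c_{I_1} + c_{\alpha I_1} = 0$ and $c_{I_1} + \alpha\, c_{\alpha I_1} = 0$. Solving this $2 \times 2$ linear system gives $c_{\alpha I_1} = c_0/(\alpha - 1)$ and $c_{I_1} = -\alpha c_0/(\alpha - 1)$, which are well-defined and nonzero precisely because $\alpha \in \F_q \setminus \{0,1\}$, matching the stated formulas.
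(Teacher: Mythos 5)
Your proof is correct. The paper cites this proposition from \cite{AfGrass} without reproducing its proof, so there is no in-paper argument to compare against; but your reconstruction is sound. The logical chain — translate by $N_1$ so that $0$ enters the support, extract the parity-check identities at $f=1$ and at each $det_{I,J}$, use the $1\times 1$ minors to force $M_3 = \lambda M_2$, then use the $2\times 2$ minors together with $\lambda\neq 0,1$ to kill all rank $\geq 2$ possibilities, and finally normalize $M_2$ to $I_1$ via $\X\mapsto A\X B$ — is precisely the standard argument for this result, and it is also the same template the present paper later adapts in Lemma~\ref{lem: automorphism q=2} for $\CH^\perp$. The converse computation, reducing to the two nontrivial parity checks $c_0+c_{I_1}+c_{\alpha I_1}=0$ and $c_{I_1}+\alpha c_{\alpha I_1}=0$ and solving the $2\times 2$ system, is also correct, and $q>2$ enters exactly where you need $\alpha\in\F_q\setminus\{0,1\}$ to exist. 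One cosmetic remark: with the convention that the automorphism $\X\mapsto\X+M$ sends $c$ to $c'$ with $c'_P=c_{P+M}$, the shift that puts $0$ into the support is $M=N_1$ rather than $M=-N_1$; your subsequent formulas $M_2=N_2-N_1$, $M_3=N_3-N_1$ are the correct ones, so this is only a sign-convention slip in the prose, not a gap.
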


\begin{proposition}
\cite[Theorem 15]{AfGrass}
Let $\ell \geq 2$. Let $q= 2$ and let $c$ be a codeword of $\CA^\perp$ of weight $4$. Suppose that $supp(c) = \{M_1, M_2, M_3, M_4\}$. Then there exists an automorphism such that we may map $c \rightarrow c'$ where $supp(c')$ is one of the following:
\begin{enumerate}
    \item[i]  $\{ 0, E_{1,1}, E_{1,2}, I_1+E_{1,2}\}$
    \item[ii] $\{ 0, E_{1,1}, E_{2,1}, I_1+E_{2,1}\}$
    \item[iii] $\{ 0, E_{1,1}, E_{1,2}+E_{2,1}, E_{1,1}+E_{1,2}+E_{2,1}\}$
\end{enumerate}

\end{proposition}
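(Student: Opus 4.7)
The plan is to apply successive automorphisms of $\CA$ to bring any weight-$4$ codeword into one of the three listed canonical supports. The relevant automorphisms are the translations $\X \mapsto \X + M$, the left-right action $\X \mapsto P\X Q$ with $P, Q \in GL_\ell(\F_2)$, and the transpose $\X \mapsto \X^T$. Because $q = 2$, every nonzero coordinate of $c$ equals $1$, and the orthogonality relation $\sum_{i=1}^4 f(M_i) = 0$ specialized first to $f = 1$ (which is automatic) and then to the linear forms $f = X_{i,j}$ forces $M_1 + M_2 + M_3 + M_4 = 0$ as matrices over $\F_2$. Applying the translation $\X \mapsto \X - M_1$, I can therefore assume the support has the shape $\{0, A, B, A+B\}$ for nonzero matrices $A, B$ with $A \neq B$ and $A+B \neq 0$.

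Next, I would use the $(P, Q)$-action to place $A$ in rank-normal form, i.e. $A = I_r$ (the matrix with an $r \times r$ identity block in the upper-left corner and zeros elsewhere), where $r = \mathrm{rank}(A)$. The remaining orthogonality constraints come from the higher minors: for every pair of index sets $(I, J)$ with $|I| = |J|$, the codeword condition in characteristic $2$ reduces to the polarization identity
$$ \det_{I,J}(A + B) + \det_{I,J}(A) + \det_{I,J}(B) = 0, $$
which says that the mixed determinant $\mu_{I,J}(A, B)$, a bilinear expression in the entries of $A$ and $B$, must vanish on every index pair. When $A = I_r$, expanding these conditions size-by-size (starting from $|I|=|J|=2$) forces the nonzero entries of $B$ to sit in positions dictated by the first $r$ rows and columns of $A$, thereby collapsing the problem to a low-dimensional enumeration.

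Finally, I would exploit the stabilizer of $A = I_r$ inside $GL_\ell(\F_2) \times GL_\ell(\F_2)$, together with the transpose automorphism, to normalize $B$ up to equivalence. After that normalization, a case split on whether the nonzero pattern of $B$ sits in row $1$ only, in column $1$ only, or in both rows and columns, together with the constraint that the four matrices $0, A, B, A+B$ remain distinct, produces exactly the three listed canonical supports: cases (i) and (ii) are exchanged by the transpose $\X \mapsto \X^T$, while case (iii) is self-transpose. The step I expect to be the main obstacle is the bilinear minor analysis that controls the rank $r$ and the support pattern of $B$: extracting from the whole system $\{\mu_{I,J}(A, B) = 0\}_{I,J}$ the exact structural constraints requires a careful book-keeping of cancellations in characteristic $2$. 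Once those constraints are in hand, the final enumeration of the three normalized supports is essentially mechanical.
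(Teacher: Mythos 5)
The paper does not prove this proposition; it is quoted verbatim from \cite{AfGrass} (their Theorem~15), so there is no in-paper argument against which to compare your proposal.

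As a free-standing sketch, your opening moves are sound. Over $\F_2$ every nonzero coordinate of $c$ equals $1$, so orthogonality of $c$ against the $1\times 1$ minors $X_{i,j}$ does force $M_1+M_2+M_3+M_4=0$, a translation by $M_1$ reduces the support to $\{0,A,B,A+B\}$, and the left-right $GL_\ell(\F_2)\times GL_\ell(\F_2)$-action can put one of the nonzero matrices into the rank-normal form $I_r$. The polarization identity $\det_{I,J}(A)+\det_{I,J}(B)+\det_{I,J}(A+B)=0$ obtained from orthogonality against the remaining minors is also the correct governing constraint, and your observation that cases (i) and (ii) are exchanged by $\X\mapsto\X^T$ while (iii) is self-transpose is accurate.

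The issue is that the step you label ``the main obstacle'' is not a technical loose end to defer; it is the whole theorem. Your sketch does not establish (a) that the element of $\{A,B,A+B\}$ chosen for normalization can always be taken of rank $1$ --- one must first choose the minimum-rank element and then rule out that all three matrices have rank $\ge 2$, which already requires the mixed $2\times 2$ minor constraints; (b) that, once $A=I_1$, the bilinear system $\mu_{I,J}(I_1,B)=0$ actually confines the support of $B$ to row $1$, to column $1$, or to the symmetric pattern $E_{1,2}+E_{2,1}$ after quotienting by the stabilizer of $I_1$ inside $GL_\ell\times GL_\ell$; and (c) conversely, that each of the three listed supports really does carry a weight-$4$ codeword of $\CA^\perp$. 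Until (a)--(c) are carried out, what you have is a plan for a proof rather than a proof, since nothing in the sketch shows that the claimed three orbits are exhaustive or even nonempty.
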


%DOEL cita la caracterizacion de las palabars minimas.
\begin{definition}
Let $C$ be a linear code and $T$ be a set of coordinates in $C$. We define the puncturing of $C$ on $T$ as the resulting linear code $C^T$ from deleting all coordinates in $T$ in each codeword of $C$.
\end{definition}
\begin{definition}
Let $C$ be a linear code, $T$ be a set of coordinates in $C$ and $C(T)$ the set of codewords which are 0 on $T$. We define the shortening of $C$ as the puncturing of $C(T)$ on $T$.
\end{definition}
We remark that these code operations are duals of each other. That is, the dual code of  puncturing $C$ is shortening $C^\perp$.
%%DOEL, Define Puncturing y shortening. Escribe que ambos son duales uno del otro.

With the fact that $\CH$ is a puncturing of $\CA$ on the positions outside of $\HM$, and that $\CH^\perp$ is the linear code obtained by shortening $\CA^\perp$ at the positions outside of $\HM$, we shall determine the minimum distance of $\CH^\perp$. %then focus on its minimum distance and its minimum weight codewords. For this purpose, it will be useful to describe the codewords of $\CH^\perp$ by their support.\\

%First we complete the easy case, that is, determining $d(\CH^\perp)$ for $q > 2.$

\begin{theorem}
Let $\ell \geq 2$. The minimum distance $d(\CH^\perp)$ of the code $\CH^\perp$ satisfies:
$$
d(\CH^\perp)=  \begin{cases} 
      3 & q>2 \\
      4 & q = 2 
   \end{cases}
$$
\end{theorem}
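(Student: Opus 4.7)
The approach is to combine the puncturing/shortening duality with the classification of minimum weight codewords of $\CA^\perp$ recalled just above.

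First I would establish the dual relationship explicitly. Since $\CH$ is obtained from $\CA$ by puncturing on the positions indexed by $\mathbb{M}^{\ell \times \ell}(\F_{q^2}) \setminus \HM$, the dual $\CH^\perp$ is the shortening of $\CA^\perp$ on those same positions: a vector $c \in \F_{q^2}^{\HM}$ lies in $\CH^\perp$ if and only if its zero-extension $\tilde{c}$ to $\F_{q^2}^{\mathbb{M}^{\ell \times \ell}(\F_{q^2})}$ lies in $\CA^\perp$, and this extension preserves weight. Hence every nonzero codeword of $\CH^\perp$ lifts to a codeword of $\CA^\perp$ of the same weight, giving $d(\CH^\perp) \geq d(\CA^\perp)$, which equals $3$ for $q>2$ and $4$ for $q=2$ by the propositions cited.

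For the matching upper bound I would exhibit, in each case, a minimum weight codeword of $\CA^\perp$ whose support already lies in $\HM$. When $q>2$, pick any $\alpha \in \F_q \setminus \{0,1\}$ (which exists because $|\F_q| \geq 3$). The cited classification then supplies a weight $3$ codeword of $\CA^\perp$ with support $\{0,\,I_1,\,\alpha I_1\}$; all three of these matrices are diagonal with entries in $\F_q$, hence Hermitian, so the corresponding shortened codeword of $\CH^\perp$ has weight $3$. When $q=2$, among the three normal forms for weight $4$ codewords of $\CA^\perp$, forms (i) and (ii) involve $E_{1,2}$ or $E_{2,1}$ individually, which are not Hermitian; but form (iii) has support $\{0,\,E_{1,1},\,E_{1,2}+E_{2,1},\,E_{1,1}+E_{1,2}+E_{2,1}\}$, and in characteristic $2$ the matrix $E_{1,2}+E_{2,1}$ is self-conjugate-transpose (since $1^q = 1$ over $\F_4$), so all four support matrices are Hermitian and the codeword descends to a weight $4$ codeword of $\CH^\perp$.

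Combining the two bounds yields the theorem. The only subtle step is the $q=2$ case, where one must identify which of the three orbits of weight four codewords of $\CA^\perp$ contains a representative supported on $\HM$; forms (i) and (ii) do not obviously supply one, but the type (iii) representative does, and since a single example suffices for the upper bound the argument is complete.
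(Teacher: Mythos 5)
Your outline for $q > 2$ is sound and matches the paper. But for $q = 2$ there is a genuine gap in the lower bound, and it is the one the paper itself flags in the discussion preceding the theorem.

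The shortening/puncturing relation connects $\CH$ to $\CA$ \emph{over the field $\F_{q^2}$}, since the coordinates being deleted are indexed by the non-Hermitian matrices in $\mathbb{M}^{\ell\times\ell}(\F_{q^2})$. So the code whose dual is being shortened is the quaternary affine Grassmann code when $q=2$. In the parameter of Proposition~{[Theorem 17, BGT2]}, the relevant field size is $q^2 = 4 > 2$, and therefore $d(\CA^\perp) = 3$, not $4$. Shortening never decreases distance, so all you get from this step is $d(\CH^\perp) \geq 3$ in both cases; the claim that ``$d(\CA^\perp) = 4$ for $q = 2$'' is applying the binary formula to a quaternary code. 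Consequently, in the $q=2$ case you have established the upper bound $d(\CH^\perp) \leq 4$ via the form~(iii) support but not the lower bound $d(\CH^\perp) \geq 4$.

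An extra argument is required, and the paper supplies one: $\CH^\perp$ is $2$-invariant (Proposition~\ref{prop: Sticht 2}), so a minimum-weight codeword may be taken with all nonzero entries equal to $1 \in \F_2$; if its weight were $3$, then its inner product with $\bar{1} = ev(1) \in \CH$ would be $1+1+1 = 1 \neq 0$, contradicting duality. Alternatively one could invoke the classification of weight-$3$ codewords of the quaternary $\CA^\perp$ (support $\{A, A+M, A+\alpha M\}$ with $M$ rank one, $\alpha \in \F_4\setminus\{0,1\}$) and observe that $M$ and $\alpha M$ cannot both be Hermitian when $\alpha \notin \F_2$, so no weight-$3$ support survives the shortening. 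Either route closes the gap, but one of them is needed; the shortening bound alone does not.
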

\begin{proof}
Recall we are puncturing the code $\CA$ at the matrices in $\mathbb{M}^{\ell \times \ell}(\F_{q^2}) \setminus \HM$ to obtain $\CH$. This implies we are shortening the code $\CA^\perp$ to obtain $\CH^\perp$. By \cite{AfGrass}, this implies we have a lower bound$$
d(\CH^\perp)\geq  \begin{cases} 
      3 & q>2 \\
      4 & q = 2 
   \end{cases}.
$$\\

If $q>2$, there exists $c\in \CA^\perp$ such that $supp(c) = \{ 0, E_{1,1}, \alpha E_{1,1} \}$ and $\alpha \in \F_q, \alpha \neq 0,1$. Because all 3 matrices are Hermitian, this implies that when shortening we have a codeword $c' \in \CH^\perp$ such that $supp(c) = supp(c')$. Therefore, for $q>2$, $d(\CH^\perp)= 3$.

If $q=2$, suppose by way of contradiction that $\#supp(c) = 3$.
Let $A,B,C \in supp(c)$. This implies $c_A = c_B = c_C = 1$. This implies $c \cdot \Bar{1} = c_A+c_B+c_C = 1 \neq 0$
Note $\Bar{1} \in \CH$, which contradicts that $c \in \CH^\perp$. Therefore $\#supp(c) \neq 3$.

Note there exists $c\in \CA^\perp$ such that the support set consists of the matrices $supp(c) = \{ 0, E_{1,1}, E_{1,2}+E_{2,1}, E_{1,2}+E_{2,1}+E_{1,1}\}$. % and $\Bar{(E_{1,2} +E_{2,1})} = {(E_{1,2}+E_{2,1})}^T$.
Because all 4 matrices are Hermitian, this implies that when shortening we have a codeword $c' \in \CH^\perp$ such that $supp(c) = supp(c')$. Therefore, for $q=2$, $d(\CH^\perp)= 4$.\\
\end{proof}

As in the case for affine Grassmann codes, we characterize all minimum distance codewords of $\CH^\perp$. The case for $q > 2$ can be done simply by considering the shortening operation. As there are minimum distance codewords of $\CA^\perp$ over $q^2$ whose support is entirely of Hermitian matrices. It is these codewords which are the minimum distance codewords of $\CH^\perp$. However, when $q =2$, the corresponding code $\CA^\perp$ is actually a quaternary code of distance $3$. In this case, shortening from the affine Grassmannian into the affine Hermitian Grassmannian eliminates all codewords of weight $3$ from $\CH^\perp$ Interestingly, despite the code $\CH$ being quaternary, it behaves similar to the binary case for the affine Grassmann codes. Remarkably, we characterize all the minimum distance codewords of $\CH^\perp$ with the same technique used in \cite{AfGrass}.

\begin{lemma}\label{lem: translate q = 2}
Let $q = 2$, $\ell \geq 2$ and $c\in \CH^\perp$ and $supp(c) = \{A, B, C, D\}$. Then there exists an automorphism $\sigma \in Aut(\CH)$ such that $supp(\sigma(c)) = \{ M, N, P, 0\}$.
%$supp(\sigma(c)) = \{ 0, E_{1,1}, E_{1,2}+E_{2,1}, E_{1,1}+E_{1,2}+E_{2,1}\}$.
\end{lemma}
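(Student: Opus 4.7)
The plan is to apply a sequence of automorphisms from $Aut(\CH)$ to bring the four-element support into the claimed canonical form $\{0, M, N, P\}$, where $M$, $N$, $P$ are the three nonzero Hermitian matrices singled out in case (iii) of the $\CA^\perp$ classification recalled earlier in this section. The strategy directly imports the $q=2$ analysis of \cite{AfGrass} to the Hermitian setting, exploiting that $\CH$ is a puncturing of $\CA$ at the non-Hermitian coordinates, so $\CH^\perp$ is a shortening of $\CA^\perp$ at those same coordinates; hence any weight-$4$ codeword of $\CH^\perp$ lifts to a weight-$4$ codeword of $\CA^\perp$ whose entire support lies in $\HM$.

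First, I would translate to put $0$ into the support. By the automorphism lemma of this paper, the map $\sigma_A : H \mapsto H - A$ lies in $Aut(\CH)$, and applying it to $c$ produces a codeword $\sigma_A(c) \in \CH^\perp$ whose support is $\{0,\, B-A,\, C-A,\, D-A\}$, with every element still Hermitian since $\HM$ is closed under translation by elements of $\HM$.

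Second, I would exploit the orthogonality $\sigma_A(c) \cdot ev(f) = 0$ for every $f \in \Fl$, specializing to the constant function, to the Hermitian linear forms $X_{i,i}$ and $\alpha X_{i,j} + \alpha^q X_{j,i}$, and to the $2 \times 2$ principal and non-principal minors. The resulting linear and quadratic system is precisely the Hermitian restriction of the system analyzed in \cite[Theorem 15]{AfGrass}. That theorem shows, after translating one support element to $0$, that the remaining three support matrices must form one of the three configurations (i), (ii), (iii) of the preceding proposition. Configurations (i) and (ii) both contain $E_{1,2}$ or $E_{2,1}$ as a single support element, and these matrices are \emph{not} Hermitian over $\F_4$; therefore they cannot be the support of a codeword of $\CH^\perp$, leaving (iii) as the only possibility. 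Thus the support is forced to equal $\{0,\, E_{1,1},\, E_{1,2}+E_{2,1},\, E_{1,1}+E_{1,2}+E_{2,1}\}$ after a possible further translation and relabeling, yielding the desired $\{M, N, P, 0\}$.

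Third, I would use the conjugation automorphism $H \mapsto A^{(q)T} H A$ for a suitable $A \in GL_\ell(\F_4)$ to rigidify the support to exactly $\{0, M, N, P\}$ with the canonical choice $M = E_{1,1}$, $N = E_{1,2}+E_{2,1}$, $P = E_{1,1}+E_{1,2}+E_{2,1}$; this is legitimate because Proposition \ref{prop: invertible hermite} and its orbit-stabilizer argument show that $GL_\ell(\F_{q^2})$ acts transitively on rank-$1$ Hermitian matrices, so the unique rank-$1$ support element may be sent to $M$, and the remaining stabilizer can then be used to normalize the other two support elements. The main obstacle is step two: carefully importing the classification of \cite[Theorem 15]{AfGrass} and checking that the Hermitian restriction of its constraint system really collapses to configuration (iii), i.e.\ that no new Hermitian configuration appears that was not already on the $\CA^\perp$ list.
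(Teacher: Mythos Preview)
You have misread the statement. In this lemma, $M$, $N$, $P$ are not the specific canonical matrices $E_{1,1}$, $E_{1,2}+E_{2,1}$, $E_{1,1}+E_{1,2}+E_{2,1}$; they are just fresh names for whatever the three nonzero support elements become after one translation. The paper's proof is exactly your first step and nothing more: apply the translation automorphism $H\mapsto H+D$ (equivalently $f(\X)\mapsto f(\X+D)$), set $M=A+D$, $N=B+D$, $P=C+D$, and observe that the support is now $\{M,N,P,0\}$. That is the entire content of the lemma; the normalization to the specific configuration $\{0,I_1,E_{1,2}+E_{2,1},I_1+E_{1,2}+E_{2,1}\}$ is carried out separately in the next three lemmas.

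Your steps two and three are therefore unnecessary here, and as written step two also has a genuine gap. You invoke \cite[Theorem 15]{AfGrass}, which classifies weight-$4$ codewords of $\CA^\perp$ for $q=2$, i.e.\ for the \emph{binary} affine Grassmann code. But $\CH$ for $q=2$ is a puncturing of $\CA$ over $\F_{4}$, so $\CH^\perp$ is a shortening of the \emph{quaternary} $\CA^\perp$, which has minimum distance $3$, not $4$; the weight-$4$ classification you import is for the wrong field and does not apply directly. This is precisely why the paper redoes the argument from scratch in the subsequent lemmas rather than quoting \cite{AfGrass}. For the present lemma, though, simply stop after your translation step.
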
 
\begin{proof}
Let $A,B,C,D\in supp(c)$. %As $\CH$ is a $2$--invariant code, its minimum distance codewords are multiples of binary vectors. Thus, without loss of generality, we may assume $c_A = c_B = c_C = c_D = 1$. 
Now we translate the support by $D$ by applying the automorphism $f(X) \mapsto f(X+D)$. Then, let $M = A+D$, $N = B+D$ and $P = C+D$, $supp(\sigma(c)) = \{ M, N, P, 0\}$ %Without loss of generality we assume $supp(c) = \{ A, B, C, 0 \}$.
\end{proof}

\begin{lemma}\label{lem: automorphism q=2}
Let $q = 2$, $\ell \geq 2$ and $c\in \CH^\perp$ and $supp(c) = \{A, B, C, 0\}$. Then there exists an automorphism $\sigma \in Aut(\CH)$ such that $supp(\sigma(c)) = \{ 0, I_{k}, E_{1,2}+E_{2,1}, I_{k}+E_{1,2}+E_{2,1}\}$ where $k = min\{ rank(A), rank(B), rank(C) \}$.
%$supp(\sigma(c)) = \{ 0, E_{1,1}, E_{1,2}+E_{2,1}, E_{1,1}+E_{1,2}+E_{2,1}\}$.
\end{lemma}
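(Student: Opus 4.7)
The plan is to normalize the support by a sequence of automorphisms, reducing it to the claimed canonical form. First, relabel $A, B, C$ so that $\mathrm{rank}(A) = k$ is the minimum. By the transitivity of the congruence action $X \mapsto P^{(q)T} X P$ on Hermitian matrices of fixed rank (the one-orbit fact underlying Proposition~\ref{prop: invertible hermite}), there exists $P \in GL_\ell(\F_{q^2})$ with $P^{(q)T} A P = I_k$; apply this automorphism and denote the images of $B, C$ by $B', C'$.

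Next, extract the linear constraints from $c \perp ev(f)$. Taking $f \equiv 1$ gives $c_0 + c_A + c_B + c_C = 0$, and taking $f = X_{i,j}$ for each $(i,j)$ yields the matrix identity $c_A I_k + c_B B' + c_C C' = 0$. Writing $\mu = c_A/c_B$ and $\lambda = c_C/c_B$ this reads $B' = \mu I_k + \lambda C'$, and imposing $(B')^{*} = B'$ gives $(\mu + \mu^q) I_k + (\lambda + \lambda^q) C' = 0$. Since $\mu + \mu^q, \lambda + \lambda^q \in \F_q$, if $\lambda + \lambda^q \neq 0$ then $C'$ is forced to be a scalar multiple of $I_k$, contradicting distinctness in $\mathrm{supp}(c)$; hence $\lambda, \mu \in \F_2$, and being nonzero both equal $1$. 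Therefore $B' = I_k + C'$, $c_A = c_B = c_C = c_0$, and after rescaling we take this common value to be $1$.

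The key step, and the principal obstacle, is to show that $k = 1$. Using $B' = I_k + C'$ in characteristic $2$, the orthogonality relation for a $2 \times 2$ minor collapses to $M_{11} N_{22} + M_{22} N_{11} + M_{12} N_{21} + M_{21} N_{12} = 0$, where $M, N$ are the corresponding $2 \times 2$ submatrices of $I_k$ and $C'$. Choosing rows $\{s, i\}$ and columns $\{s, j\}$ with $s \leq k$, $s \notin \{i, j\}$ and $i \neq j$ makes $M = \bigl[\begin{smallmatrix}1 & 0\\ 0 & 0\end{smallmatrix}\bigr]$, so $C'_{i,j} = 0$; varying $s, i, j$ kills every off-diagonal entry of $C'$ whenever $k \geq 3$. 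Analogous principal submatrices force $C'_{i,i} = 0$ for $i > k$ and equate all $C'_{i,i}$ for $i \leq k$. Hence $k \geq 3$ forces $C' \in \{0, I_k\}$, a contradiction. For $k = 2$, a principal $3 \times 3$ minor (or a direct determinant calculation when $\ell = 2$) forces $C' = b E_{1,2} + b^q E_{2,1}$ with $b^{q+1} = 1$; but then $\det\bigl((I_2 + C')|_{\{1,2\} \times \{1,2\}}\bigr) = 1 - b^{q+1} = 0$, so $\mathrm{rank}(B') = 1 < k$, contradicting the minimality of $\mathrm{rank}(A)$.

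Hence $k = 1$. The remaining constraints force $C' = a E_{1,1} + \sum_{j \geq 2}(c_j E_{1,j} + c_j^q E_{j,1})$ with $a \in \F_2$ and $(c_2, \ldots, c_\ell) \neq 0$. A final congruence $P = \mathrm{diag}(1, Q)$ with $Q \in GL_{\ell-1}(\F_{q^2})$ satisfying $Q^T (c_2, \ldots, c_\ell)^T = e_1$ fixes $I_1$ and sends $C'$ to $a E_{1,1} + E_{1,2} + E_{2,1}$. If $a = 0$ we are done; if $a = 1$, then $B' = I_1 + C' = E_{1,2} + E_{2,1}$ and the unordered support is still $\{0, I_1, E_{1,2} + E_{2,1}, I_1 + E_{1,2} + E_{2,1}\}$, completing the normalization.
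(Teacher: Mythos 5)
Your proof takes a genuinely different route from the paper and proves more than the lemma asks. The paper derives $c_A = c_B = c_C = c_0 = 1$ directly from $q$--invariance (Proposition~\ref{prop: Sticht 2}), gets $C = A + B$ from $1\times1$ minors, applies the congruence to make $A' = I_k$, and then kills the entries $B'_{i,j}$ with $i,j \geq 2$ using only the minors $\det_{\{1,i\},\{1,j\}}$ before normalizing $B'$ to $E_{1,2}+E_{2,1}$; crucially it does \emph{not} determine $k$ here, deferring that entirely to Lemma~\ref{lem: k =1}. You instead derive the equality of coordinates from the Hermitianity constraint $(\mu+\mu^q)I_k + (\lambda+\lambda^q)C' = 0$ (a nice elementary alternative), work with $C'$ rather than $B'$, use the broader family of minors $\det_{\{s,i\},\{s,j\}}$, and prove $k=1$ inside this lemma. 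Both approaches end at the same canonical form.

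However, there is a concrete gap in your $k=2$ case. You assert that a principal $3\times 3$ minor (or a direct determinant when $\ell=2$) forces $C' = bE_{1,2}+b^qE_{2,1}$, i.e.\ zero diagonal. This is not true. The $2\times 2$ minor constraints leave $C'_{1,1} = C'_{2,2} = c \in \F_2$ \emph{undetermined}, and the $3\times 3$ minors give no additional relations (every $3\times 3$ submatrix of $I_2$, $C'$, $B'$ has a zero row or column since all three matrices vanish outside the top-left $2\times 2$ block, so all such determinants are $0$ and the relation is vacuous; for $\ell=2$ one checks directly that the full-determinant orthogonality $f(I_2)+f(B')+f(C')=0$ holds identically in char $2$). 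So $c=1$ is a legitimate possibility that your argument does not address. The conclusion you want is still true, but requires a short case split: with $b\neq 0$ (forced since otherwise $C' \in \{0, I_2\}$) and hence $b^{q+1}=1$, the $2\times2$ principal minor of $C'$ equals $c + b^{q+1} = c+1$ and that of $B'$ equals $(1+c)+1 = c$, so exactly one of $\mathrm{rank}(B'), \mathrm{rank}(C')$ drops below $2$ regardless of $c$, giving the desired contradiction with the minimality of $\mathrm{rank}(A)$. With this fix your argument is complete.
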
 
\begin{proof} Let $A,B,C,0\in supp(c)$. As $\CH$ is a $2$--invariant code, its minimum distance codewords are multiples of binary vectors. Thus, without loss of generality, we may assume $c_A = c_B = c_C = c_0 = 1$. Without loss of generality, we may assume $A$ is of minimum nonzero rank. %Now we translate the support by $D$ by applying the automorphism $f(X) \mapsto f(X+D)$.  Without loss of generality we assume $supp(c) = \{ A, B, C, 0 \}$. % where $A = A_1-D_1$ $B = B_1-D_1$ and $C = C_1-D_1$.

Note that for $f \in \Fl$, $c \cdot ev(f) = 0$ implies $A^{I,J}+B^{I,J}+C^{I,J}+0^{I,J} = 0$ for all minors $det_{I,J}(\X)$. As this also holds for all $1 \times 1$ minors we obtain $C= A+B$. 

Without loss of generality, we may assume $C_{1,1} = 1$ with $A_{1,1} = 1$ and $B_{1,1} = 0$. %As $A$ has its $(1,1)$ entry equal to $1$, we may use column operations on the right to eliminate all other entries in the first row. Likewise, the conjugate operations on the left, may be used to eliminate all other entries in the first column. We may assume that $A_{1,i} = A_{i,1} = 0$ for $i>1$.
Because $A$ is Hermitian, by \cite{fulton_1977}, there exists $P \in GL_{\ell}(\F_{q^2})$ such that ${{P^{(q)}}^T}AP = I_k$. By applying this automorphism $\sigma_1$, $supp(\sigma_1(c)) = \{ A', B', C', 0\}$ where $C'_{1,1} = 1$ with $A'_{1,1} = 1$ and $B'_{1,1} = 0$. Note $rnk(A) \geq 1$ implies $A_{1,i} = A_{i,1} = 0$ for $i>1$.

Now that we have established that $A$ has a special form, we will use the condition $c \cdot ev(f) = 0$ to determine the entries of $A$  and $B$. 

First let $f = det(\{1,i\},\{1,j\})$ with $i\geq 2$ and $j \geq 2$.
Note:
$$f(A') = A'_{1,1}A'_{i,j}-A'_{1,j}A'_{i,1} = A'_{i,j},$$
$$f(B') = B'_{1,1}B'_{i,j}-B'_{1,j}B_{i,1} = B'_{1,j}B'_{i,1},$$
$$f(C') = C'_{1,1}C'_{i,j}-C'_{1,j}C_{i,1} = (A'_{i,j}+B'_{i,j})+(B'_{1,j}B'_{i,1})$$

%$$f(A) = A_{1,1}A_{i,j}-A_{1,j}A_{i,1} = A_{i,j},
%f(B) = B_{1,1}B_{i,j}-B_{1,j}B_{i,1} = B_{1,j}B_{i,1}, f(C) = C_{1,1}C_{i,j}-C_{1,j}C_{i,1} = (A_{i,j})+(B_{1,j}B_{i,1})$$\\

Then $c \cdot ev(f) = 0$ implies $$A'_{i,j}+B'_{1,j}B'_{i,1}+(A'_{i,j}+B'_{i,j})+(B'_{1,j}B'_{i,1}) = 0.$$
This implies $B'_{i,j} = 0$ for all $i,j$ such that $i\geq 2$ or $j\geq 2$.

%In a similar manner, taking $f = det(\{1,2\},\{1,2\})$, we conclude $B_{2,2} = 0$.\\

Because $B'\neq 0$, there must be a nonzero entry in row or column 1. Without loss of generality after permuting rows or columns we may assume this is $B'_{1,2}$. Recall that because $B'$ is Hermitian, $B'_{2,1} = {B'_{1,2}}^2 \neq 0.$ Without loss of generality we may assume $B'_{1,2} = B'_{2,1} = 1$. This is because $x^3 = 1$ for all nonzero values in $\F_4$. Thus, automorphisms defined by multiplying a nonzero scalar to a row and its conjugate to a column do not affect $I_k$. %As in the case with the matrix $A$, 
Then, through proper elementary row operations, we may use the entry in the second column of $B'$, $B'_{1,2} = 1$ to eliminate all other entries in row $1$ and column $1$. As $A'_{1,2} = A'_{2,1} = 0$ these operations do not change $A'$. Thus we may assume $B' = E_{1,2}+E_{2,1}$. %Then, let $rnk(A) = k$ we may perform elementary row operations to reduce $A$ to $I_k$ without altering $B$.
Therefore $supp(\sigma(c)) = \{ 0, I_{k}, E_{1,2}+E_{2,1}, I_{k}+E_{1,2}+E_{2,1}\}$. %Therefore, $supp(c') = \{ 0, E_{1,1}, E_{1,2}+E_{2,1}, E_{1,1}+E_{1,2}+E_{2,1}\}.$ 
\end{proof}
We remark that since all automorphism were from multiplying by invertible matrices, then $k = min\{ rnk(I_k), rnk(E_{1,2}+E_{2,1}), rnk(I_{k}+E_{1,2}+E_{2,1})\}$. Moreover, this implies $k\leq 2 = rnk(E_{1,2}+E_{2,1})$

\begin{lemma}\label{lem: k =1}
Let $q = 2$, $\ell \geq 2$ and $c\in \CH^\perp$ and $$supp(c) = \{0, I_{k}, E_{1,2}+E_{2,1}, I_{k}+E_{1,2}+E_{2,1}\}$$ where $$k = min\{rnk(I_{k}), rnk(E_{1,2}+E_{2,1}), rnk(I_{k}+E_{1,2}+E_{2,1}) \}.$$ Then $k = 1$.
%$supp(\sigma(c)) = \{ 0, E_{1,1}, E_{1,2}+E_{2,1}, E_{1,1}+E_{1,2}+E_{2,1}\}$.
\end{lemma}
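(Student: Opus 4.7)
The plan is to rule out the only value of $k$ other than $1$ that is a priori possible, by computing the three ranks explicitly and invoking the minimality of $k$.

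First I would record the a priori bounds on $k$. By the remark preceding the lemma, $k \leq \mathrm{rnk}(E_{1,2}+E_{2,1}) = 2$, so $k \in \{1,2\}$ (the value $k = 0$ is excluded because $I_0 = 0$ would make the four elements of $\mathrm{supp}(c)$ collapse into only two distinct matrices, contradicting $\#\mathrm{supp}(c) = 4$ from Lemma~\ref{lem: automorphism q=2}). So the whole task reduces to showing $k \neq 2$.

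Next I would suppose for contradiction that $k = 2$ and compute $\mathrm{rnk}(I_2 + E_{1,2} + E_{2,1})$ over $\F_{q^2} = \F_4$. The matrix $I_2 + E_{1,2} + E_{2,1}$ has its top-left $2 \times 2$ block equal to $\bigl(\begin{smallmatrix}1 & 1\\ 1 & 1\end{smallmatrix}\bigr)$ and all other entries zero. In characteristic $2$, the top two rows coincide (both equal $(1,1,0,\ldots,0)$) and every row beyond the second is zero, so this matrix has rank exactly $1$.

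Finally I would invoke the minimality clause in the statement: $k = \min\{\mathrm{rnk}(I_k),\,\mathrm{rnk}(E_{1,2}+E_{2,1}),\,\mathrm{rnk}(I_k+E_{1,2}+E_{2,1})\}$. If $k = 2$, then this minimum is $\min\{2,2,1\} = 1 \neq 2$, a contradiction. Hence $k = 1$. I do not expect any serious obstacle here — the entire argument rests on the single characteristic-$2$ computation that $I_2 + E_{1,2} + E_{2,1}$ drops rank, which is what forces $k$ down to $1$.
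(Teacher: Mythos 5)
Your proof is correct and follows essentially the same route as the paper: bound $k\leq 2$ by the preceding remark, assume $k=2$, observe that $I_2+E_{1,2}+E_{2,1}$ has rank $1$ in characteristic $2$, and derive a contradiction with the minimality of $k$. You add two small pieces the paper leaves implicit — the explicit display of the rank-$1$ matrix and the observation that $k=0$ is impossible because $I_0=0$ would collapse the support — but these are clarifications, not a different argument.
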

\begin{proof}
Let $c$ be as stated above. Recall $k\leq 2$, thus we need only show $k\neq 2$.  %Then by the previous Lemma, this implies there exists $\sigma_1 \in Aut(\CH)$ such that $supp(\sigma_1(c)) = \{ 0, I_{k}, E_{1,2}+E_{2,1}, I_{k}+E_{1,2}+E_{2,1}\}$ where k is the rank of $A$. \\
Let's assume that $k= 2$. This implies that $I_{k}+E_{1,2}+E_{2,1} = I_{2}+E_{1,2}+E_{2,1}$. This implies $rnk(I_{k}+E_{1,2}+E_{2,1}) = 1$, which contradicts the previous remark that $k = min\{ rnk(I_k), rnk(E_{1,2}+E_{2,1}), rnk(I_{k}+E_{1,2}+E_{2,1})\}$. Therefore, $k = 1$.
\end{proof}

\begin{lemma}\label{lem: supp char}
Let $q = 2$, $\ell \geq 2$ and $c\in \CH^\perp$ and $supp(c) = \{A, B, C, D\}$. Then there exists an automorphism $\sigma \in Aut(\CH)$ such that $supp(\sigma(c)) = \{ 0, I_{1}, E_{1,2}+E_{2,1}, I_{1}+E_{1,2}+E_{2,1}\}$.
%$supp(\sigma(c)) = \{ 0, E_{1,1}, E_{1,2}+E_{2,1}, E_{1,1}+E_{1,2}+E_{2,1}\}$.
\end{lemma}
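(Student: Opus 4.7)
The proof plan is essentially to chain together the three preceding lemmas, which have already been set up precisely for this conclusion. The work of reducing an arbitrary four-element support to the canonical form has been done piecewise: translation, change of basis via unitary congruence, and a rank argument. All that remains is to assemble these reductions in order.

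First, I would apply Lemma \ref{lem: translate q = 2} to the codeword $c$ with $\mathrm{supp}(c) = \{A, B, C, D\}$. This produces an automorphism $\sigma_1 \in \mathrm{Aut}(\CH)$ of the form $f(\X) \mapsto f(\X + D)$, so that $\mathrm{supp}(\sigma_1(c)) = \{M, N, P, 0\}$ for some Hermitian matrices $M, N, P$. This puts us in the hypothesis of Lemma \ref{lem: automorphism q=2}, since $0$ is now in the support.

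Next, I would apply Lemma \ref{lem: automorphism q=2} to $\sigma_1(c)$. This produces a second automorphism $\sigma_2 \in \mathrm{Aut}(\CH)$ such that
$$\mathrm{supp}(\sigma_2 \sigma_1 (c)) = \{0, I_k, E_{1,2} + E_{2,1}, I_k + E_{1,2} + E_{2,1}\},$$
where $k = \min\{\mathrm{rank}(M), \mathrm{rank}(N), \mathrm{rank}(P)\}$.

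Finally, Lemma \ref{lem: k =1} forces $k = 1$, since the three nonzero support elements of the canonical form satisfy $k = \min\{\mathrm{rank}(I_k), \mathrm{rank}(E_{1,2}+E_{2,1}), \mathrm{rank}(I_k + E_{1,2} + E_{2,1})\}$, and the argument there shows $k = 2$ leads to a contradiction. Setting $\sigma = \sigma_2 \circ \sigma_1$, we obtain $\mathrm{supp}(\sigma(c)) = \{0, I_1, E_{1,2}+E_{2,1}, I_1 + E_{1,2} + E_{2,1}\}$, as desired. There is no real obstacle here; the lemma is a clean corollary of the preceding structural results, and the only care needed is to verify that the composition of the two automorphisms used is itself an automorphism of $\CH$, which is immediate because $\mathrm{Aut}(\CH)$ is a group.
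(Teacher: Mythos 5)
Your proposal follows exactly the same route as the paper's own proof: translate the support so that $0 \in \mathrm{supp}$ (Lemma \ref{lem: translate q = 2}), normalize via congruence and row/column operations to the canonical set with $I_k$ (Lemma \ref{lem: automorphism q=2}), and then invoke Lemma \ref{lem: k =1} to force $k=1$. The only difference is that you make the composition of the two automorphisms explicit, which the paper leaves implicit; the content is the same.
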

\begin{proof}
Let $c$ be as stated above, by Lemma \ref{lem: translate q = 2} and Lemma \ref{lem: automorphism q=2} , there exists an automorphism $\sigma \in Aut(\CH)$ such that $supp(\sigma(c)) = \{ 0, I_{k}, E_{1,2}+E_{2,1}, I_{k}+E_{1,2}+E_{2,1}\}$. By Lemma \ref{lem: k =1}, $k = 1$. Therefore, $supp(\sigma(c)) = \{ 0, I_{1}, E_{1,2}+E_{2,1}, I_{1}+E_{1,2}+E_{2,1}\}$
\end{proof}

%As a direct consequence, we obtain the following:
%\begin{corollary}
%If $q = 2$ then $d(\CH^\perp) = 4$.

%\end{corollary}
\begin{lemma}
Let $q = 2$. The support of a minimum weight codeword of $\CH^\perp$ is contained in the class:
$$\{ H, H+{a_1^*a_1}, H+{a_2^*a_1+a_1^*a_2}, H+{a_1^*a_1}+{a_2^*a_1+a_1^*a_2} \}$$
Where $H \in \HM$, meanwhile $a_1, a_2$ are linearly independent vectors $\in {\F_{4}}^{\ell}$.
\end{lemma}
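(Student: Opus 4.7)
The plan is to bootstrap from the canonical form established in Lemma \ref{lem: supp char} by pushing it back through the generating automorphisms of $\CH$. Let $c$ be a minimum weight codeword of $\CH^\perp$, so $c$ has weight $4$ by the previous theorem. Lemma \ref{lem: supp char} produces $\sigma \in Aut(\CH)$ with
$$supp(\sigma(c)) = \{0,\, I_1,\, E_{1,2}+E_{2,1},\, I_1 + E_{1,2}+E_{2,1}\},$$
and I would recover $supp(c)$ by applying $\sigma^{-1}$.

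First I would identify this canonical set as an instance of the claimed form: taking $H = 0$, $a_1 = e_1$, $a_2 = e_2$ (the first two standard basis row vectors of $\F_{q^2}^\ell$), we have $e_1^* e_1 = I_1$ and $e_2^* e_1 + e_1^* e_2 = E_{2,1}+E_{1,2}$, with $e_1, e_2$ linearly independent. Next I would verify that each of the three generators of $Aut(\CH)$ listed in the automorphism lemma preserves a set of the shape
$$\{H,\, H+a_1^*a_1,\, H+a_2^*a_1+a_1^*a_2,\, H+a_1^*a_1+a_2^*a_1+a_1^*a_2\}.$$
For the translation $X \mapsto X+M$, every element of the support shifts by $M$, giving a new translate $H + M \in \HM$ with $a_1, a_2$ unchanged. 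For a congruence $X \mapsto A^{(q)T} X A$ with $A \in GL_\ell(\F_{q^2})$, the identity $A^{(q)T}(v^* w)A = (vA)^*(wA)$ (which follows from $(A^{(q)T})^* = A$) shows that $a_1 \mapsto a_1 A$, $a_2 \mapsto a_2 A$, and $H \mapsto A^{(q)T}HA \in \HM$; since $A$ is invertible, linear independence of $a_1 A, a_2 A$ is preserved. For the transpose $X \mapsto X^T$, the identity $(v^* w)^T = \bar w^* \bar v$ shows that $a_1, a_2$ get replaced by their entrywise conjugates (which are still linearly independent, since conjugation is an $\F_q$-linear bijection of $\F_{q^2}^\ell$), and $H^T \in \HM$ since $H$ is Hermitian.

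Since $\sigma^{-1}$ is a composition of the three generators above, stability of the form under each generator implies (by induction on the word length) that $supp(c) = \sigma^{-1}(supp(\sigma(c)))$ is again of the claimed form. The main obstacle is really just the rank-one congruence identity $A^{(q)T}(v^* w)A = (vA)^*(wA)$; once that is in hand, the translation and transpose cases require only brief checks, and the conclusion is a bookkeeping exercise tracking how each automorphism acts on $H$, $a_1$, and $a_2$.
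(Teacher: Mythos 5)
Your proposal is correct and follows essentially the same route as the paper: both start from the canonical support $\{0, I_1, E_{1,2}+E_{2,1}, I_1+E_{1,2}+E_{2,1}\}$ provided by Lemma \ref{lem: supp char} and transport it back by the inverse automorphism, using the identity $A^{(q)T}(v^*w)A = (vA)^*(wA)$ for the congruence part and adding $H$ for the translation part. The only cosmetic difference is that you verify invariance of the claimed shape under all three generator types (including transpose), whereas the paper simply applies the specific congruence $M \mapsto A^*MA$ with $e_iA = a_i$ followed by translation by $H$, which suffices because the automorphism of Lemma \ref{lem: supp char} is built only from translations and congruences.
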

\begin{proof}
By Lemma \ref{lem: supp char}, we may apply an automorphism of $\CH^\perp$ to map any of the possible support sets $\{A,B,C,D\} = supp(c)$ to a codeword with support $\{ 0, I_{1}, E_{1,2}+E_{2,1}, I_{1}+E_{1,2}+E_{2,1}\}$.
By acting on $\{ 0, I_{1}, E_{1,2}+E_{2,1}, I_{1}+E_{1,2}+E_{2,1}\}$ with an invertible matrix $A$ satisfying $e_1 A = a_1, e_2A = a_2$ via the action $M \mapsto A^*MA$, we map the set $\{ 0, I_{1}, E_{1,2}+E_{2,1}, I_{1}+E_{1,2}+E_{2,1}\}$ to the set $\{A^*0A, A^*I_1A, A*(E_{1,2}+E_{2,1})A, A*(I_{1}+E_{1,2}+E_{2,1})A \}$ Which is equal to the set $\{ 0, {a_1^*a_1}, {a_2^*a_1+a_1^*a_2}, {a_1^*a_1}+{a_2^*a_1+a_1^*a_2} \}$ where $a_1$ is the first row $A$ and $a_2$ is the second row of $A$. Similarly, $a_1^*$ is the first column $A^*$ and $a_2^*$ is the second column of $A^*$. Finally, we act by adding $H \in \HM$, thus obtaining
$$\{ H, H+{a_1^*a_1}, H+{a_2^*a_1+a_1^*a_2}, H+{a_1^*a_1}+{a_2^*a_1+a_1^*a_2} \}$$
\end{proof}

\section{Conclusion}

In this paper we have introduced the affine Hermitian Grassmann codes. These are linear codes associated to the affine part of the polar Hermitian Grassmannian, defined in the same way affine Grassmann codes are defined from the Grassmannian. As might be expected, the affine Hermitian Grassmann code is very similar to the affine Grassmann code. The affine Hermitian Grassmann code for $q$ is defined over the field $\F_{q^2}$ and obtained from puncturing $\CA$ as a code over $\F_{q^2}$. However, $\CH$ is a $q$--invariant code. This implies $\CH$ has a basis over $\F_q$. Furthermore it has the same dimension and better minimum distance than the corresponding affine Grassmann code over $\F_q$. The similarities are deeper, as the dual code $\CA^\perp$ is similar to $\CA^\perp$. Even the support of the minimum weight codewords are nearly identical in a case by case basis. We hope this remarkable similarity holds for other codes related to the Grassmannian.

\section{Acknowledgments}
The authors are very thankful to Sudhir Ghorpade for his insightful comments on the automorphism group and his suggestions which have greatly improved this manuscript. This research is supported by NSF-DMS REU 1852171: REU Site: Combinatorics, Probability, and Algebraic Coding Theory and NSF-HRD 2008186: Louis Stokes STEM Pathways and Research Alliance: Puerto Rico-LSAMP - Expanding Opportunities for Underrepresented College Students (2020-2025)

\begin{comment}
\nocite{ellis-monaghan_tutte_nodate}
\nocite{ellis-monaghan_distance_2007}
\nocite{eubanks-turner_interlace_2018}
\nocite{arratia_interlace_2000}
\nocite{danielsen_graphs_2013}
\nocite{danielsen_interlace_2010}
\nocite{danielsen_edge_2007}
\nocite{morse_interlace_nodate}
\nocite{ballister_alternating_nodate}
\nocite{schlingemann_quantum_2001}
\nocite{grassl_graphs_2002}
\nocite{aigner_course_2007}
\nocite{bollobas_evaluations_2002}
\nocite{ellis-monaghan_new_1998}
\nocite{hobbs_william_nodate}
\nocite{ellis-monaghan_deletion-contraction_2019}
\nocite{ellis-monaghan_combinatorial_2019}
\nocite{las_vergnas_polynome_1983}
\nocite{martin_enumerations_nodate}
\nocite{aigner_interlace_2002}
\nocite{goodall_tutte_2014}
\nocite{austin_circuit_2007}
\nocite{brijder_interlace_2014}
\nocite{10.1145/2422436.2422494}
\nocite{lopez2020hermitianlifted}
\end{comment}
\nocite{Sticht}
\nocite{BGT}
\nocite{AfGrass}

\bibliographystyle{plain}
\bibliography{references}
%\bibliography{final_graph_poly_bib}

%\bibliographystyle{unsrt}  
%\bibliography{references}  %%% Remove comment to use the external .bib file (using bibtex).
%%% and comment out the ``thebibliography'' section.

%%% Comment out this section when you \bibliography{references} is ena

\end{document}